\documentclass{article}

\usepackage{amsmath}
\usepackage{amsthm}

\usepackage{newtxtext}
\usepackage[subscriptcorrection]{newtxmath}

\usepackage{natbib}

\usepackage{mathtools}
\usepackage{tabularx}
\usepackage{boondox-calo}
\usepackage{nicematrix}
\usepackage{tikz}
\usetikzlibrary{positioning, arrows.meta, calc}
\usepackage{xcolor}
\usepackage{blkarray}
\usepackage{enumitem}
\usepackage{booktabs}  
\usepackage{multirow}  
\usepackage{graphicx}
\usepackage{caption}

\graphicspath{{./art/}}

\usepackage[plain,noend]{algorithm2e}

\makeatletter
\renewcommand{\algocf@captiontext}[2]{#1\algocf@typo. \AlCapFnt{}#2} % text of caption
\def\@algocf@capt@plain{top}
\renewcommand{\algocf@makecaption}[2]{%
	\addtolength{\hsize}{\algomargin}%
	\sbox\@tempboxa{\algocf@captiontext{#1}{#2}}%
	\ifdim\wd\@tempboxa >\hsize%     % if caption is longer than a line
	\hskip .5\algomargin%
	\parbox[t]{\hsize}{\algocf@captiontext{#1}{#2}}% then caption is not centered
	\else%
	\global\@minipagefalse%
	\hbox to\hsize{\box\@tempboxa}% else caption is centered
	\fi%
	\addtolength{\hsize}{-\algomargin}%
}
\makeatother

\def\cl{\mathcal{l}}
\def\cL{\mathcal{L}}
\DeclareMathOperator{\Tr}{Trace}

\usepackage{arxiv}

\usepackage[utf8]{inputenc} % allow utf-8 input
\usepackage[T1]{fontenc}    % use 8-bit T1 fonts
\usepackage{url}            % simple URL typesetting
\usepackage{amsfonts}       % blackboard math symbols
\usepackage{nicefrac}       % compact symbols for 1/2, etc.
\usepackage{microtype}      % microtypography
\usepackage{lipsum}
\graphicspath{ {./images/} }

\newtheorem{theorem}{Theorem}
\newtheorem{definition}{Definition}
\newtheorem{lemma}{Lemma}
\newtheorem{remark}{Remark}
\newtheorem{assumption}{Assumption}
\newtheorem{corollary}{Corollary}
\usepackage{hyperref}
\hypersetup{colorlinks=true}

\title{Hierarchical Clustering of Networks via Hierarchical Distance Matrices}

\author{
	Li Chen \\
	School of Mathematics, Southwest Minzu University, \\ 168 Wenxing Section of Dajian Road, Shuangliu District, Chengdu 610225, China
	\texttt{lchen@swun.edu.cn}\\	
	\And
	Nathaniel Josephs \\
	Department of Statistics, North Carolina State University, \\Raleigh, NC 27695, USA
	\texttt{nathaniel.josephs@ncsu.edu}\\
	\And
	Eric D. Kolaczyk \\
	Department of Mathematics and Statistics, McGill University, \\ 805 Sherbrooke Street West, Montr\'{e}al, QC H3A 0G4, Canada. \texttt{eric.kolaczyk@mcgill.ca}\\
	\And
	Lizhen Lin\\
	Department of Mathematics, The University of Maryland, \\ College Park, MD 20742, USA
	\texttt{lizhen01@umd.edu}
}

\begin{document}	
	
	\maketitle
	
	\begin{abstract}
		Clustering populations of networks while recovering their latent hierarchical organization is a fundamental yet largely unexplored problem in network analysis. 
		To formalize this, we introduce the Hierarchical Distance Matrix, a specific class of population-level distance matrices that encodes latent hierarchical organization through recursively nested distance separation, accommodating unbalanced tree depths.
		Building on this framework, we propose a fully data-driven top-down procedure: network hierarchical clustering based on two-sample testing (NHC-TST). The algorithm recursively splits networks via spectral clustering and uses a graph-based two-sample stopping rule. The procedure adaptively determines the branching structure without requiring prior knowledge of the number of clusters or tree depth.
		Theoretically, we establish exact recovery of the population-level hierarchical structure and statistical consistency in the empirical procedure.
		Simulation studies demonstrate highly accurate recovery of both cluster memberships and hierarchical relationships across a wide range of settings. Applied to a global migration dataset, NHC-TST uncovers interpretable multi-resolution temporal structures that are not revealed by conventional flat clustering approaches.
	\end{abstract}
	
	\noindent\textbf{Keywords:}
	Multiple networks; Hierarchical clustering; Spectral clustering; Two-sample test.

	\section{Introduction}
	
	Multiple network datasets arise increasingly often across a broad range of scientific domains such as in biology (genomics networks \citep{sole2002complex}, brain connectomics \citep{bassett2008hierarchical}), engineering (computer networks \citep{leskovec2007graph}, transportation networks \citep{kurant2006extraction}), and the social sciences (social networks \citep{eagle2009inferring}, organizational networks \citep{josephs2024communication}).
	Often, there are no \textit{response} labels associated with each network so subsequent analyses must be unsupervised. A classic unsupervised learning task is to cluster the data into meaningful groups.
	Traditional clustering is flat, meaning that there is no structure between the inferred groups.
	However, analysts are often interested in the (dis)similarity between clusters, which explains the popularity of hierarchical clustering techniques. Hierarchical clustering provides links between clusters, typically using a bottom-up agglomerative or top-down divisive strategy, and clusters are considered more similar if they are closer with respect to the shortest path on the resulting hierarchy. 
	
	There are several existing approaches to clustering networks.
	\cite{josephs2023nested} provides a summary of these methods, categorizing them into three general approaches.
	The first approach is the random-effects model, in which network clusters are subject to Markov perturbations.
	These methods include \cite{paul2020random, chen2022global}.
	The second approach is the measurement-error model, in which the network itself is subject to Markov perturbations.
	Clustering methods following this approach include \cite{le2018estimating, mantziou2021bayesian, young2022clustering, josephs2021network}.
	The third approach considers each network as a separate layer of a multilayer network and includes \cite{stanley2016clustering, jing2021community, fan2022alma, signorelli2020model}.
	There are also a few clustering methods that fall outside of these categories.
	\cite{mukherjee2017clustering} provide two graph-clustering algorithms based on spectral clustering of the pairwise distance matrix between the estimated graphons of these networks: Network Clustering based on Graphon Estimates (NCGE) for vertex-aligned networks using graphon estimates, and Network Clustering based on Log Moments (NCLM) for unaligned networks utilizing spectral moment features.
	\cite{reyes2016stochastic} and \cite{josephs2023nested} provide Bayesian nonparametric approaches that build on the Bayesian stochastic block model (SBM).
	In particular, \citet{josephs2023nested} proposes a nested stochastic block model (NSBM), which employs a nested Dirichlet process to cluster network collections.
	All of these methods for clustering networks are flat. 
	
	Currently, there are only two approaches to hierarchical clustering specifically designed for networks.
	\cite{diquigiovanni2019analysis} use the Louvain method for community detection on each network and then
	perform agglomerative clustering of the networks based on the Rand index between their community structures. 
	%\cite{rebafka2024model} similarly use an agglomerative approach based on the integrated classification likelihood.
	\citet{rebafka2024model} proposes a Bayesian framework for mixtures of SBMs, employing a hierarchical agglomerative algorithm based on the integrated classification likelihood (ICL), which we refer to as hierarchical ICL for SBM mixtures (HICL-SBM).
	Both methods, like traditional hierarchical clustering algorithms, produce a hierarchy of the clusters. However, this \textit{dendrogram} is simply a trace of greedy splits towards a flat clustering rather than recovering a latent hierarchical organization.
	Currently, there are no methods to recover a hierarchical structure that is meaningfully related to cluster splits.
	
	To address this gap, we propose a novel approach that directly recovers both the network clusters and their latent hierarchical relationships.
	To make this problem statistically well defined, we first formulate what it means for a distance (dissimilarity) matrix to exhibit a hierarchical structure by introducing a recursive separation condition, formalized as the Hierarchical Distance Matrix. This characterization captures the multiscale nature of hierarchical clustering by requiring clusters that separate earlier in the tree to exhibit larger population dissimilarities than those separated at deeper levels, while naturally allowing branches to terminate at different depths. 		
	For network populations sampled under this condition, we propose a top-down algorithm to recover the latent tree and cluster memberships. Our procedure recursively partitions the networks via spectral clustering, employing a graph-based two-sample test as an adaptive stopping rule.	
	On the theoretical side,  we prove the consistency of our method when the distance matrix is known, as well as the consistency of our method using a plug-in estimator for the distance matrix. To evaluate our method on synthetic data, we propose a novel adaptation of the Cophenetic Correlation Coefficient for hierarchical structures.
	
	Herein, we are motivated by global migration flows which impact economic, social, and political landscapes, driving demographic shifts and influencing policy frameworks worldwide \citep{right2016assessing}.
	% Quantifying migration is crucial for policymakers, humanitarian organizations, and researchers addressing issues ranging from labor markets to emergency response. 
	% Traditional migration data collection methods rely on national censuses and administrative records, which suffer from missingness, inconsistency across nations, and delayed reporting \citep{un2015international, desa2023international}.
	% Recently, digital trace data has emerged as an alternative source of migration data without the same limitations \citep{sirbu2021human, zagheni2017leveraging, hawelka2014geo, lu2016unveiling, wesolowski2015quantifying, zagheni2014inferring}.
	\citet{chi2025measuring} utilize a novel dataset constructed from privacy-protected Facebook user data to estimate monthly country-to-country migration flows from January 2019 to December 2022.
	This high-resolution dataset encompasses periods of significant global disruption, such as the COVID-19 pandemic and geopolitical crises, including the invasion of Ukraine. These global bilateral migration dynamics can be naturally represented as a sequence of temporal networks. Each month in the dataset is represented by an individual network, where nodes denote countries and edges reflect migration volumes.
	% Because the countries remain constant across the time period, the resulting dataset is a collection of labeled networks, i.e. there is a known node correspondence between the networks.
	In the context of global migration, we may want to know not only if COVID-19 was disruptive, but also whether migration returned to pre-COVID behavior or was permanently altered. This degree of difference is relevant to policymakers and cannot be determined from flat clusters alone.

	\subsection{Notation}
	%\textbf{Notation.}
	The following notation is used throughout this paper. Let $\mathbb{I}(\cdot)$ denote the indicator function. We use $f(n)=\omega(g(n))$ to mean $\lim_{n\to\infty} f(n)/g(n)=\infty$. Let $\mathbf{1}$ denote the all-ones vector of appropriate dimension. For a matrix $M$, let $M_{ij}$ denote its $(i,j)$-th entry and $\|M\|_F$ its Frobenius norm. When $M \in \mathbb{R}^{r \times r}$, denote its ordered eigenvalues by $\lambda_1(M) \ge \cdots \ge \lambda_r(M)$. For conciseness, we refer to the eigenvectors associated with the largest and smallest eigenvalues as the \emph{dominant} and \emph{minimal} eigenvectors, respectively. For a matrix $A \in \mathbb{R}^{n \times n}$,  we denote its $i$th row sum by $\deg_i(A) = \sum_{j=1}^{n} A_{ij}$ and define the combinatorial Laplacian as  $L_A = \mathrm{diag}(A\mathbf{1}) - A$.
	
	\section{Hierarchical clustering of networks}
	\label{sec:model}
	
	\subsection{Model and problem formulation}
	
	Let $A^{(1)}, \ldots, A^{(m)}$ denote $m$ observed network objects defined on a common set of $n$ nodes, where each $A^{(r)} \in \{0,1\}^{n \times n}$, $r=1,\ldots, m$, is the adjacency matrix of an undirected simple graph on $n$ nodes. That is, for $i<j$, $A^{(r)}_{ij}=1$ if nodes $i$ and $j$ are connected and $A^{(r)}_{ij}=0$ otherwise, with $A^{(r)}_{ji}=A^{(r)}_{ij}$ and $A^{(r)}_{ii}=0$.
	We assume that these networks arise from $K$ latent populations, each characterized by a link-probability matrix $\mathcal P^{(k)} \in [0,1]^{n \times n}$, $k=1,\ldots,K$. Specifically, for each network $r$, there exists a latent label $k_r \in \{1,\ldots,K\}$ such that
	\begin{equation*}
		A^{(r)}_{ij} \mid \mathcal P^{(k_r)}_{ij} \sim \mathrm{Bernoulli}(\mathcal P^{(k_r)}_{ij}), \qquad 1 \le i < j \le n,
	\end{equation*}
	independently across pairs $(i,j)$.
	
	A key feature of our model is that the $K$ populations are organized according to an unknown hierarchical tree of depth $\mathcal{L}$ (the depth of root is 0). Each cluster $k$ is associated with a binary label
	\[
	x_k = a_1 a_2 \cdots a_{\ell_k}, \quad a_\ell \in \{0,1\}, \ \cl = 1, \ldots, \cl_k, \ \ell_k \le \mathcal{L},
	\]
	which encodes its position in the tree. The sequence $x_k$ records the path from the root to cluster $k$, allowing for potentially unbalanced hierarchical structures. Figure~\ref{fig:schematic} illustrates a toy example of such a tree. At any splitting step $\cl$, we define the binary indicator $a_{\cl} = 0$ if the subset is partitioned into the left branch, and $a_{\cl} = 1$ if it is routed to the right. As shown, the subscript prefixes of each node in the tree naturally matches its binary labels, %As shown, the subscript of each node in the tree naturally matches its binary labels, 
	tracing the recursive branching process down to the terminal leaves. 
	
	% \begin{figure}[!htb]
		% \centering
		% \includegraphics[width=.7\linewidth]{Figs/schematic.png}
		% \caption{Schematic of hierarchical tree for networks. In this toy example, each parent node corresponds to a network that splits into two children nodes based on the addition (red solid) or deletion (blue dashed) of edges. The networks that are observed in a sample are noisy realizations sampled conditional on the networks in the five terminal nodes $(G_{00}, G_{010}, G_{011}, G_{10}, \ \text{and} \ G_{11})$}.
		% \label{fig:schematic}
		% \end{figure}
	
	\begin{figure}[htbp]
		\centering
		\resizebox{\textwidth}{!}{ 
			\begin{tikzpicture}[
				every node/.style={align=center},
				% Slightly adjusted font size to \LARGE to ensure spacing room
				prob_node/.style={draw, line width=1.5pt, rounded corners, fill=blue!10, minimum width=3.8cm, minimum height=1.8cm, font=\Huge},
				leaf_node/.style={draw, dashed, line width=1.5pt, rounded corners, fill=green!10, minimum width=4.5cm, minimum height=1.8cm, font=\Huge},
				arrow/.style={->, >=Stealth, line width=2pt}
				]
				
				% Root
				%\node[prob_node] (Root) at (0, 0) {$\mathcal{P}_{\text{root}}$};
				\node[prob_node] (Root) at (0, 0) {\{$\mathcal{P}_{000}, \mathcal{P}_{001}, \mathcal{P}_{01}, \mathcal{P}_{10}, \mathcal{P}_{11}$\}};
				
				% Level 1
				%\node[prob_node] (G0) at (-10, -3.5) {$\mathcal{P}_{0}$};
				%\node[prob_node] (G1) at (10, -3.5) {$\mathcal{P}_{1}$};
				\node[prob_node] (G0) at (-10, -3.5) {$0$: \{$\mathcal{P}_{000}, \mathcal{P}_{001}, \mathcal{P}_{01}$\}};
				\node[prob_node] (G1) at (10, -3.5) {$1$: \{$\mathcal{P}_{10}, \mathcal{P}_{11}$\}};
				
				\draw[arrow] (Root) -- (G0);
				\draw[arrow] (Root) -- (G1);
				
				% Level 2
				%\node[prob_node] (G00) at (-14.5, -7.0) {$\mathcal{P}_{00}$};
				\node[prob_node] (G00) at (-14.5, -7.0) {$00$:  \{$\mathcal{P}_{000}, \mathcal{P}_{001}$\}};
				\node[prob_node] (G01) at (-5.5, -7.0) {$01$: $\mathcal{P}_{01}$};        
				
				\node[prob_node] (G10) at (5.5, -7.0) {$10$: $\mathcal{P}_{10}$};
				\node[prob_node] (G11) at (14.5, -7.0) {$11$: $\mathcal{P}_{11}$};
				
				\draw[arrow] (G0) -- (G00);
				\draw[arrow] (G0) -- (G01);
				\draw[arrow] (G1) -- (G10);
				\draw[arrow] (G1) -- (G11);
				
				% Level 3 (Asymmetric split)
				\node[prob_node] (G000) at (-17.5, -10.5) {$000$: $\mathcal{P}_{000}$};
				\node[prob_node] (G001) at (-11.5, -10.5) {$001$: $\mathcal{P}_{001}$};
				
				\draw[arrow] (G00) -- (G000);
				\draw[arrow] (G00) -- (G001);
				
				% Observed Collections (Leaves)
				\node[leaf_node] (C000) at (-17.5, -14.5) {$\{A^{(i)}\} \sim \mathcal{P}_{000}$};
				\node[leaf_node] (C001) at (-11.5, -14.5) {$\{A^{(i)}\} \sim \mathcal{P}_{001}$};
				\node[leaf_node] (C01) at (-5.5, -14.5) {$\{A^{(i)}\} \sim \mathcal{P}_{01}$};
				\node[leaf_node] (C10) at (5.5, -14.5) {$\{A^{(i)}\} \sim \mathcal{P}_{10}$};
				\node[leaf_node] (C11) at (14.5, -14.5) {$\{A^{(i)}\} \sim \mathcal{P}_{11}$};
				
				% Dashed sampling arrows
				\draw[arrow, dashed] (G000) -- (C000);
				\draw[arrow, dashed] (G001) -- (C001);
				\draw[arrow, dashed] (G01) -- (C01);
				\draw[arrow, dashed] (G10) -- (C10);
				\draw[arrow, dashed] (G11) -- (C11);
				
			\end{tikzpicture}
		}
		\caption{Hierarchical tree of the underlying link-probability matrices from which populations of networks are drawn. The generative structure accommodates unbalanced tree depths.}     
%		\small \textbf{Alt text:} An unbalanced hierarchical binary tree depicting how network populations recursively split into sub-populations, each corresponding to a specific link-probability matrix at the leaves. The left branch reaches depth three, while the remaining branches terminate at depth two.
		\label{fig:schematic}
	\end{figure}
	
	Our goal is to recover both of the following:
	(1) the graph-level cluster assignment $k_r$ for each observed network; 
	and (2) the underlying hierarchical tree structure governing the $K$ populations.
	
	%\subsection{Hierarchical distance matrix structure}
	\subsection{Distance metrics among networks}
	To quantify the structural discrepancies among networks, we establish three distance matrices in this section. We first introduce a tree-based topological distance matrix and a population distance matrix as our foundational metrics. Subsequently, by imposing a specific structural condition on a population distance matrix, we formally define the Hierarchical Distance Matrix (HDM), which serves as a pivotal concept and provides core properties for the theoretical analysis of our clustering algorithm.

	\textbf{Tree distance on binary labels.}
	% To quantify the structural discrepancies among the $K$ clusters, we first define a metric based on hierarchical topology.
	As our first foundational metric, we formalize the tree-based topological distance to capture the structural separation among the $K$ clusters.
	For two \textit{binary labels} $x_k=a_1a_2\cdots a_{\ell_{k}} $ and $x_{k'}=a_1'a_2'\cdots a_{\ell_{k'}}'$, we begin by defining the \textit{earliest splitting level}
	\[
	s(x_k,x_{k'}) \;=\;
	\begin{cases}
		\min\bigl\{\ell \in \{1,\ldots, \mathcal{L}\} : a_\ell \neq a'_\ell \bigr\}, & x_k \neq  x_{k'},\\[4pt]
		\mathcal{L} +1, & x_k= x_{k'}.
	\end{cases}
	\]
	Let $\text{lcp}(x_k,x_{k'})$ denote the length of the \textit{longest common prefix} of the two sequences, then $\text{lcp}(x_k,x_{k'})~=~s(x_k,x_{k'})-1$.
	Using this notion, we define the \textit{tree distance} between two clusters $k$ and $k'$ as
	\begin{equation}\label{eq:tree-dist}
		d_{\mathcal{T}}(x_k,x_{k'})  =  \mathcal{L} -\text{lcp}(x_k,x_{k'}) = \mathcal{L} + 1- s(x_k,x_{k'}).
	\end{equation}
	If $x_k$ and $x_{k'}$ split \emph{earlier}, then $s(x_k,x_{k'})$ is smaller and hence their shared prefix is shorter, corresponding to a larger tree distance $d_{\mathcal{T}}(x_k,x_{k'})$.
	Therefore, a larger tree distance means the clusters are less similar.
	
	For example, as illustrated in Figure~\ref{fig:schematic}, consider the terminal clusters with binary labels $x_1 = 000$ and $x_2 = 01$, respectively. The longest common prefix of these two labels is $\text{lcp}(x_1, x_2) = 1$, implying that these two clusters share the same path down to the first layer and diverge at the second layer of the tree. Consequently, their earliest splitting level is $s(x_1, x_2) = 2$ and the tree distance is $d_{\mathcal{T}}(x_1,x_2) = 2$. 
	
	\textbf{Population distance matrix across observed networks.}
	While the tree distance matrix captures the hierarchical topology, we also quantify structural differences directly through the true link-probability matrices. We define the normalized population distance matrix (or simply the distance matrix) between $m$ networks as $D \in \mathbb{R}^{m \times m}$, where its $(r,s)$-th entry represents the pairwise distance is
	\begin{equation} \label{p-dis}
		D_{rs} = \frac{1}{n} \|\mathcal P^{(k_r)} - \mathcal P^{(k_s)}\|_F.
	\end{equation}
	Let $Z\in\{0,1\}^{m\times K}$ be the membership matrix with $Z_{rk}=1$ iff $k_r=k$.
	Define $\mathcal{D}\in\mathbb{R}^{K\times K}$ by $\mathcal{D}_{kk'}= \frac{1}{n} \|\mathcal P^{(k)} - \mathcal P^{(k')}\|_F$, with $\mathcal{D}_{kk}=0$.
	Then
	\begin{equation*}
		D = Z\,\mathcal{D}\,Z^\top.
	\end{equation*}
	If one instead sets $\mathcal{D}_{kk}=\rho$ for a constant $\rho$, then $D=Z\mathcal{D}Z^\top-\rho ZZ^T$.  
	
	This matrix encodes pairwise dissimilarities among observed networks and serves as the key object for clustering.
	
	\textbf{Hierarchical distance matrix structure.}
	We now introduce a structural condition that links the hierarchical tree to the population distance matrix $D$.
	Without loss of generality, suppose the observed networks are reordered so that networks belonging to the same cluster occupy adjacent rows of $Z$.
	%We consider a class of distance matrices that admit both a hierarchical structure and a monotonic property, a structure naturally applicable to our population distance matrix. 
	To formally characterize this matrix structure, we first introduce a hierarchical indexing scheme based on binary prefix sequences. Assume $D \in \mathbb{R}^{m\times m}$ is any symmetric matrix. % with $D_{rr}=0$ and $D_{rs}\ge 0$ for all $r \neq s$.  
	For a given prefix $x$, let $D^{[x]}$ denote the submatrix of $D$ indexed by the networks whose binary labels begin with $x$.
	Therefore, for any \emph{non-terminal} prefix $x$, i.e., an internal node that splits into branches $x0$ and $x1$, the submatrix $D^{[x]}$ can be partitioned into branch-specific diagonal blocks and a cross-branch off-diagonal block $R^{[x]}$:
	\begin{equation}\label{eq:Dx}
		D^{[x]} =
		\begin{pNiceArray}{c|c}[cell-space-limits=2pt]
			D^{[x0]} & R^{[x]}\\
			\Hline
			(R^{[x]})^\top & D^{[x1]}
		\end{pNiceArray}.
	\end{equation}
	
	%With this structural indexing scheme, we formally define the hierarchical matrix and hierarchical distance matrix as follows.
	
	With this indexing scheme, we establish the following formal definitions.
	
	\begin{definition}[Hierarchical Matrix]\label{def:HM}
		For a symmetric matrix $D \in \mathbb{R}^{m\times m}$ with $D_{rr}=0$ and $D_{rs}\ge 0$  for all $r \neq s$, we say that $D$ is a Hierarchical Matrix (HM) if there exists a permutation of its indices such that for every \emph{non-terminal} prefix $x$, the submatrix $D^{[x]}$ can be recursively partitioned as in \eqref{eq:Dx}, subject to the following conditions:
		\begin{enumerate}[label=(\roman*)]
			\item \emph{Strict monotonicity:} The elements of the off-diagonal block strictly bound the elements of the diagonal blocks, such that
			\begin{equation*}
				\max_{i,j}\left\{D^{[x0]}_{ij},\, D^{[x1]}_{ij}\right\} \;<\; \min_{i,j} \left(R^{[x]}_{ij}\right).
			\end{equation*}
			\item \emph{Recursive hierarchy:} The diagonal blocks $D^{[x0]}$ and $D^{[x1]}$ are themselves HMs of appropriate dimensions or scalar zeros at the terminal leaf level.
		\end{enumerate}
		%\hfill $\square$
	\end{definition}
	
	\begin{definition}[Hierarchical Distance Matrix]\label{def:HDM}
		The normalized population distance matrix $D$ defined in \eqref{p-dis} is called a Hierarchical Distance Matrix (HDM) if it is an HM.
	\end{definition}

	In particular, the permuted HDM $D$ at the first layer admits
	\begin{equation}\label{eq:D}
		D =
		\begin{pNiceArray}{c|c}[cell-space-limits=2pt]
			D^{[0]} & R\\
			\Hline
			R^\top & D^{[1]}
		\end{pNiceArray},
	\end{equation}
	where $D^{[0]}$ and $D^{[1]}$ are within-cluster distance matrices and $R$ contains between-cluster distances. Entry values of $D^{[0]}$ and $D^{[1]}$ are smaller than those of $R$.
	The recursive hierarchy condition means that this block-partitioning scheme can be iterated down the tree.
	%Furthermore, the entrywise inequalities implied by Definition~\ref{def:HDM} yield the monotonic separation across layers (e.g., within $D^{[x0]}$ or $D^{[x1]}$ is smaller than across $R^{[x]}$, and distances across the child off-diagonal blocks $R^{[x0]}$ and $R^{[x1]}$ are smaller than those across their parent block $R^{[x]}$). %and distances across $R^{[x]}$ are smaller than those across higher-level off-diagonal blocks).
	We illustrate the structure of the HDM and show how its underlying assumptions are satisfied in the simulation setting (see Section~\ref{sec:sim}). The first two simulation scenarios provide explicit examples.
	
	\begin{remark}
		With a slight abuse of notation, we denote both HM and HDM by $D$, which allows us to directly apply existing HM expressions in subsequent analyses.
	\end{remark}
	
	\begin{remark}
		When $D$ is an HDM, the strict monotonicity condition intuitively guarantees consistency with the tree distance $d_{\mathcal{T}}$: clusters diverging earlier exhibit larger distances. 
		Specifically, consider three classes $k, k', k''$ such that $d_{\mathcal{T}}(k,k') > d_{\mathcal{T}}(k,k'')$, meaning $k$ and $k'$ split at an earlier layer $i$ than $k$ and $k''$. For any networks $r, r', r''$ in clusters $k, k', k''$ respectively, their pairwise distance $D_{rr'}$ falls into an off-diagonal block at layer $i$, whereas $D_{rr''}$ remains in a diagonal block. By the strict monotonicity of the HDM, we immediately have $D_{rr'} > D_{rr''}$, ensuring that the true hierarchical topology is fully preserved in $D$.
	\end{remark}
	
	%\begin{remark}	[Distance Consistency]
	%When the population distance matrix $D$ is an HDM, the strict monotonicity condition in Definition \ref{def:HDM} intuitively guarantees  that $D$ is consistent with the underlying tree distance $d_{\mathcal{T}}$: 
	%clusters diverging earlier in the tree exhibit larger distances in $D$.
	%Specifically, consider three distinct classes $k, k'$ and $k''$ with hierarchical binary labels $x_k, x_{k'}$ and $x_{k''}$. Assume $\boldsymbol{d_{\mathcal{T}}(k,k') > d_{\mathcal{T}}(k,k'')}$, which implies that classes $k$ and $k'$ split at an earlier layer $i$, whereas $k$ and $k''$ split at a deeper layer $j$ ($i < j $). Let $a_1 \cdots a_{i-1}$ denote their shared prefix at layer $i-1$, and let $r, r', r''$ be the indices of the networks belonging to clusters $k, k', k''$, respectively. Because $k$ and $k'$ diverge at layer $i$, their corresponding distance $D_{rr'} = Z_{r \cdot} \, \mathcal{D}\,(Z_{r' \cdot})^\top$ is in the off-diagonal block  of the submatrix $D^{[a_1 \cdots a_{i-1}]}$. On the other hand, since $k$ and $k''$ share a longer prefix,% up to layer $j-1$, 
	%they belong to the same cluster at layer $i$. Thus, their distance $D_{rr''}$ is in a diagonal block of $D^{[a_1 \cdots a_{i-1}]}$. By the strict monotonicity condition of the HDM, we immediately have $\boldsymbol{D_{rr'} > D_{rr''}}$. This property ensures that the true hierarchical relationships are fully preserved within the population distance matrix.
	%\end{remark}
	
	\subsection{Hierarchical clustering algorithm}
	In this subsection,  we propose a recursive algorithm for clustering the observed networks and recovering their hierarchical structure.	
	The procedure proceeds with two main steps: (1) Partition the networks into two groups using spectral clustering on an estimated distance matrix $\hat D$ based on a graphon estimation method.
	(2) Test whether the partition is statistically significant; if so, recursively apply the procedure to each subgroup.
	%\begin{enumerate}[label=(\arabic*)]
	%\item Partition the networks into two groups using spectral clustering on an estimated distance matrix $\widehat D$ based on a graphon estimation method.
	%\item Test whether the partition is statistically significant; if so, recursively apply the procedure to each subgroup.
	%\end{enumerate}
	
	One of the advantages of this algorithm is that we do not need to determine the number of clusters beforehand, which is often unrealistic for real data applications.
	Second, compared to traditional spectral clustering, each step involves only limited eigenvectors, which greatly improves the computational efficiency.
	Finally, while we restrict our focus in this paper to binary trees, the methodology can be easily extended to other kinds of hierarchical trees.

	\begin{algorithm}
		\caption{Network Hierarchical Clustering based on Two Sample Test (NHC-TST)}
		\label{NHC-TST}
		
		\LinesNumbered
		
		\SetKwProg{Fn}{Procedure}{}{}
		\Fn{HierarchicalClustering $(\widetilde{A},\alpha):$ \tcp*[f]{$\widetilde{A} = (A^{(1)},\ldots,A^{(m)})$}}{
			
			\nl $(\widehat{P}^{(1)},\ldots,\widehat{P}^{(m)}) \leftarrow \text{NBS}(A^{(1)},\ldots,A^{(m)})$  \label{alg:est_m} 
			
			\tcp*[f]{Any link probability estimator} 
			
			$\widehat{D} \leftarrow (\widehat{P}^{(1)},\ldots,\widehat{P}^{(m)})$ 
			\tcp*[f]{$\widehat{D}_{ij} = \frac{1}{n} ||\widehat{P}^{(i)} - \widehat{P}^{(j)}||_F$} 
			
			$(\widetilde{A}_1, \widetilde{A}_2)\leftarrow\text{SpectralClustering}(\widehat{D})$ \label{alg:spc} 
			\tcp*[f]{$\widetilde{A}_i = (A^{(1)},\ldots,A^{(m_i)}),\ \  m_1 + m_2 = m$} 
			
			$({\widetilde{P}}_1,{\widetilde{P}}_2)\leftarrow \text{MNBS}(\widetilde{A}_1, \widetilde{A}_2)$ \label{alg:est_2} 
			
			$\widehat{Z}=\frac{\bar{A}_1-\bar{A}_2}{\sqrt{\left[n\left\{\frac{1}{m_{1}} {\widetilde{P}}_1\left(1-{\widetilde{P}}_1\right)+\frac{1}{m_{2}} {\widetilde{P}}_2\left(1-{\widetilde{P}}_2\right)\right\}\right]}}$
			\tcp*[f]{$\bar{A}_i = \frac{1}{m_i}\sum_{r=1}^{m_i}A^{(r)}$} 
			
			$\widehat{\theta} \leftarrow \frac{1}{\sqrt{15}} \Tr (\widehat{Z}^3)$ 
			
			\vspace{5pt} 
			\eIf{$\mathbb{P}(|\widehat{\theta}| > z_{\alpha/2}) < \alpha$ \label{alg:test}}{
				$(\widetilde{A}_1, \widetilde{A}_2)\leftarrow BIPARTITION(\widetilde{A})$ 
				
				$\text{HierarchicalClustering}(\widetilde{A}_1,\alpha)$ 
				
				$\text{HierarchicalClustering}(\widetilde{A}_2,\alpha)$ 
			}{\vspace{5pt} 
				STOP
			}
		}
		\vspace{5pt} 
		\textbf{end procedure}
	\end{algorithm}
	
	For each network $i = 1, \dots, m$,  let $P^{(i)}$ and $\hat{P}^{(i)}$ represent the underlying link probability matrix and its corresponding estimation, respectively. 
	The complete recursive procedure is formalized in Algorithm \ref{NHC-TST} for specific choices of estimation, testing, and clustering algorithms.
	In particular, any estimation algorithm could be adopted in lines \ref{alg:est_m} and \ref{alg:est_2}.
	Here, we choose the neighborhood smoothing (NBS) \citep{zhang2017estimating} and modified neighborhood smoothing  (MNBS) \citep{zhao2019change} methods, respectively.
	Similarly, for the decision rule in line \ref{alg:test}, we choose the two-sample hypothesis test for populations of networks from \citet{chen2024spectral}.
	However, other criteria could be used such as AIC or BIC, though we do not pursue their theoretical properties here.
	Finally, a crucial component of Algorithm \ref{NHC-TST} is the clustering executed in line \ref{alg:spc}.
	Given the estimated distance matrix $\hat{D} \in \mathbb{R}^{m \times m}$, we propose two distinct bipartitioning approaches based on eigenvector sign-checks.
	%Depending on the chosen matrix, we denote the resulting algorithmic variants as $\text{NHC}_{\text{L}}$ and $\text{NHC}_{\text{D}}$, formalized as follows:
	Depending on the chosen matrix, the corresponding  bipartitioning rules formalized as follows:
	
	\begin{itemize}[leftmargin=1.5em]
		\item[(1)] \textbf{Laplacian Spectral Bipartitioning ($\text{NHC}_{\text{L}}$):} This approach operates on the  Laplacian  $L_{\hat D}~=~\text{diag}(\hat D \mathbf{1}) - \hat{D}$.
		We compute the dominant eigenvector $u \in \mathbb{R}^m$ of $L_{\hat D}$ and assign the binary class label $c_i = \mathbb{I}(u_i > 0)$ for all $i = 1, \dots, m$.
		\item[(2)] \textbf{Distance Spectral Bipartitioning ($\text{NHC}_{\text{D}}$):} 	This approach operates directly on the estimated distance matrix $\hat{D}$. We compute the minimal eigenvector $u \in \mathbb{R}^m$ of $\hat{D}$ and similarly assign the class label $c_i = \mathbb{I}(u_i > 0)$.
	\end{itemize}
	
	In both cases, the sign pattern of the eigenvector determines the partition.
	For both $\text{NHC}_{\text{L}}$ and $\text{NHC}_{\text{D}}$, the generated binary indicator $c_i \in \{0,1\}$ serves to estimate the underlying hierarchical branching encoding scheme $a_i$.
	
	Combining these bipartitioning rules with the recursive two-sample testing procedure, we denote the resulting algorithmic variants as $\text{NHC}_{\text{L}}\text{-TST}$ and $\text{NHC}_{\text{D}}\text{-TST}$, respectively.
	
	\section{Theoretical results}
	\label{sec:theory}
	%We begin our theoretical analysis by considering the ideal scenario where the distance matrix $D$ is available ($D$ is viewed as a distance matrix throughout the remainder of the paper).
	%In this ideal setting, we will demonstrate that both proposed methods $\text{NHC}_{\text{L}}$ and $\text{NHC}_{\text{D}}$ can recover the exact hierarchical network structure.
	%We then show that when the true $D$ is unknown and replaced by its estimate $\hat{D}$, we have consistency of $\text{NHC}_{\text{L}}$. 
	
	Our theoretical analysis proceeds in three stages. Assuming the recursive splitting correctly terminates, we prove that both $\text{NHC}_{\text{L}}$ and $\text{NHC}_{\text{D}}$ exactly recover the underlying hierarchical bipartitions when the distance matrix $D$ is available. Then, we establish the  consistency of both methods in the empirical setting when $D$ is unknown and replaced by its estimate $\hat{D}$. Finally, by integrating the data-driven testing procedure, we establish the consistency of the $\text{NHC}_{\text{D}}\text{-TST}$ algorithm in recovering the true hierarchical topology.

	Before presenting our theoretical results, we specify the block dimensions based on the underlying hierarchical tree. For any non-terminal cluster with binary label $x$ of size $m_x$, let $x0$ and $x1$ be its child clusters with sizes $m_{x0}$ and $m_{x1}$, where $m_x = m_{x0} + m_{x1}$. Accordingly, the submatrix $D^{[x]}$ is naturally partitioned into diagonal blocks $D^{[x0]} \in \mathbb{R}^{m_{x0} \times m_{x0}}$, $D^{[x1]} \in \mathbb{R}^{m_{x1} \times m_{x1}}$, and an off-diagonal block $R^{[x]} \in \mathbb{R}^{m_{x0} \times m_{x1}}$. When $D$ is an HDM, this block partition strictly satisfies the structural conditions in Definition \ref{def:HM}.
	%The monotonic property of the distance matrix implies that this structural partition naturally exhibits a Small-Large magnitude pattern. 
	Then, it is standard and natural to adopt the following bounded assumptions on the entries of the sub-blocks:
	
	\begin{assumption}[Between-cluster distances] \label{A1}
		The cross-branch distances are bounded such that all entries in $R^{[x]}$ satisfy $R^{[x]}_{ij} \in [\alpha_1^x, \beta_1^x]$.
	\end{assumption}
	
	\begin{assumption}[Within-cluster distances] \label{A2}
		%	The within-cluster distances are strictly smaller than the between-cluster distances. Specifically,  $D^{[x0]}, D^{[x1]} \in [0, \beta_0^x]$, with gap condition $\beta_0^x < \alpha_1^x$. The smallest non-zero entry of all the diagonal blocks is $\alpha_0^x$.
		The non-zero entries of $D^{[x0]}$ and $D^{[x1]}$ are bounded within $[\alpha_0^x, \beta_0^x]$, satisfying the gap condition $\beta_0^x < \alpha_1^x$.
	\end{assumption}
	
	\begin{assumption}[Cluster sizes and balance] \label{A3}
		The sample size balance ratio between the two child clusters satisfies $\eta^x = m_{x0} / m_{x1} > 0$. Denote $\tilde{\eta}^x = \max\{\eta^x, 1 / \eta^x\}$.
	\end{assumption}
	
	In particular, for the initial split at the root of the hierarchy, there is no parent cluster index, so we omit the superscript $x$ for simplicity and denote the overall variables as $D, R, m, \alpha_0, \alpha_1, \beta_0, \beta_1$, $\eta$, and $\tilde\eta$. 
	%Moreover, from Assumptions \ref{A1} and \ref{A2}, we have $\beta_0^x = \max\{\beta_1^{x0}, \beta_1^{x1}\}$. At the root layer, this reduces to $\beta_0 = \max\{\beta_1^{0}, \beta_1^{1}\}$.
	
	%The following proposition about the class of Positive-Negative block matrix is a preliminary for our results.
	%
	%\begin{proposition}[Positive-Negative block matrix \citep{balakrishnan2011noise}] \label{Positive-Negative}
	%Let $B$ be an $m \times m$ symmetric matrix with the Positive-Negative block structure of 
	%\begin{equation}
	%	B = 
	%	\begin{pNiceArray}{c|c}[cell-space-limits = 2pt]
		%		B_+ & B_- \\
		%		\Hline
		%		B_-^T & \tilde{B}_+ \\
		%	\end{pNiceArray},
	%\end{equation}
	%where every off-diagonal element in the $p\times p$ block $B_+$ and $q \times q$ block $\tilde{B}_+$ is strictly positive and every element in the $p \times q$ block $B_-$  is strictly negative ($p \geq 1, q \geq 1$). Let $v$ be the dominant eigenvector of $B$. Then the eigenvector $v$ either has the sign pattern of $\begin{pmatrix} v_+ \\ v_- \end{pmatrix}$, where $v_+$, the first $p$ elements of $v$, are strictly positive and $v_-$, other $q$ elements of $v$, are strictly negative or has the reverse sign pattern.
	%\end{proposition}
	
	%The first theorem below establishes the exact recovery guarantee for the Laplacian-based approach $\text{NHC}_{\text{L}}$, using the Laplacian matrix $L_D$. 

	\begin{theorem}\label{theorem L_D}		
		Let $D \in \mathbb{R}^{m \times m}$ be a normalized population distance matrix, and $L_D~=~\text{diag}(D \mathbf{1}) - D$ be its associated Laplacian matrix. Suppose $D$ is an HDM. Under Assumptions \ref{A1} and \ref{A2}, the $\text{NHC}_{\text{L}}$ method will exactly recover the true bipartition at each recursive split. %true hierarchical topology.
	\end{theorem}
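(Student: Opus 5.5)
The plan is to prove the claim at the root split and then recurse down the tree. At the root, the dominant eigenvector $u$ of $L_D$ must satisfy $\operatorname{sign}(u_i)=+$ on the indices of $D^{[0]}$ and $-$ on those of $D^{[1]}$, or the reverse. The recursion works because, by the recursive hierarchy condition of Definition~\ref{def:HM}, each $D^{[x]}$ is itself an HM satisfying Assumptions~\ref{A1}--\ref{A2} with constants indexed by $x$. The algorithm applied to the subset of networks labelled $x$ operates on $L_{D^{[x]}}=\mathrm{diag}(D^{[x]}\mathbf 1)-D^{[x]}$, which is the same object one level down. Induction on the depth of the prefix $x$ therefore upgrades the root statement to exact recovery at every recursive split. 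The induction stops at terminal leaves, where the block is a scalar zero and no split is made, as in the theorem's hypothesis that splitting terminates correctly.

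For the root step I will use three facts.
\begin{enumerate}[label=(\alph*)]
\item Since $D_{ij}\ge 0$, we have $u^\top L_D u=\tfrac12\sum_{i,j}D_{ij}(u_i-u_j)^2\ge 0$. Also $L_D\mathbf 1=0$. Under Assumption~\ref{A1} with $\alpha_1>0$ the top eigenvalue $\lambda$ is strictly positive, so $u\perp\mathbf 1$.
\item Because $D$ is block-constant on the leaf clusters (population level, $D=Z\mathcal D Z^\top$), I would first check that the top eigenspace can be taken constant on each leaf cluster. This reduces the problem to a weighted $K\times K$ problem. If simplicity of $\lambda$ fails, I would argue with any eigenvector in the top eigenspace instead.
\item Since $\lambda=\lambda_1(L_D)\ge\max_i (L_D)_{ii}=\max_i\deg_i(D)$, the eigen-equation can be rewritten entrywise as
\begin{equation*}
(\lambda-\deg_i(D))\,u_i \;=\; -\sum_{j} D_{ij}u_j ,
\end{equation*}
with a nonnegative left-hand coefficient.
\end{enumerate}

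The sign argument then runs as follows.
\begin{enumerate}[label=(\arabic*)]
\item Write $p$ for the restriction of $u$ to block $0$ and $q$ for its restriction to block $1$. Orthogonality to $\mathbf 1$ gives $\mathbf 1^\top p=-\mathbf 1^\top q$. Assume without loss of generality that $\mathbf 1^\top p>0$.
\item Use the variational characterization. Replace $u$ by the flipped vector $\tilde u$ with $\tilde u_i=|u_i|$ on block $0$ and $-|u_i|$ on block $1$, recentred to be orthogonal to $\mathbf 1$ and renormalised.
\item Compare Rayleigh quotients. Every cross pair contributes $R_{ij}(|u_i|+|u_j|)^2\ge R_{ij}(u_i-u_j)^2$. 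A within pair $(i,j)$ with $u_iu_j<0$ loses $4D_{ij}|u_i||u_j|$, and strict monotonicity gives $D_{ij}\le\beta_0<\alpha_1\le R_{i'j'}$ for any cross pair $(i',j')$.
\item The comparison should show that any sign violation inside a block strictly lowers the quotient relative to $\tilde u$. Combined with (c), this excludes zero entries and mixed signs inside a block. For the zero entries, if $u_i=0$ the right side of (c) must vanish, and the cross contributions, weighted by $R_{ij}\ge\alpha_1>0$, rule that out once $p$ and $q$ have the correct signs.
\end{enumerate}

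The main obstacle is step (3), turning the entrywise gap $\beta_0<\alpha_1$ into a global inequality between the within-block losses and the cross-block gains. These are sums over different index pairs, weighted by $|u_i||u_j|$, and the recentring in step (2) perturbs the comparison.

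My first attempt works with the reduced $K\times K$ form from (b). There the within-cluster losses can be bounded by $\beta_0(\sum_{\text{block }0}|u_i|)^2$-type terms and the cross gains bounded below by $\alpha_1(\sum|p_i|)(\sum|q_j|)$. Orthogonality to $\mathbf 1$ ties $\sum|p_i|$ and $\sum|q_j|$ together.

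If that bound is too lossy, the fallback is a perturbation argument. Write $D=\alpha_1(\mathbf 1\mathbf 1^\top-I)-E$ in the cross-versus-within sign pattern, where the leading term $\alpha_1(J_{\text{cross}})$ has top Laplacian eigenvector exactly the block sign vector. Then use the fixed-point form (c), $u=-(\lambda I-\mathrm{diag}(D\mathbf 1))^{-1}Du$, together with the strict gap to show that the sign pattern is preserved.
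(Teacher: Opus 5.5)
\textbf{Gap: steps (3)--(4) are never established.} Your recursion and fact (a) are fine. The problem is the core sign argument in steps (3)--(4), which you yourself flag as the main obstacle: neither of your routes closes it as described.

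Step (3) compares only cross gains with within losses. Flipping signs changes $u^\top L_D u$ by $4(G-W)$, where $G=\sum R_{ij}|u_iu_j|$ runs over cross pairs with equal signs and $W=\sum D_{ij}|u_iu_j|$ runs over within pairs with opposite signs. The entrywise gap $\beta_0<\alpha_1$ does not order these two sums. Take $u\le 0$ on block $1$, and let block $0$ carry positive mass $a_+$ and negative mass $a_-$ (sums of $|u_i|$). Orthogonality to $\mathbf 1$ forces the block-$1$ mass to be $a_+-a_-$. Then $G\le \beta_1 a_-(a_+-a_-)$. This falls below $W\ge \alpha_0 a_+a_-$ when $a_-\approx a_+$ and the two sign classes lie in different leaves. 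So the comparison is false there, not merely lossy, and no bounding by block masses can rescue it.

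The perturbative fallback has no small parameter. The theorem allows $\alpha_1-\beta_0$ to be arbitrarily small and $\beta_1/\alpha_1$ arbitrarily large. Moreover, if the leading term is $\alpha_1(\mathbf 1\mathbf 1^\top-I)$ as written, its Laplacian is $\alpha_1(mI-\mathbf 1\mathbf 1^\top)$. Its top eigenspace is all of $\mathbf 1^\perp$, so it singles out no sign vector.

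\textbf{Missing idea: the recentring supplies the missing gain.} You treat the recentring as a nuisance, but it is exactly what closes the inequality. Because $L_D\mathbf 1=0$, recentring $\tilde u$ leaves $\tilde u^\top L_D\tilde u$ unchanged and only shrinks the norm. So the recentred quotient exceeds $\lambda_1$ if and only if
\[
4(G-W)+\tfrac{\lambda_1}{m}(\mathbf 1^\top\tilde u)^2>0 .
\]
The Rayleigh quotient of the block contrast vector gives $\lambda_1\ge m\alpha_1>m\kappa$ for any $\kappa\in(\beta_0,\alpha_1)$. Let $b_\pm$ be the block-$1$ masses. Using $\mathbf 1^\top u=0$, one gets $(\mathbf 1^\top\tilde u)^2=4(a_+-b_-)(a_--b_+)$, and hence
\[
4(G-W)+\kappa(\mathbf 1^\top\tilde u)^2=4\Big[\sum_{\text{cross, same sign}}(R_{ij}-\kappa)|u_iu_j|+\sum_{\text{within, opp.\ sign}}(\kappa-D_{ij})|u_iu_j|\Big].
\]
Every term on the right is nonnegative, and the sum is strictly positive whenever a sign violation exists.

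\textbf{How the paper does it.} The paper packages this as a shift, $B=\kappa\mathbf 1\mathbf 1^\top+L_D$. Its within-block off-diagonal entries $\kappa-D_{ij}$ are positive and its cross entries $\kappa-R_{ij}$ are negative. The Positive--Negative block lemma then gives a simple top eigenvector with strict block signs; this is Perron--Frobenius applied to $SBS$ with $S=\mathrm{diag}(I_{m_0},-I_{m_1})$. Since $B$ agrees with $L_D$ on $\mathbf 1^\perp$ and its top eigenvector is not $\mathbf 1$, that vector is the top eigenvector of $L_D$. The same argument gives simplicity and excludes zero entries, so your reduction (b) is unnecessary.
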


	\begin{corollary}
		Under the assumptions of Theorem \ref{theorem L_D}, if the splitting process correctly stops at all terminal clusters, then recursively applying $\text{NHC}_{\text{L}}$ to each newly generated sub-cluster exactly recovers the underlying hierarchical topology.
	\end{corollary}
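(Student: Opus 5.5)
The plan is a structural induction over the nodes of the latent tree, using Theorem~\ref{theorem L_D} as the one-step engine. I will show that every call of the recursion on a prefix $x$ receives \emph{exactly} the networks whose binary labels begin with $x$. The induction is on the depth $\ell$ of the prefix $x$, for $0\le \ell\le \mathcal{L}$. At the root ($x$ empty) the input is all $m$ networks, and $D^{[x]}=D$, which is an HDM by hypothesis.

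The inductive step is a transfer of hypotheses to principal submatrices. Suppose the current call receives exactly the networks indexed by $x$, and $x$ is non-terminal. The distance matrix seen by the algorithm is the principal submatrix $D^{[x]}$, and the Laplacian is recomputed on the subgroup as $L_{D^{[x]}}=\mathrm{diag}(D^{[x]}\mathbf{1})-D^{[x]}$. I need to verify three things.
\begin{enumerate}[label=(\alph*)]
\item $D^{[x]}$ is a normalized population distance matrix. Its entries are still $\frac1n\|\mathcal P^{(k_r)}-\mathcal P^{(k_s)}\|_F$ for the networks in the subgroup, so it is of the form \eqref{p-dis} with membership matrix given by the rows of $Z$ in $x$.
\item $D^{[x]}$ is an HM, hence an HDM. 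This follows from the recursive-hierarchy clause (ii) of Definition~\ref{def:HM}: the block $D^{[x]}$, partitioned as in \eqref{eq:Dx}, satisfies strict monotonicity at its top split, and its diagonal blocks are again HMs.
\item Assumptions~\ref{A1}--\ref{A2} hold for $D^{[x]}$ with the constants $\alpha_0^x,\beta_0^x,\alpha_1^x,\beta_1^x$, since those assumptions are stated for every non-terminal $x$.
\end{enumerate}
Theorem~\ref{theorem L_D}, applied with $D^{[x]}$ in place of $D$, then says that the sign pattern of the dominant eigenvector of $L_{D^{[x]}}$ separates the $x0$-networks from the $x1$-networks exactly. The two recursive calls therefore receive exactly the index sets of $x0$ and $x1$, which closes the induction.

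For termination I use the assumption of the corollary that the splitting stops at all terminal clusters and at no non-terminal ones. Hence the recursion visits precisely the internal prefixes of the true tree and halts at its leaves. The recovered tree, with nodes the visited index sets and edges parent-to-child calls, is therefore isomorphic to the latent tree. Each leaf corresponds to one population $k$, so the cluster memberships $k_r$ are recovered as well.

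One caveat is that the rule $c_i=\mathbb{I}(u_i>0)$ is defined only up to the sign of $u$. The estimated bits $c$ may therefore agree with the true bits $a$ only up to swapping the children $x0\leftrightarrow x1$ at each internal node. Such swaps preserve $\mathrm{lcp}$ and hence $d_{\mathcal T}$, so the topology is recovered, with binary labels determined up to this relabeling. That is the sense of ``exact recovery'' I will state.

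I expect the main obstacle to be item (b), together with item (c) and the check that Theorem~\ref{theorem L_D} is genuinely a statement about an arbitrary HDM rather than only the root split. Theorem~\ref{theorem L_D} speaks of ``each recursive split''. I will make explicit that its proof uses only the top-level block structure \eqref{eq:D} together with Assumptions~\ref{A1}--\ref{A2}. These are exactly the properties inherited by $D^{[x]}$, so the theorem can be invoked verbatim at every node.
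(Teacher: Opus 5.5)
Your proposal is correct and follows the same route as the paper, which obtains the corollary by ``repeated application'' of the one-split argument of Theorem~\ref{theorem L_D} to each sub-block $D^{[x]}$, with the oracle stopping assumption handling termination. Your induction on prefix depth is a more careful write-up of that step. It checks explicitly that $D^{[x]}$ inherits the HDM structure and Assumptions~\ref{A1}--\ref{A2}, and that the Laplacian is recomputed as $L_{D^{[x]}}$. Your remark that children are recovered only up to the swap $x0\leftrightarrow x1$, which leaves $d_{\mathcal T}$ unchanged, is a correct clarification that the paper leaves implicit.
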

	
	%Theorem \ref{theorem L_D} is highly general. Relying solely on Assumptions \ref{A1} and \ref{A2} and the intrinsic monotonicity of the HDM, it demonstrates that the exact recovery is driven purely by the Laplacian $L_D$, rather than any extra parametric conditions.
	Theorem \ref{theorem L_D} establishes a highly general recovery guarantee for  each recursive bipartition. Under Assumptions \ref{A1} and \ref{A2}, the exact bipartition is determined by the spectral structure of the Laplacian matrix $L_D$ without additional parametric assumptions. The corollary extends this result to the full hierarchy, highlighting the generality of the proposed framework based solely on the hierarchical structure represented  by the HDM.
	
	The following theorem below establishes a parallel guarantee for $\text{NHC}_{\text{D}}$:  Operating directly on $D$ can exactly recover  the initial root bipartition. %the true hierarchical topology.
	Without loss of generality, we consider the first split of the hierarchical clustering, since the procedure repeats down the branches of the hierarchy.
	
	\begin{theorem}\label{theorem D}
		%	For a normalized distance matrix $D \in \mathbb{R}^{m \times m}$ satisfying the monotonicity property described in Definition \ref{def:HDM}, let 
		Let $D \in \mathbb{R}^{m \times m}$ be a normalized population distance matrix and let
		\begin{equation*}
			%g(m_{0-},m_{1+}) = \frac{- \beta_0(m_{0-}^2 + m_{1+}^2) + 2 m_{1+} m_{0-} \beta_0+ \beta_0(m_{0-} - m_{1+})(m_0 - m_1)}{m_0 m_{1+} + m_1 m_{0-} - 2 m_{1+} m_{0-}},
			g(m_{0-}, m_{0+}, m_{1-}, m_{1+}) =  \frac{\beta_0 (m_{0-} - m_{1+}) (m_{0+} - m_{1-})}{ m_{0+} m_{1+} + m_{1-} m_{0-}},
		\end{equation*}
		where $m_{0+} + m_{0-} = m_0, \ m_{1+} + m_{1-} = m_1, \ 1 \leq m_{0-} \leq m_0 - 1$, $1 \leq m_{1+} \leq m_1 - 1$. Define $M~=~\max\limits_{m_{0-}, m_{0+}, m_{1-}, m_{1+}} g(m_{0-}, m_{0+}, m_{1-}, m_{1+})$. 
		Suppose $D$ is an HDM and Assumptions \ref{A1} -- \ref{A3} hold.
		Method $\text{NHC}_{\text{D}}$ will exactly recover the initial root bipartition if the following separation conditions hold:
		$$
		\text{(1)} \quad \alpha_1 > \beta_0 + M, \qquad \text{and} \qquad \text{(2)} \quad \alpha_1 > \max\{\eta, \eta^{-1}\} \beta_0.
		$$
		%\begin{enumerate}[label=(\arabic*)]
		%	\item $\alpha_1 > \beta_0 + M \enskip.$ 
		%	\item $\alpha_1 > \max\left\{\eta, 1 / \eta \right\} \beta_0 \enskip.$
		%\end{enumerate}
		
		%Furthermore, provided that analogous local separation conditions are satisfied at all internal nodes, the recursive application of $\text{NHC}_{\text{D}}$ will exactly recover the true hierarchical topology.
	\end{theorem}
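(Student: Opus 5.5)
The plan is to work with the variational characterization of the minimal eigenvector. Let $u$ be a unit minimal eigenvector of $D$, with eigenvalue $\lambda_m(D)$. Since $\Tr(D)=0$ and $D\neq 0$, we have $\lambda_m(D)<0$, and $u$ minimizes $v^\top D v$ over unit vectors. Write $D$ in the root block form \eqref{eq:D}, with index sets $I_0$ (size $m_0$) and $I_1$ (size $m_1$). We want to show that $u$ is, up to a global sign, strictly positive on one block and strictly negative on the other. Then the rule $c_i=\mathbb{I}(u_i>0)$ returns exactly $\{I_0,I_1\}$.

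\textbf{Step 1 (reduction to a sign-flip comparison).} Given $u$, define the ``ideal'' vector $v$ by $v_i=|u_i|$ for $i\in I_0$ and $v_i=-|u_i|$ for $i\in I_1$; note $\|v\|=\|u\|$. Split $I_0=I_{0+}\cup I_{0-}$ and $I_1=I_{1+}\cup I_{1-}$ according to the sign of $u_i$, with sizes $m_{0\pm}$ and $m_{1\pm}$. Expanding $v^\top Dv-u^\top Du$ term by term gives two contributions:
\[
v^\top Dv-u^\top Du \;=\; 4\!\!\sum_{\substack{i,j\text{ same block}\\ \text{opposite signs}}}\!\! D_{ij}|u_i||u_j| \;-\; 4\!\!\sum_{\substack{i\in I_0,\,j\in I_1\\ u_iu_j>0}}\!\! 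R_{ij}|u_i||u_j| .
\]
\begin{itemize}
\item \emph{Within-block loss.} Pairs in the same block whose signs disagree are penalized; each such term is at most $\beta_0|u_i||u_j|$ by Assumption \ref{A2}.
\item \emph{Cross-block gain.} Cross pairs with equal signs are rewarded; each such term is at least $\alpha_1|u_i||u_j|$ by Assumption \ref{A1}.
\end{itemize}
It therefore suffices to show that whenever the sign pattern is ``wrong'' (some $m_{0-}\ge1$ or $m_{1+}\ge1$, after fixing the global sign), the gain strictly exceeds the loss. That gives $v^\top Dv<u^\top Du$, which contradicts minimality.

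\textbf{Step 2 (controlling the magnitudes via the eigen-equation).} The difficulty is that gain and loss involve different products $|u_i||u_j|$, so the entrywise bounds alone do not compare them. To fix this, use $\lambda_m u_i=\sum_j D_{ij}u_j$ and sum it over each of the four sign classes. This yields linear relations among the class masses $S_{0\pm}=\sum_{I_{0\pm}}|u_i|$ and $S_{1\pm}=\sum_{I_{1\pm}}|u_i|$. Bounding every coefficient by $\beta_0$ within blocks and by $[\alpha_1,\beta_1]$ across blocks should turn the comparison into an inequality of the form
\[
\alpha_1-\beta_0 \;>\; g(m_{0-},m_{0+},m_{1-},m_{1+}).
\]
The denominator $m_{0+}m_{1+}+m_{1-}m_{0-}$ counts the rewarded cross pairs; the numerator $\beta_0(m_{0-}-m_{1+})(m_{0+}-m_{1-})$ is the excess of penalized within-block pairs after cancellation. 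Taking the worst case over all admissible splits gives condition (1), $\alpha_1>\beta_0+M$. Doing this reduction cleanly, so that exactly $g$ appears, is the step I expect to be the main obstacle. I would first try the extremal case in which all $|u_i|$ in a sign class are equal, then use a rearrangement or convexity argument to show this case is worst.

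\textbf{Step 3 (degenerate patterns and zero entries).} The range $1\le m_{0-}\le m_0-1$, $1\le m_{1+}\le m_1-1$ in $g$ excludes the cases where a whole block has a single sign, for example $u$ nonnegative on all of $I_0\cup I_1$. These cases must be ruled out separately, together with the possibility $u_i=0$.
\begin{itemize}
\item \emph{Single-sign patterns.} Here I would use condition (2), $\alpha_1>\max\{\eta,\eta^{-1}\}\beta_0$. Test the Rayleigh quotient on $w=(\mathbf 1_{m_0}/m_0,\,-\mathbf 1_{m_1}/m_1)$, suitably normalized. Since $w$ is orthogonal to $\mathbf 1$ in a weighted sense, this shows that a one-signed vector cannot attain $\lambda_m$, because its quotient would exceed that of $w$. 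The balance ratio $\eta$ enters through the normalization.
\item \emph{Zero entries.} If $u_i=0$, the eigen-equation at row $i$ gives $0=\sum_{j\in I_0}D_{ij}u_j+\sum_{j\in I_1}R_{ij}u_j$. Combined with the sign structure already established on the other coordinates and the gap $\alpha_1>\beta_0$, this forces a contradiction, so all entries of $u$ are nonzero.
\end{itemize}
Combining Steps 1–3, $u$ has the sign pattern $(+,\dots,+,-,\dots,-)$ up to a global sign, so $\text{NHC}_{\text{D}}$ recovers the root bipartition.
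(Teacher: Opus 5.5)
Your Step~1 is the paper's device: flip the wrong-signed sets ($I_-$ and $J_+$) and compare Rayleigh quotients. The paper then bounds $\sum_{i\in I_+,j\in I_-}|u_i|D_{ij}|u_j|$ above by $\beta_0 m_{0+}m_{0-}$, and $\sum_{i\in I_+,j\in J_+}|u_i|D_{ij}|u_j|$ below by $\alpha_1 m_{0+}m_{1+}$. In effect it treats all $|u_i|$ as equal. That is the only source of $g$, and of the ratio $m_{0+}/m_1\le\eta$ behind condition~(2).

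You are right to call this unjustified. But your Step~2 only hopes that equal magnitudes are the worst case, so the decisive inequality is missing, and the paper does not supply it either. What the flip actually requires is a comparison of $\ell_1$-masses of the sign classes, e.g.\ $\beta_0 S_{0+}<\alpha_1 S_{1-}$ when only block~$0$ is mixed. Assumptions~\ref{A1}--\ref{A3} do not tie these masses to the counts $m_{0\pm},m_{1\pm}$.

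Step~3 has two further problems. Your zero-entry argument needs the same kind of mass comparison. Your test vector only rules out vectors of one sign throughout, which is trivial since then $u^\top Du\ge 0>\lambda_m(D)$. It says nothing about one block mixed and the other uniformly signed, which is exactly the case condition~(2) is meant to cover.

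In fact the gap cannot be closed, because the statement is false. Take the following configuration:
\begin{itemize}
\item Sizes $m_0=m_1=5$.
\item Block~$0$ is one network $a$ plus a cluster $B$ of four identical networks, with $D_{aB}=1$.
\item Block~$1$ is $y_1,\dots,y_5$ with $D_{y_1y_2}=D_{y_3y_4}=D_{y_3y_5}=1-\tau$, $D_{y_4y_5}=1-2\tau$, and all other within-block distances equal to $1$.
\item Every cross distance is $1+\epsilon$.
\end{itemize}
For small $\epsilon,\tau>0$ the seven distinct points are a small perturbation of a regular simplex, hence Euclidean. After a common rescaling, which does not affect (1)--(2), $D$ is a normalized population distance matrix. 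It is an HDM with $\beta_0=1$, $\alpha_1=1+\epsilon$ and $\eta=1$. Also $M=0$, because for $m_0=m_1$ the numerator of $g$ equals $-\beta_0(m_{0-}-m_{1+})^2$. So (1) and (2) both hold.

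At $\epsilon=\tau=0$, the vector $u$ equal to $1$ on $B$ and $-\frac{1}{2}$ elsewhere satisfies $Du=-3u$. The spectrum of $D$ is $8,0,-1,-3$ with multiplicities $1,3,5,1$, so $-3$ is simple with gap $2$. By continuity, for small $\epsilon,\tau$ the minimal eigenvector is still positive on $B$ and negative on $a$ and on block~$1$. Hence $\text{NHC}_{\text{D}}$ puts $a$ on the wrong side.

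In your notation $S_{0+}=4$ and $S_{1-}=5/2$, so $\beta_0 S_{0+}>\alpha_1 S_{1-}$, whereas the count-based bound compares $4\beta_0$ with $5\alpha_1$. A correct version needs an extra hypothesis controlling how unevenly the minimal eigenvector can split its mass between the blocks, for instance through the spread of the row sums of $D$. Block sizes alone are not enough.
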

	
	The following corollary extends this result recursively to recover the full hierarchical topology.
	\begin{corollary}
		Under the assumptions of Theorem \ref{theorem D}, suppose that the analogous separation conditions are satisfied at every recursive split. 
		Then, by recursively applying Theorem \ref{theorem D} to each newly generated sub-cluster, $\text{NHC}_{\text{D}}$ correctly recovers every internal split. 
		Furthermore, if the splitting process correctly stops at all terminal clusters, then $\text{NHC}_{\text{D}}$ exactly recovers the underlying hierarchical topology.
	\end{corollary}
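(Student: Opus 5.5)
The plan is a structural induction over the internal nodes of the latent tree, processed top-down, with Theorem~\ref{theorem D} used as the one-step engine. The key point is that the restriction of the HDM $D$ to the networks carrying a prefix $x$ is exactly the submatrix $D^{[x]}$. By the recursive hierarchy clause (ii) of Definition~\ref{def:HM}, this submatrix is again a Hierarchical Matrix. It is also still of the form \eqref{p-dis}: it is the normalized Frobenius distance matrix of the networks whose labels begin with $x$. So $D^{[x]}$ is itself an HDM, and its root split is the split $x \to (x0, x1)$ of the original tree. By hypothesis, Assumptions \ref{A1}--\ref{A3} and the separation conditions (1)--(2) hold at $x$ with the local constants $\alpha_0^x,\alpha_1^x,\beta_0^x,\beta_1^x,\eta^x$ and the corresponding $M^x$. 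Hence $D^{[x]}$ satisfies every hypothesis of Theorem~\ref{theorem D}, with $D,m,\alpha_1,\beta_0,\eta,M$ replaced by their superscripted versions.

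The induction is on the depth $\ell$ of the node being split.
\begin{enumerate}[label=(\alph*)]
\item \emph{Base case} ($\ell=0$). Theorem~\ref{theorem D} applied to $D$ shows that the sign pattern of the minimal eigenvector separates the networks of branch $0$ from those of branch $1$. Relabeling the two children if necessary, we have $c_r = a_1$ for every network $r$.
\item \emph{Inductive hypothesis.} Every internal node of depth less than $\ell$ has been split correctly. Then the set of networks handed to the recursive call at a depth-$\ell$ internal node $x$ is exactly the set of networks whose labels begin with $x$. The input matrix to $\text{NHC}_{\text{D}}$ at that call is therefore precisely $D^{[x]}$, and the paragraph above shows that Theorem~\ref{theorem D} applies to it.
\item \emph{Inductive step.} Theorem~\ref{theorem D} then gives exact recovery of the split $x \to (x0,x1)$, up to the left/right orientation.
\end{enumerate}
Since the tree has finite depth $\mathcal{L}$, this proves that every internal split is recovered.

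For the final claim, suppose the stopping rule fires exactly at the terminal clusters. Then the recursion tree produced by the algorithm has the same internal nodes as the latent tree, and each of its leaves carries exactly one population's networks. Composing the recovered bits along each root-to-leaf path therefore gives each network's label $x_{k_r}$, up to swapping children at each internal node. This means the estimated labels induce the same tree distances $d_{\mathcal{T}}$ from \eqref{eq:tree-dist}. Any depth imbalance is handled automatically, because the recursion simply stops at the leaves.

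The main obstacle is verifying carefully that the hypotheses of Theorem~\ref{theorem D} really transfer to the subproblem at $x$. First, Assumption~\ref{A2} must refer to the within-child blocks $D^{[x0]},D^{[x1]}$ of $D^{[x]}$ and not to the global blocks. Second, $M^x$ must be well defined. When a child has size one, or is a single population with all within-child distances equal to zero, the index range in the maximum defining $M$ is empty or $\beta_0^x=0$. I would treat these degenerate cases directly. If $\beta_0^x=0$, then $D^{[x]}$ is a two-block constant matrix. Its minimal eigenvector can be computed explicitly: the eigenvalue is $-\alpha\sqrt{m_{x0}m_{x1}}$, with sign pattern $(\sqrt{m_{x1}}\mathbf{1},-\sqrt{m_{x0}}\mathbf{1})$. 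This gives the split without appeal to $M^x$. A singleton child is handled in the same way.
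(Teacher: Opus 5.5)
Your proposal is correct and is essentially the paper's own argument. The paper gives no separate proof: the corollary is just Theorem~\ref{theorem D} applied recursively to each submatrix $D^{[x]}$, which is again an HDM by clause (ii) of Definition~\ref{def:HM}. Your induction on depth and the bookkeeping of child orientation and tree distances simply make that recursion explicit.

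One small correction concerns your degenerate-case remark. A singleton child whose sibling is internal does not give a two-block constant matrix, so your explicit eigenvector computation does not cover it. There, the maximum defining $M^x$ is over an empty set. The natural fix is to take $M^x=0$ by convention, because the proof of Theorem~\ref{theorem D} uses $M$ only when both children contain entries of both signs, which a singleton child rules out; the paper itself passes over this case silently.
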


	The two conditions in Theorem \ref{theorem D} provide both an absolute and a relative threshold on the signal strength required for exact recovery of a bipartition.  
	The first condition means the absolute gap $\delta~=~\alpha_1 - \beta_0$ must overcome a penalty $M \ge 0$ arising from cluster size imbalance. For balanced clusters, $M = 0$, any positive gap suffices; for unbalanced clusters, $M > 0$, a larger separation between within- and between-cluster connectivity is needed. This theoretically explains why classic spectral methods naturally favor balanced partitions and struggle with unbalanced ones.
	The second relative threshold $\alpha_1 > \max\left\{\eta, 1 / \eta \right\} \beta_0$ links the required signal strength to cluster size imbalance. For balanced clusters where $\eta = 1$, it reduces to $\alpha_1 > \beta_0$, implied by Assumptions \ref{A1} and \ref{A2}. For unbalanced clusters, within- and between-cluster connectivity needs to be distinguishable enough to compensate for the structural disparity in subset sizes.
	
	Since the true matrix $D$ is generally unavailable in practice, theoretical guarantees must be extended to the estimated matrix $\hat D$. In this paper, we compute it from the link probability matrix estimated via the NBS method. Before establishing the $\text{NHC}_{\text{L}}$ framework when $D$ is replaced by $\hat D$ in a single split setting, we introduce the necessary regularity assumptions.

			\begin{assumption}[Estimation error] \label{asum 7}
				Let $\{\hat P^{(i)}\}_{i = 1}^m$ be the estimators of the true link probability matrices $\{P^{(i)}\}_{i = 1}^m$. We assume that for any $\varepsilon > 0$, there exists an error rate $\zeta$ such that
				\begin{equation*}
					P \left( \max_{1 \leq i \leq m} \frac{1}{n} \|\hat P^{(i)} - P^{(i)}\|_F \leq \zeta \right) \geq 1 - n^{-\varepsilon} \enskip,
				\end{equation*}
				where $\zeta$ depends on both $n$ and $\varepsilon$.
			\end{assumption}
			
			\begin{assumption}[Separation margin] \label{asum 4} 
				Define $\Delta := \alpha_1 - \frac{1}{1 + \tilde \eta} (\beta_0 + \tilde \eta \beta_1)$. Assume that $\Delta > 0$ and  $\Delta = \omega(\zeta)$ as $n \to \infty$. 
			\end{assumption}
			
			%\begin{assumption}[Separation] \label{asum 4} 
			%	$\Delta := \alpha_1 - (\beta_0 + \tilde \eta \beta_1) / (1 + \tilde \eta) = \omega \big( \sqrt{\mathstrut n \log n} / n \big)$, where $\Delta > 0$. 
			%\end{assumption}
			
			\begin{assumption}[Bounded between-cluster variability]
				$\beta_1 = O(\Delta)$. \label{asum 5}
			\end{assumption}
			
			\begin{assumption}[Eigenvalue-degree condition]
				$\lambda_1(L_D) > \deg_i(D)$ for all $i$.  \label{asum 6}
			\end{assumption}
			
			%\begin{assumption}[Graphon estimation error] \label{asum 7}
			%	Let $\{\hat P^{(i)}\}_{i = 1}^m$ be the estimated link probability matrices of $\{P^{(i)}\}_{i = 1}^m$ applying NBS method.  $D$ and $\hat D$ are the true population distance matrix  and the estimation built from $\{\hat P^{(i)}\}_{i = 1}^m$. %Consequently, \eqref{NBS} and Lemma \ref{lemma DK} hold. \label{asum 7}
			%\end{assumption}
			
			The following theorem guarantees the consistency of $\text{NHC}_{\text{L}}$ framework under these settings.
			
			\begin{theorem}\label{theorem sign}
				For a normalized population distance matrix $D \in \mathbb{R}^{m \times m}$, suppose that $D$ is an HDM and satisfies Assumptions \ref{A1}--\ref{asum 6}.
				%\begin{enumerate}[label = \textbf{(A\arabic*)}, start = 4]
				%	\item \textbf{Separation:} $\alpha_1 - \frac{1}{1 + \tilde \eta} (\beta_0 + \tilde \eta \beta_1) > 0$ defined in Lemma \ref{lemma HDM} satisfies $\Delta = \omega(\frac{ \sqrt{\mathstrut n \log n}}{n} ) \enskip.$  \label{asum 4} 
				%	\item \textbf{Bounded between-cluster variability:} $\beta_1 = O(\Delta) \enskip.$ \label{asum 5}
				%	\item \textbf{Eigenvalue-degree condition:} $\lambda_1(L_D) > \deg_i(D)$ for all $i$ \enskip.  \label{asum 6}
				%	\item \textbf{Graphon estimation error:} Let $\{\hat P^{(i)}\}_{i = 1}^m$ be the estimated link probability matrices of $\{P^{(i)}\}_{i = 1}^m$ applying NBS method from \citet{zhang2017estimating}.  $D$ and $\hat D$ are the true population distance matrix  and the estimation built from $\{\hat P^{(i)}\}_{i = 1}^m$. Consequently, \eqref{NBS} and Lemma \ref{lemma DK} hold. \label{asum 7}
				%\end{enumerate}
				Let $u$ and $\hat u$ be the dominant unit eigenvector of $L_D$ and $L_{\hat D}$, respectively. Then, for any $\varepsilon > 0$  and sufficiently large enough $n$, we have % and $m$,
				\begin{equation}\label{sign}
					P(\text{sign}(\hat u) = \text{sign}(u)) \geq (1 - n^{- \varepsilon})^m \enskip. 
				\end{equation}		
				In particular, if there exists $\varepsilon > 0$ such that $m / n^\varepsilon = o(1)$ as $n \to \infty$, where $m$ may either remain fixed or grow with $n$, Method $\text{NHC}_{\text{L}}$ asymptotically recovers the sign pattern, i.e., $\text{sign}(\hat u)~=~\text{sign}(u)$ with probability tending to 1.
			\end{theorem}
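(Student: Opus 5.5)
We will prove Theorem~\ref{theorem sign} by perturbation: first show that the population eigenvector $u$ has entries bounded away from zero, then show that $\hat u$ lies entrywise within that margin of $u$ on a high-probability event. On that event the sign patterns must agree.

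\textbf{Step 1 (population eigenvector).} Order the networks so that $D$ has the root block form \eqref{eq:D}. From the proof of Theorem~\ref{theorem L_D}, the dominant eigenvector $u$ of $L_D$ is positive on one branch and negative on the other. We need a quantitative version of this.

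Write the eigen-equation entrywise:
\[
(\lambda_1(L_D)-\deg_i(D))\,u_i=-\sum_{j} D_{ij}u_j .
\]
Assumption~\ref{asum 6} makes the coefficient $\lambda_1(L_D)-\deg_i(D)$ strictly positive, so $u_i$ is a well-defined weighted combination of the other entries.

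Split the sum into within-branch terms (entries at most $\beta_0$) and cross-branch terms (entries at least $\alpha_1$, at most $\beta_1$). Using the size ratio $\tilde\eta$, we aim for a lower bound
\[
|u_i|\ \ge\ c\,\frac{\Delta}{\beta_1}\,\frac{1}{\sqrt m}
\]
for some constant $c>0$. The combination $\alpha_1-(\beta_0+\tilde\eta\beta_1)/(1+\tilde\eta)$ appearing in Assumption~\ref{asum 4} is exactly the weighted cross-minus-within gap that this computation produces. Assumption~\ref{asum 5} ($\beta_1=O(\Delta)$) keeps the ratio $\Delta/\beta_1$ bounded below, so the conclusion is $\min_i|u_i|\gtrsim m^{-1/2}$.

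\textbf{Step 2 (perturbation of $\hat D$).} Intersect over $i$ the events of Assumption~\ref{asum 7}, treating the estimators as independent across networks. This gives probability at least $(1-n^{-\varepsilon})^m$. On this event the triangle inequality for the Frobenius norm gives $|\hat D_{rs}-D_{rs}|\le 2\zeta$ for all $r,s$. Consequently:
\begin{itemize}
\item each row sum of $\hat D-D$ is at most $2m\zeta$;
\item $\|L_{\hat D}-L_D\|_2\le 4m\zeta$, using a row-sum (Gershgorin-type) bound on both the diagonal part and $\hat D-D$.
\end{itemize}

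\textbf{Step 3 (eigengap and entrywise control).} We need the gap $\lambda_1(L_D)-\lambda_2(L_D)$ to be of order $m\Delta$. The plan is to compare $L_D$ with a two-block idealized matrix, for which this gap is explicit, and control the difference with Weyl's inequality. The Davis--Kahan theorem (after choosing the sign of $\hat u$) then gives
\[
\|\hat u-u\|_2\lesssim \frac{m\zeta}{m\Delta}=\frac{\zeta}{\Delta}=o(1).
\]

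Converting this $\ell_2$ bound into the needed $\ell_\infty$ bound is the main obstacle, because an $\ell_2$ error of $o(1)$ does not by itself beat an entrywise margin of order $m^{-1/2}$. We will handle it by reusing the entrywise identity from Step 1 for $\hat u$:
\[
\hat u_i=-\frac{(\hat D\hat u)_i}{\lambda_1(L_{\hat D})-\deg_i(\hat D)} .
\]
Subtract the analogous identity for $u_i$ and bound the difference term by term:
\begin{itemize}
\item the denominators differ by $O(m\zeta)$ and stay of order $m\Delta$ by Assumption~\ref{asum 6} together with Weyl's inequality;
\item in the numerators, $|(\hat D-D)\hat u|_i\le 2\zeta\sqrt m$;
\item also in the numerators, $|D(\hat u-u)|_i\le \beta_1\sqrt m\,\|\hat u-u\|_2$.
\end{itemize}
Combining these gives
\[
|\hat u_i-u_i|\lesssim \frac{\zeta}{\Delta}\cdot\frac{1}{\sqrt m}\Bigl(1+\frac{\beta_1}{\Delta}\Bigr),
\]
which is $o(m^{-1/2})$ by Assumptions~\ref{asum 4} and~\ref{asum 5}.

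\textbf{Conclusion.} For $n$ large enough, the entrywise error from Step 3 is below the margin $\min_i|u_i|$ from Step 1. Hence $\operatorname{sign}(\hat u)=\operatorname{sign}(u)$ on an event of probability at least $(1-n^{-\varepsilon})^m$. If $m/n^{\varepsilon}=o(1)$, then $(1-n^{-\varepsilon})^m\ge 1-mn^{-\varepsilon}\to1$, which gives the final claim of the theorem.
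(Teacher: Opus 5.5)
Your architecture matches the paper's. You use the sign pattern from Theorem~\ref{theorem L_D} together with the entrywise eigen-equation to get a margin $\min_i|u_i|\ge c/\sqrt m$. You then use the bounds $|\hat D_{rs}-D_{rs}|\le 2\zeta$ and $\|L_{\hat D}-L_D\|_{\mathrm{op}}\le 4m\zeta$, apply Davis--Kahan with a gap of order $m\Delta$, and finish with the same identity for $e=\hat u-u$ that the paper splits into $\mathcal T_1,\dots,\mathcal T_4$. Your handling of the global sign of $\hat u$ is, if anything, more careful than the paper's.

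\textbf{The gap: the eigengap step.} As written, it does not go through. Two-sided Weyl against a block-constant $H$ only gives $\lambda_1(L_D)-\lambda_2(L_D)\ge\lambda_1(L_H)-\lambda_2(L_H)-2\|L_D-L_H\|_{\mathrm{op}}$. The difference $L_D-L_H$ contains the within-branch Laplacians $L_{D^{[0]}},L_{D^{[1]}}$, whose norms can be as large as $\max(m_0,m_1)\beta_0$. Nothing in Assumptions~\ref{A1}--\ref{asum 6} makes this small relative to $m\Delta$, and you do not exhibit an $H$ whose own gap compensates.

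With the natural choice the bound is vacuous. Take $m_0=m_1=m/2$, $R\equiv\alpha_1=\beta_1$, and split each branch into two equal sub-clusters at mutual distance $\beta_0=0.8\alpha_1$. Then $\Delta=0.1\alpha_1$ and $\beta_1=10\Delta$, and the true gap is exactly $m\Delta$. Now let $H$ have zero within-branch entries and cross entries $\alpha_1$. Then $\lambda_1(L_H)-\lambda_2(L_H)=\tfrac m2\alpha_1=5m\Delta$, while $\|L_D-L_H\|_{\mathrm{op}}=\lambda_1(L_{D^{[0]}})=\tfrac m2\beta_0=4m\Delta$. Weyl therefore returns $-3m\Delta$. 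Your Davis--Kahan bound, and through it the $D(\hat u-u)$ term of the entrywise step, rests on this gap, so the plan is incomplete exactly here.

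\textbf{The fix: a one-sided (Loewner) comparison.} This loses nothing. Let $H$ have within-branch off-diagonal entries $\beta_0$ and cross entries $\beta_1$. Then $0\le D\le H$ entrywise, so $L_H-L_D=L_{H-D}\succeq 0$. Courant--Fischer then gives $\lambda_2(L_D)\le\lambda_2(L_H)\le\max\{m_0\beta_0+m_1\beta_1,\,m_1\beta_0+m_0\beta_1\}=\frac{m}{1+\tilde\eta}(\beta_0+\tilde\eta\beta_1)$. The normalized two-block vector, used as a Rayleigh-quotient test vector, gives $\lambda_1(L_D)\ge m\alpha_1$. Together these give $\lambda_1(L_D)-\lambda_2(L_D)\ge m\Delta$.

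This is the paper's route through Lemmas~\ref{lemma spe}, \ref{lemma HCM}, \ref{lemma eigen range} and \ref{lemma HDM}. The paper uses a recursive constant matrix $H_\beta$, but the flat two-block one suffices.

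\textbf{Related smaller points.}
\begin{enumerate}
\item The same two facts, $\lambda_1(L_D)\ge m\alpha_1$ and $\max_i\deg_i(D)\le\frac{m}{1+\tilde\eta}(\beta_0+\tilde\eta\beta_1)$, are what make your denominators $\lambda_1(L_D)-\deg_i(D)$ of order $m\Delta$. Assumption~\ref{asum 6} alone only gives positivity.
\item In Step~1, the within/cross split yields $|(Du)_i|\ge(\alpha_1-\beta_0)J$ only after using $\mathbf 1^\top u=0$ to equate the positive and negative masses $J$. You then need a lower bound on $J$; the paper gets $J\ge\sqrt m\,\Delta/\beta_1$ from the upper entrywise bound. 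So $\Delta$ enters through the denominator and through $J$, not through the numerator split.
\item Assumption~\ref{asum 7} is already uniform over $i$, so no independence across networks is needed.
\end{enumerate}
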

			
			%\begin{remark}
			%Theorem \ref{theorem sign} shows that the spectral mechanism remains robust to the estimation error in $\hat{D}$. Since the NBS estimator incurs a normalized error rate of order $\sqrt{\log n / n}$,  Assumption \ref{asum 4} requires the graph-level cluster separation to dominate this noise level, ensuring that between-cluster distances remain distinguishable from within-cluster distances.   In particular, if the graph separation is bounded away from zero, so that $\Delta$ converges to a positive constant, then Assumption \ref{asum 4} holds automatically.
			%
			%\hfill $\square$
			%\end{remark}
			
			\begin{remark}
				Theorem \ref{theorem sign} guarantees the reliability of each recursive split by showing  that the empirical sign pattern $\text{sign}(\hat{u})$ obtained by $\text{NHC}_{\text{L}}$ consistently recovers the population pattern $\mathrm{sign}(u)$, thereby identifying the underlying bipartition asymptotically.	
				%Theorem \ref{theorem sign}
				This theorem does not rely on a specific construction of the link probability matrix estimators. Instead, it applies to any estimators satisfying the uniform error bound in Assumption \ref{asum 7}.
				Assumption \ref{asum 4} requires that the population-level cluster separation $\Delta$ dominates the estimation error level $\zeta$. Together, these conditions guarantee that the underlying hierarchical structure remains identifiable despite estimation uncertainty.
				In this paper, we employ the NBS-type estimator \citep{zhao2019change} for each link probability matrix. The NBS estimator satisfies
				\begin{equation*}
					\max_{1 \leq i \leq m} \frac{1}{ n}\|\hat P^{(i)} - P^{(i)} \|_F  \leq C_0  \sqrt{\frac{\log n} {n} }\enskip,
				\end{equation*}
				with probability at least $1-n^{-\varepsilon}$ for any $\varepsilon>0$, where $C_0$ is a positive global constant depending on $\varepsilon$ and another global constant $B_0 > 0$ but not on $n$ or $m$. Hence Assumption \ref{asum 7} holds with $\zeta = C_0 \sqrt{\frac{\log n} {n} }$. 				
				Under this specific rate, Assumption \ref{asum 4} reduces to $\Delta = \omega\big(\sqrt{\frac{\log n} {n} } \big)$. In particular, if the graph separation is bounded away from zero, so that $\Delta$ converges to a positive constant, this condition is automatically satisfied, and the recovery guarantee of Theorem \ref{theorem sign} follows.
			\end{remark}

			Building on the previous theoretical results, we next establish the consistency of $\text{NHC}_{\text{L}}\text{-TST}$ under the proposed two-sample testing stopping rule.
			
			\begin{theorem} \label{theorem consistency}
				Let $D \in \mathbb{R}^{m \times m}$ be a normalized population distance matrix.  Suppose that $D$ is an HDM and satisfies assumptions \ref{A1}--\ref{asum 6}. 
				%	Let $D$ be an HDM among $m$ networks satisfying assumptions \ref{A1}-\ref{asum 7}. 
				Assume the true underlying hierarchical tree $\Gamma$ has a finite maximum depth of $\mathcal L \geq 0$. We index the clusters at any layer $l \in \{0, 1, \dots, \mathcal L\}$ using binary strings $x \in \{0, 1\}^l$, adopting the convention that the root node corresponds to the empty string.  Let $\delta_x$ be the indicator variable for cluster $x$, where 
				\begin{equation*}
					\delta_x = 
					\begin{cases}
						0 & \text{if cluster $x$ is terminal with no further splitting}, \\
						1 & \text{if cluster $x$ is an internal node requiring bipartition}.
					\end{cases}		
				\end{equation*}
				%Additionally, assume that the cluster size $m_x$  satisfies $m_x = o(n^\varepsilon)$ as $n \to \infty$ for some  $\varepsilon > 0$, allowing $m_x$ to be either fixed or increase with $n$. Furthermore, suppose the nominal significance level $\alpha$ for the stopping rule dynamically satisfies $\alpha = o(2^{- \mathcal L})$.
				Additionally, let $\beta_x$ denote the Type II error rate of the splitting procedure for cluster $x$, and let $\alpha_n$ denote the nominal significance level of the stopping rule.
				
				Then, the $\text{NHC}_{\text{L}}\text{-TST}$ method recovers the true underlying structure $\Gamma$ exactly with probability at least 
				\begin{equation*}
					1 - \sum_{l = 0}^{\mathcal L} \sum_{\substack{x \in \{0,1\}^l \\ \delta_x = 1}} (m_x n^{- \varepsilon} +  \beta_x)
					- 2^{\mathcal L}  \alpha_n \enskip.
				\end{equation*}
				%where $\beta_{x}$ denotes the Type II error rate  of the splitting procedure to cluster $x$.
				
				%In particular, the probability of correct recovery will tend to 1 as $n \to \infty$ and $m_x = o(n^\varepsilon)$.
				In particular, with the nominal significance level chosen such that $\alpha_n = o(1)$, if the embedded two-sample test achieves asymptotic power one and the cluster size $m_x$ satisfies $m_x = o(n^\varepsilon)$ for some $\varepsilon > 0$, then correct recovery is achieved with probability tending to $1$ as $n \to \infty$.
			\end{theorem}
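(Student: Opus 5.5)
The proof is a union bound over the failure events of the recursion. Condition on the true tree $\Gamma$ and call the recursion \emph{successful} when three things hold. First, at every internal node $x$ (where $\delta_x=1$), the spectral step of $\text{NHC}_{\text{L}}$ returns the true bipartition $\{x0,x1\}$. Second, the two-sample test rejects at every internal node. Third, the test fails to reject at every terminal node (where $\delta_x=0$). On the intersection of these events the algorithm produces exactly $\Gamma$: by induction on depth, each cluster visited by the recursion is a true cluster $x$; it is split correctly if $\delta_x=1$ and stopped if $\delta_x=0$. This is just the corollary to Theorem \ref{theorem L_D} transplanted to the empirical setting. So
\[
P(\text{failure}) \le \sum_{x:\delta_x=1}\bigl(P(E^{\mathrm{spec}}_x)+P(E^{\mathrm{II}}_x)\bigr)+\sum_{x:\delta_x=0}P(E^{\mathrm{I}}_x),
\]
where each event is evaluated on the event that the recursion has correctly reached $x$.

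Next I bound each term.

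\textbf{Spectral term.} At an internal node $x$, the submatrix $D^{[x]}$ is itself an HM by the recursive hierarchy in Definition \ref{def:HM}, and the assumptions are imposed at every node through the superscript $x$. So Theorem \ref{theorem sign} applies to the $m_x$ networks in cluster $x$. Its proof goes through a per-network event $\frac1n\|\hat P^{(i)}-P^{(i)}\|_F\le\zeta$, each of probability at least $1-n^{-\varepsilon}$ by Assumption \ref{asum 7}. By Bernoulli's inequality, $(1-n^{-\varepsilon})^{m_x}\ge 1-m_xn^{-\varepsilon}$, so $P(E^{\mathrm{spec}}_x)\le m_xn^{-\varepsilon}$. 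The population sign pattern $\mathrm{sign}(u)$ equals the true split by Theorem \ref{theorem L_D}.

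\textbf{Type II term.} At an internal node, given the correct split, the two groups come from different populations. Failing to reject is a Type II error, with probability $\beta_x$ by definition.

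\textbf{Type I term.} At a terminal node, all networks share one $\mathcal P$, so the null holds. Whatever bipartition the spectral step produces, rejecting is a Type I error with probability at most $\alpha_n$. There are at most $2^{\mathcal L}$ terminal nodes in a binary tree of depth $\mathcal L$, so this sum is at most $2^{\mathcal L}\alpha_n$.

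Adding the three bounds gives the displayed probability.

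\textbf{Asymptotic claim.} The tree has finite depth, so the number of internal nodes is at most $2^{\mathcal L}-1$, which is fixed. Hence $\sum m_xn^{-\varepsilon}\to0$ when $m_x=o(n^\varepsilon)$, $\sum\beta_x\to0$ under power one, and $2^{\mathcal L}\alpha_n\to0$ when $\alpha_n=o(1)$.

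The main obstacle is the Type I bound at terminal nodes. The split fed to the test is data-dependent, so the test's nominal level need not hold conditionally on the split. I would first try to argue that, under the null, the test statistic's validity only requires the networks within a group to be i.i.d.\ from the common $\mathcal P$. Alternatively, I would appeal to sample splitting or independence between the estimates $\hat D$ and the statistic. Failing that, I would interpret $\alpha_n$ as the realized level of the stopping rule at each terminal node, which is how the theorem's statement implicitly treats it. A secondary subtlety is conditioning: the per-node events are evaluated given correct ancestry, and the union bound handles this because the intersection of all good events already implies correct ancestry.
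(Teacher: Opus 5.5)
Your proposal matches the paper's proof: the paper also splits failure into spectral-bipartition errors at internal nodes (bounded by $m_x n^{-\varepsilon}$ via Theorem~\ref{theorem sign} and Bernoulli's inequality), Type II errors $\beta_x$ at internal nodes, and Type I errors at the at most $2^{\mathcal L}$ terminal nodes, evaluates each local event conditional on correct ancestral splits, and applies a union bound. The data-dependence of the split fed to the test, which you flag, is not addressed in the paper; it simply takes $P(E_I(x)) \le \alpha_n$ at face value, exactly as in your fallback interpretation.
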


			Among the requirements for the recovery guarantees in Theorem \ref{theorem consistency}, two key conditions are particularly notable. First, the sub-group sizes satisfy $m_x = o(n^\varepsilon)$, which flexibly allows $m_x$ to be either fixed or diverging with $n$. Second, the embedded two-sample test is asymptotically powerful. 				
			Crucially, both conditions are automatically satisfied by incorporating the MNBS-based two-sample testing procedure of \citet{chen2024spectral} (Algorithm \ref{NHC-TST}, lines \ref{alg:est_2}--\ref{alg:test}). 
			On the one hand, their theoretical framework requires the sub-network size to be bounded by $O(n^{\alpha_u})$ for some constant $\alpha_u > 0$ to ensure asymptotic normality under the null hypothesis. Given our condition $m_x = o(n^\varepsilon)$ for some $\varepsilon > 0$, this requirement automatically holds by taking $ \alpha_u \geq \varepsilon$. 
			On the other hand, Theorem 3 of their work formally establishes the consistency of the corresponding test under mild regularity conditions, thereby guaranteeing the required asymptotic power. Consequently, adopting their procedure naturally validates the exact theoretical requirements for our asymptotic results.
			
			\section{Simulation study}
			\label{sec:sim}
			
			In this section, we evaluate the empirical performance of the two proposed variants of our \textbf{NHC-TST} method: the distance-based (\textbf{$\text{NHC}_{\text{D}}\text{-TST}$}) and Laplacian-based (\textbf{$\text{NHC}_{\text{L}}\text{-TST}$}) approaches. Throughout the simulations, the nominal significance level is set to $\alpha_n = 1/n$. We compare the proposed methods with four state-of-the-art competitors: \textbf{HICL-SBM} \citep{rebafka2024model}, the alternating minimization algorithm (\textbf{ALMA}) for mixed multilayer SBMs in \cite{fan2022alma}, \textbf{NCGE/NCLM} \citep{mukherjee2017clustering}, and \textbf{NSBM} \citep{josephs2023nested}.
			%The proposed methods are compared with several state-of-the-art competitors, whose details and parameter settings are described below.
			
			HICL-SBM is implemented using the \texttt{graphclust} R package with the maximum number of SBM blocks set to $Q_{\max}=\lceil\log n\rceil$. For ALMA and NCGE/NCLM, the true number of network clusters $K$ is provided. In SBM settings, we set a uniform community structure across all clusters and supply ALMA with the true number of node communities $Q$.
			Following \citet{fan2022alma}, we select $Q$ from $\{2, 3, 4\}$ in non-SBM settings by applying the elbow method based on the mean squared error (MSE) criterion.
			%In non-SBM settings, $Q$ is selected from $\{2,3,4\}$ using the elbow method based on the mean squared error criterion, following \citet{fan2022alma}. 
			For NCLM, the maximum moment order is fixed at $10$. For NSBM, to facilitate posterior sampling via finite truncation approximations, we set the maximum truncation levels for the network classes and communities of nodes within class to $K_{\mathrm{trunc}}=Q_{\mathrm{trunc}}=10$, and the Gibbs sampler is run for $200$ iterations.

			In our simulations, we assume a balanced design where the cluster sizes are set to be equal, with $m_k = m^*$ for all $k = 1, \ldots, K$, where $m^* \in \{5, 10, 15, 20\}$.
			The number of nodes for each network varies over $n \in \{100, 150, 200, 300\}$. We evaluate the competing methods comprehensively across three key dimensions: 
			\begin{enumerate}[label=(\arabic*)]
				\item \textbf{Partitioning accuracy}: We evaluate the clustering results against the ground-truth labels using normalized mutual information (NMI) \citep{strehl2002cluster}. Additionally, we report the estimated number of clusters for each method to assess model-selection behavior. Note that ALMA and NCGE/NCLM are excluded from this comparison of cluster-number estimation, since they take the true $K$ as input, while other methods learn the number of network clusters automatically.
				\item \textbf{Computational efficiency}:  We track and report the $\log_{10}$-transformed total execution time (in seconds) for each method to complete the clustering task.
				\item \textbf{Hierarchical topology recovery}: To further assess the recovery of the underlying hierarchical structure, we additionally measure performance using a novel adaptation of the Cophenetic Correlation Coefficient (CPCC), which we detail below.
			\end{enumerate}
			
			While the original CPCC \citep{sokal1962comparison} evaluates an estimated dendrogram by correlating its cophenetic distances with the initial pairwise data distances, it is a surrogate necessary when the true underlying tree structure is unavailable. However, since our simulation setting provides the ground-truth hierarchy, we propose an adapted CPCC that directly correlates the estimated structural distances with the true ones.
			
			Specifically, the true topological distance between any pair of sample networks is measured using the tree distance defined in \eqref{eq:tree-dist}, reflecting the hierarchical depth at which the splits occur.	
			To ensure a fair comparison, the estimated pairwise distances are computed differently based on each algorithm's output. For our NHC-TST variants, the estimated distances are computed using the  same tree distance formulation \eqref{eq:tree-dist} on the estimated hierarchy. In contrast, for HICL-SBM, the estimated distances are represented by standard cophenetic distances derived from its output dendrogram. The adapted CPCC is then calculated as the Pearson correlation coefficient between the flattened true and estimated distance matrices. 
			Because ALMA, NCGE, NCLM, and NSBM output flat clusterings, we compute the adapted CPCC only for our proposed methods ($\text{NHC}_{\text{L}}\text{-TST}$, $\text{NHC}_{\text{D}}\text{-TST}$) and HICL-SBM.
			Across all evaluated methods, an adapted CPCC score approaching $1$ indicates a highly accurate topological reconstruction.
			
			\begin{remark}
				Notably, while our method, like NCGE, utilizes distance between the underlying graphons, it differs in how the number of clusters is determined. Instead of assuming a fixed number of clusters as in standard spectral clustering and NCGE, we integrate a two-sample test into the hierarchical procedure. This provides a data-driven stopping criterion that automatically determines the appropriate number of clusters.
				
				In addition, our NHC-TST variants generate a top-down tree topology where splits at each hierarchical level are based on a two-sample test and recovery of the underlying hierarchy is provably consistent.
				In contrast, HICL-SBM employs a hierarchical agglomerative algorithm that naturally produces a dendrogram.
				However, this ``inferred" tree primarily serves as an algorithmic trace of its greedy optimization toward a flat clustering.
				That is, its bottom-up agglomeration intrinsically forces binary merges, which can result in a highly fragmented hierarchy where intermediate levels lack explicit statistical validation, and hence no meaningful relation the latent hierarchy.
				%
				%\hfill $\square$
			\end{remark}

			\subsection{Hierarchical smooth graphons}
			In the first example, we consider a general structure of graphons. We adopt the standard definition of a graphon $f$  as follows:
			
			\begin{definition}[Graphon \citep{zhang2017estimating}]
				For any network with a link probability $P$ and number of nodes $n$, there exists a function $f: [0, 1] \times [0,1] \to [0, 1]$ and a set of i.i.d. random variables $\xi_i \sim \text{Uniform}[0, 1]$, such that 
				$P_{ij} = f(\xi_i, \xi_j), \quad i,j = 1, \ldots, n.$
				%with $i,j = 1, \ldots, n$.
			\end{definition}
			
			To simulate a network with a hierarchical structure, we set the depth of the hierarchy to $\cL = 2$, with four distinct clusters $00,\; 01,\; 10,\; 11$, i.e., $K = 4$. Consequently, each cluster is identified by a binary label  $x=a_1 a_2$, where $a_{\l} \in \{0, 1 \}$ for $\ell = 1,2$.
			Given a cluster label $x$, we construct a hierarchical smooth graphon utilizing sign variables $\mathcal{s}_\ell(x)\in\{\pm1\}$ defined as
			\begin{equation}\label{s(x)}
				\mathcal{s}_\ell(x) =
				\begin{cases}
					+1, & a_\ell=1,\\
					-1, & a_\ell=0.
				\end{cases}
			\end{equation}
			We model cluster with binary label $x$ by the following smooth graphon:
			\begin{equation*}
				G_x(u,v) = \mu + w_1 \mathcal{s}_1(x)\phi_1(u,v) + w_2 \mathcal{s}_2(x)\phi_2(u,v),
			\end{equation*}
			where $\mu = 0.5, w_1 = 0.25, w_2 = 0.12, u  \sim \text{Uniform}[0, 1]$, and $v \sim \text{Uniform}[0, 1]$. The basic functions $\phi_1(u,v)$ and  $\phi_2(u,v)$ are defined as:
			\begin{align*}
				\phi_1(u,v)  =  u v, \quad
				\phi_2(u,v)  = \sin(2  \pi  u)  \sin(2  \pi  v).
			\end{align*}
			
			\begin{figure}[!htb]
				\centering
				\includegraphics[width=\linewidth]{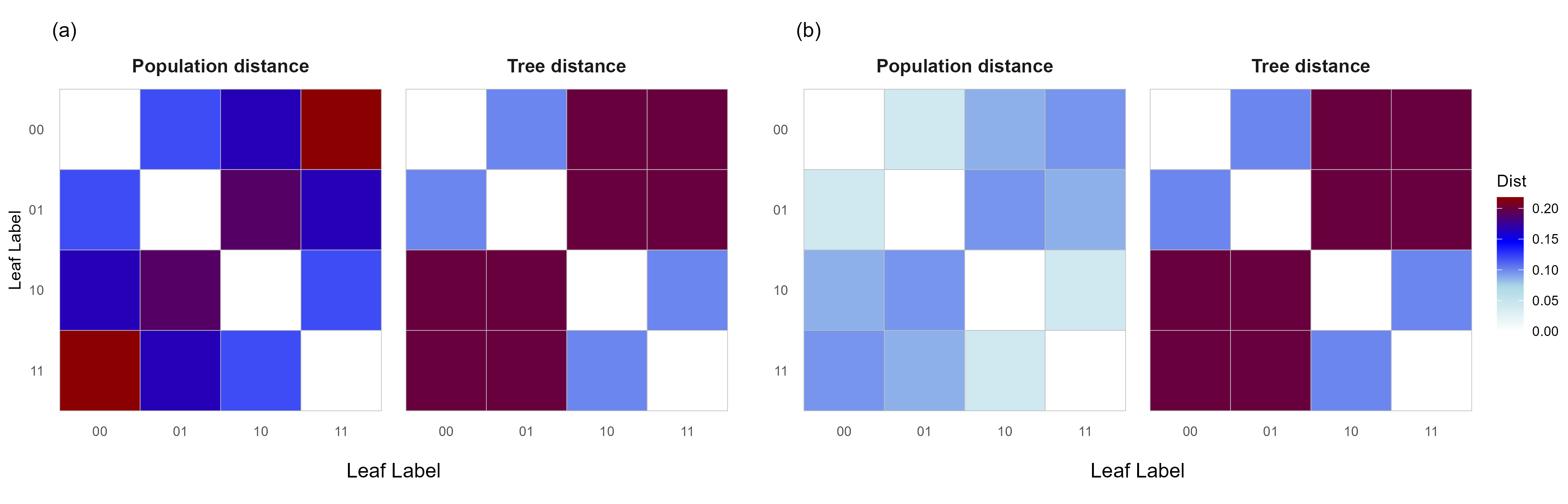}	
				\caption{Hierarchical monotonicity heatmaps (dense regime, $n = 150$). (a) Hierarchical smooth graphons. (b) Hierarchical SBMs. Within each subfigure, the left panel presents the block-level population distance matrix $\mathcal{D}$, and the right panel shows the tree distance matrix for clusters. }
%				\small \textbf{Alt text:} Heatmaps (dense regime, n = 150) comparing (a) Hierarchical smooth graphons and (b) Hierarchical SBMs. For each model, the left panel displays block-level population distance matrix, while the right panel exhibits tree distance matrix for the four clusters. In both models, the matrices exhibit consistent block-diagonal monotonicity.
				\label{fig:monotonicity}
			\end{figure}
			
			We first show that the population distance matrix $D = Z \mathcal{D} Z^\top$ in this simulation is an HDM according to Definition \ref{def:HDM}. Specifically, for any two clusters $k$ and $k'$ with corresponding binary labels $x$ and $x'$, the block-level distance is given by $\mathcal{D}_{k, k'} = \frac{1}{n} \|G_x - G_{x'}\|_F$. This core matrix $\mathcal{D}$ uniquely determines the topology of $D$. Figure~\ref{fig:monotonicity}(a) visualizes $\mathcal{D}$ (left panel) and compares it with the tree distance defined in \eqref{eq:tree-dist} (right panel). As shown, whether viewed globally or within child sub-matrices (e.g., $\{00, 01\}$ or $\{10, 11\}$), the diagonal block distances are  strictly smaller than the off-diagonal ones. This structural alignment confirms that $\mathcal{D}$ captures the ideal tree topology. Consequently, the fully expanded node-level matrix $D$ inherits this HDM property.
			%\begin{figure}[!htb]
			%	\centering
			%	\includegraphics[width=\linewidth]{Figs/monotonicity-graphon.png}	
			%	\caption{Hierarchical monotonicity heatmaps (dense regime, $n = 150, m^* = 10$). Left: Block-level population distance matrix $\mathcal{D}$ from graphon parameters. Right: Tree distance matrix for clusters. }
			%	\small \textbf{Alt text:} Side-by-side heatmaps for a dense regime where n = 150 and m-star = 10. The heatmaps use color gradients to indicate the hierarchical structure.  The left panel displays block-level population distance matrix  from graphon parameters, while the right panel exhibits tree distance matrix for the four clusters. The block-diagonal patterns in both matrices share the same monotonicity.
			%	\label{fig:monotonicity-graphon}
			%\end{figure}
			
			The clustering results for all methods are displayed in the 1st and 3rd rows in  Figure~\ref{fig:graphon-res}.
			To further investigate the performance of the clustering algorithms for sparser networks, we employ the same setting, but with the coefficients   $\mu$ and $\{w_1, w_2\}$ scaled by $\rho_1 = 6 \log(n) / n$  and $\rho_2 = 18 \log(n) / n$, respectively. The corresponding results are illustrated in the 2nd and 4th rows in Figure~\ref{fig:graphon-res}. 
			
			\begin{figure}[!htb]
				\centering
				\includegraphics[width=\linewidth]{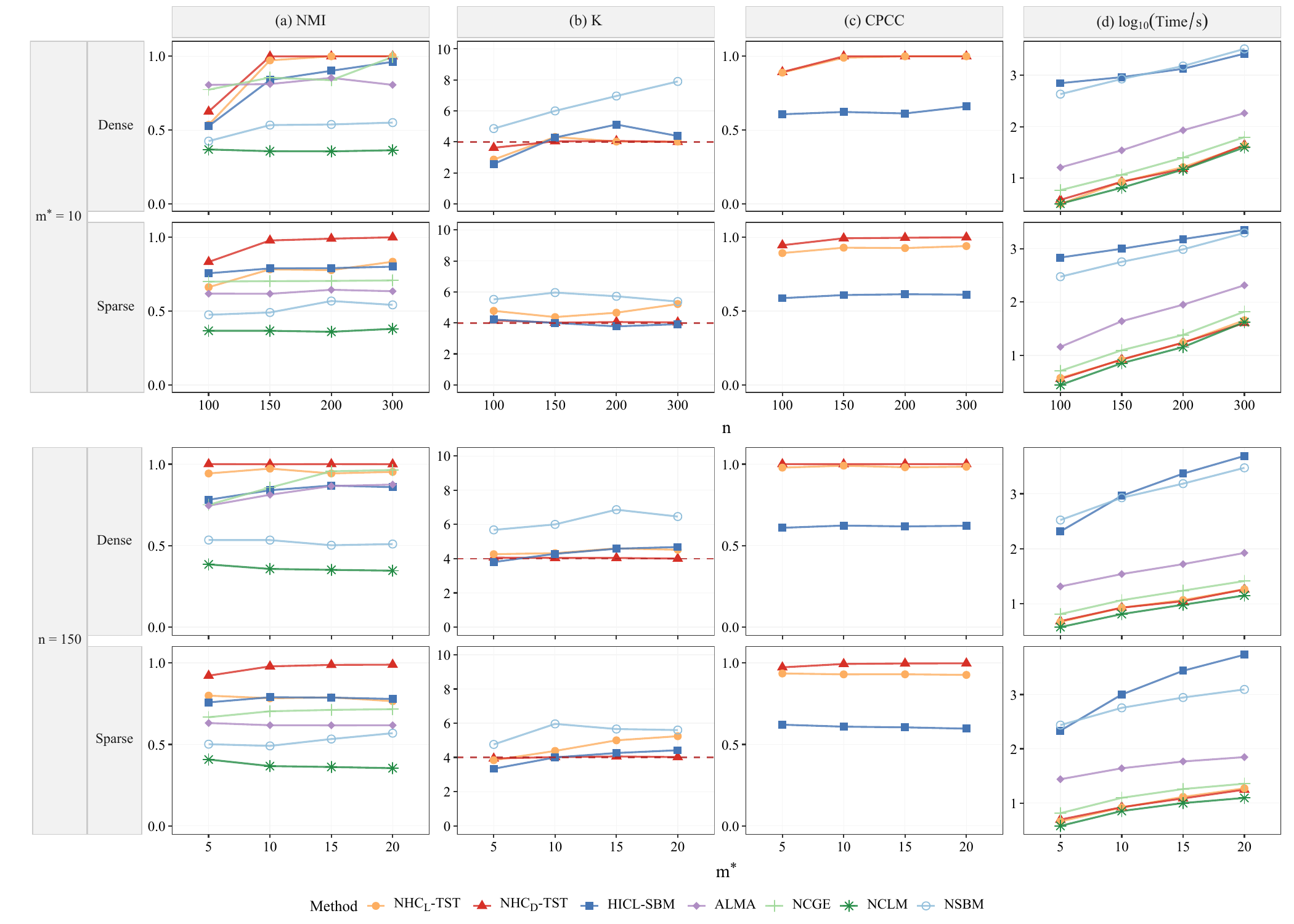}	
				\caption{Behavior comparison when the underlying networks are generated from hierarchical smooth graphons. Panels report (a)  NMI; (b) the estimated total number of clusters; (c) CPCC for $\text{NHC}_{\text{L}}\text{-TST}$, $\text{NHC}_{\text{D}}\text{-TST}$, and HICL-SBM; and (d) log transformed running time (in seconds).}
%				\small \textbf{Alt text:} A grid of line plots evaluating various clustering methods on networks generated from hierarchical smooth graphons. The plots depict NMI, estimated cluster number K, CPCC, and log 10 running time under dense and sparse conditions  as the number of networks or nodes number varies.
				\label{fig:graphon-res}
			\end{figure}
			
			Regarding clustering accuracy NMI and the estimated number of clusters $K$, the proposed NHC-TST framework exhibits robust and superior performance. In dense settings, both $\text{NHC}_{\text{D}}\text{-TST}$ and $\text{NHC}_{\text{L}}\text{-TST}$  clearly dominate, achieving high NMI and closely recovering the true $K=4$. Under sparsity, $\text{NHC}_{\text{D}}\text{-TST}$ maintains this significant advantage, while $\text{NHC}_{\text{L}}\text{-TST}$ and HICL-SBM perform comparably, remaining highly competitive.
			In extreme low-information scenarios ($n~=~100$ with $m^*=10$), ALMA, NCGE, and HICL-SBM hold a marginal lead. However, NHC-TST rapidly surpasses them as the number of nodes $n$ increases. 
			Furthermore, both NHC-TST variants capture the underlying hierarchical topology. They consistently yield CPCC scores approaching $1$ across all settings, maintaining a substantial advantage over HICL-SBM.
			Computationally, the NHC-TST framework is highly efficient, ranking second only to NCLM. Notably, while HICL-SBM yields respectable accuracy in some sparse conditions, it incurs the highest computational cost. In contrast, NHC-TST achieves superior structural recovery and competitive accuracy with striking computational efficiency.
			%comparable or better clustering accuracy while running substantially faster.
			
			\subsection{Hierarchical SBMs}
			In the second simulation, we consider a hierarchical SBM generated by adding signed perturbations to a baseline 3-block SBM. Specifically, the baseline connectivity matrix $\mathcal{B}_0$ has equal block proportions.
			We further define two perturbation matrices, $\mathcal{B}_1$ and $\mathcal{B}_2$, defined as follows:
			\[
			\mathcal{B}_0=
			\begin{pmatrix}
				0.35 & 0.15 & 0.08 \\
				0.15 & 0.15 & 0.01 \\
				0.08 & 0.01 & 0.60
			\end{pmatrix}, \quad
			\mathcal{B}_1=
			\begin{pmatrix}
				b_1 & 0 & 0 \\
				0 & 0 & 0 \\
				0 & 0 & 0
			\end{pmatrix}, \quad
			\mathcal{B}_2=
			\begin{pmatrix}
				0 & 0 & 0 \\
				0 & b_2 & 0 \\
				0 & 0 & 0
			\end{pmatrix}.
			\]
			
			We set the perturbation parameters to $(b_1,b_2) = (0.13,0.062)$. Given a cluster with binary label $x$, the cluster-specific SBM connectivity matrix is defined as: 
			\begin{equation*}
				\mathcal{B}_x = \mathcal{B}_0 + \mathcal{s}_1(x)\mathcal{B}_1 + \mathcal{s}_2(x)\mathcal{B}_2 \enskip,
			\end{equation*}
			where the sign functions $\mathcal{s}_\ell(x)$ for $\ell\in\{1,2\}$ are defined in~\eqref{s(x)}.
			We set $\cL = 2$ and generate networks from four distinct clusters:  $00,\; 01,\; 10,\; 11$, i.e., $K = 4$.
			
			The population distance matrix again forms an HDM. Similarly with the first simulation, replacing graphon blocks with SBM blocks yields $\mathcal{D}_{k, k'} = \frac{1}{n} \|\mathcal{B}_x - \mathcal{B}_{x'}\|_F$. Figure~\ref{fig:monotonicity}(b) shows that the resulting core matrix $\mathcal{D}$ remains strictly monotone and consistent with the ideal tree topology, confirming that the HDM property also holds in this SBM case.
			%\begin{figure}[!htb]
			%	\centering
			%	\includegraphics[width=\linewidth]{Figs/monotonicity-SBM.png}	
			%	\caption{Hierarchical monotonicity heatmaps (dense regime, $n = 150, m^* = 10$). Left: Block-level population distance matrix $\mathcal{D}$ from SBM parameters. Right: Tree distance matrix for clusters.}
			%	\small \textbf{Alt text:} Side-by-side heatmaps for a dense regime where n = 150 and m-star = 10. The heatmaps use color gradients to indicate the hierarchical structure. The left panel displays block-level population distance matrix from SBM parameters, while the right panel exhibits tree distance matrix for the four clusters. The block-diagonal patterns in both matrices share the same monotonicity.
			%	\label{fig:monotonicity-SBM}
			%\end{figure}		
			
			As before, we employ the same setting, but with $\mathcal{B}_x$ scaled by $\rho_2 = 18 \log(n) / n$ to assess the performance for sparser networks.
			Figure~\ref{fig:SBM-res} summarizes these results. The outcomes for the dense setting are displayed in the 1st and 3rd rows, while the results for sparser networks are presented in the 2nd and 4th rows.
			\begin{figure}[htbp]
				\centering
				\includegraphics[width=\linewidth]{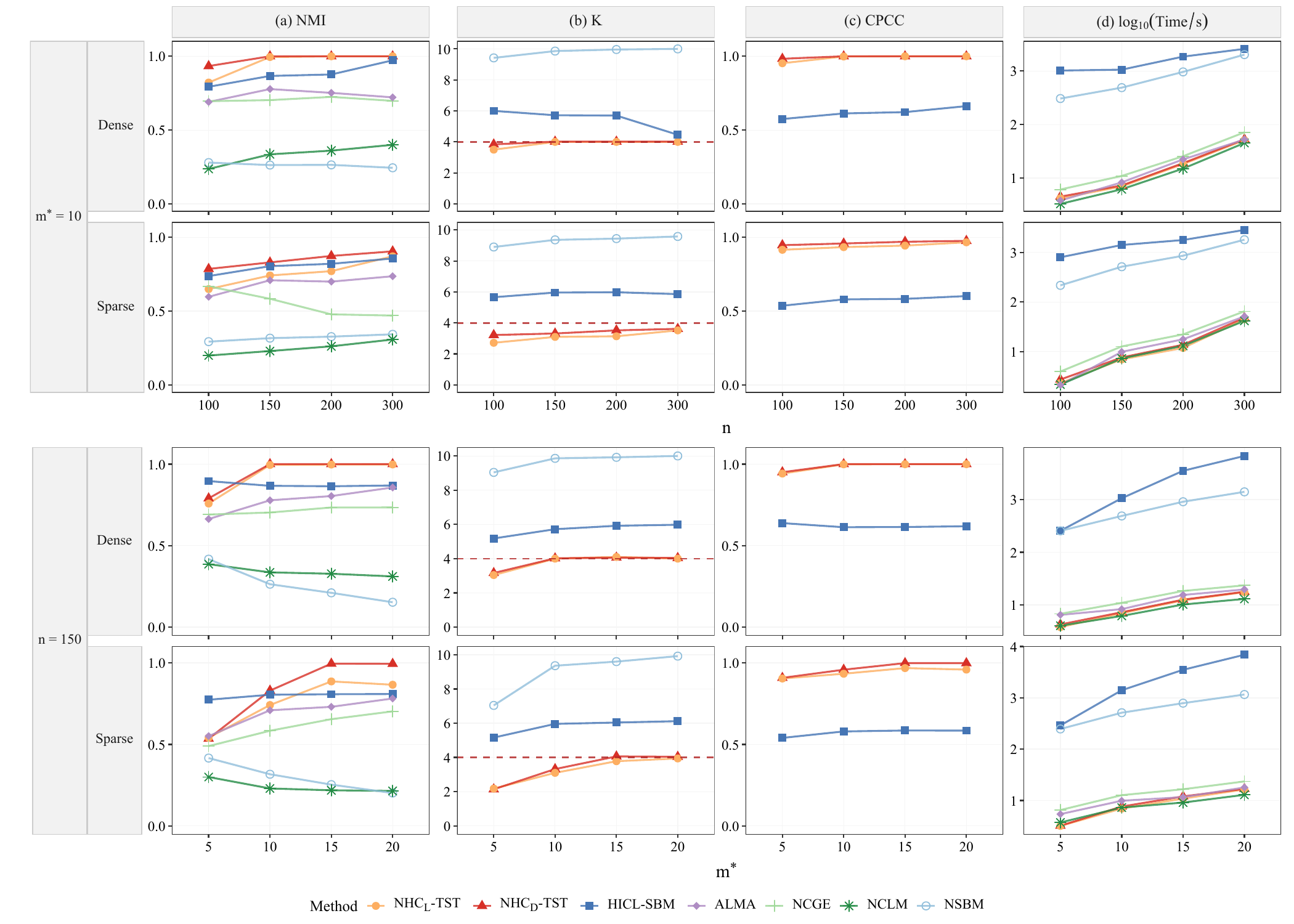}	
				\caption{Behavior comparison when the underlying networks are generated from SBMs. Panels report (a)  NMI; (b) the estimated total number of clusters; (c) CPCC for $\text{NHC}_{\text{L}}\text{-TST}$, $\text{NHC}_{\text{D}}\text{-TST}$, and HICL-SBM; and (d) log transformed running time (in seconds).}
%				\small \textbf{Alt text:} A grid of line plots evaluating various clustering methods on networks generated from SBMs. The plots depict NMI, estimated cluster number K, CPCC, and log 10 running time under dense and sparse conditions  as the number of networks or nodes number varies.
				\label{fig:SBM-res}
			\end{figure}	
			
			The results largely align with the first simulation. Regarding NMI and $K$, the proposed NHC-TST variants perform the best across most scenarios. 
			Although HICL-SBM slightly outperforms $\text{NHC}_{\text{L}}\text{-TST}$ in NMI when $m^* = 5$, as well as in sparse cases with $m^*=10$, it overestimates the true number of clusters.
			Moreover, the performance of NHC-TST variants improves as $n$ or $m^*$ increases, with the estimated cluster number tightly matching the true value of $4$. 
			Furthermore, the CPCC scores for both NHC-TST variants consistently approach $1$.
			HICL-SBM is again the most time-consuming, while our methods preserve high efficiency, matching the speed of NCLM, NCGE, and ALMA.

			\subsection{Non-uniform hierarchical smooth graphons}
			In the last simulation, we generate multiple networks from a hierarchical family of smooth graphons with non-uniform tree structure. The hierarchy has depth $\mathcal{L}=3$, allowing uneven splits across layers. For each cluster with binary label $x$, we define a cluster-specific graphon:
			\begin{equation*} 
				G_x(u,v) = \mu + w_1 \mathcal{s}_1(x)\phi_1(u,v) + w_2 \mathcal{s}_2(x)\phi_2(u,v) + w_3 \mathcal{s}_3(x)\phi_3(u,v),  
			\end{equation*}
			with $\mu = 0.5$, $\{w_1, w_2, w_3\} = \{0.32, 0.16, 0.04\}$, and $\mathcal{s}_{\cl}(x)$ for $\cl = 1, 2, 3$ as defined in \eqref{s(x)}. We truncate the values $G_x$ to the interval $[0.01,0.99]$ to ensure valid edge probabilities. 
			Here $u,v\overset{\mathrm{i.i.d.}}{\sim}\mathrm{Unif}[0,1]$, and the smooth basic functions are
			\begin{align*}
				\phi_1(u,v) = uv, \quad
				\phi_2(u,v) = \cos(\pi  (u - v)),  \quad
				\phi_3(u,v) = \sin(2 \pi u)\,\sin(2 \pi v).
			\end{align*}
			%The coefficients $(0.32,0.16,0.04)$ enforce a decreasing signal strength across levels, reflecting that deeper hierarchical distinctions are progressively subtler.
			
			We restrict our analysis to six specific clusters ($K = 6$) with the following binary labels: $000,\; 001,\; 01,\; 100,\; 101,\; 11$.
			%This design produces a non-uniform hierarchical partition, allowing us to evaluate whether the clustering methods can recover both fine- and coarse-grained branches of the underlying tree.
			%Similar to the previous simulations, the population distance matrix here strictly maintains the HDM structure, but the corresponding heatmap is omitted for conciseness.
			This design produces a non-uniform hierarchical partition. Moreover, the population distance matrix exhibits a mild departure from the  HDM monotonicity. Specifically, distances between clusters with binary labels starting with 0 and 1 are not uniformly larger than the corresponding within-group distances, while the sub-clusters within each group still satisfy the HDM monotonicity. These settings allow us to evaluate the ability of the clustering methods to recover non-uniform cases and their robustness to mild violations of the HDM assumption.
			
			The corresponding clustering results are illustrated in the 1st and 3rd rows in Figure~\ref{fig:trimmed-res}.
			Similarly, we scale the coefficients $\mu$ and $\{w_1, w_2, w_3\}$ by $\rho_1 = 6 \log n / n$ and $\rho_2 = 18 \log n / n$. 
			The corresponding results are illustrated in the 2nd and 4th rows in Figure~\ref{fig:trimmed-res}.
			
			\begin{figure}[htbp]
				\centering
				\includegraphics[width=\linewidth]{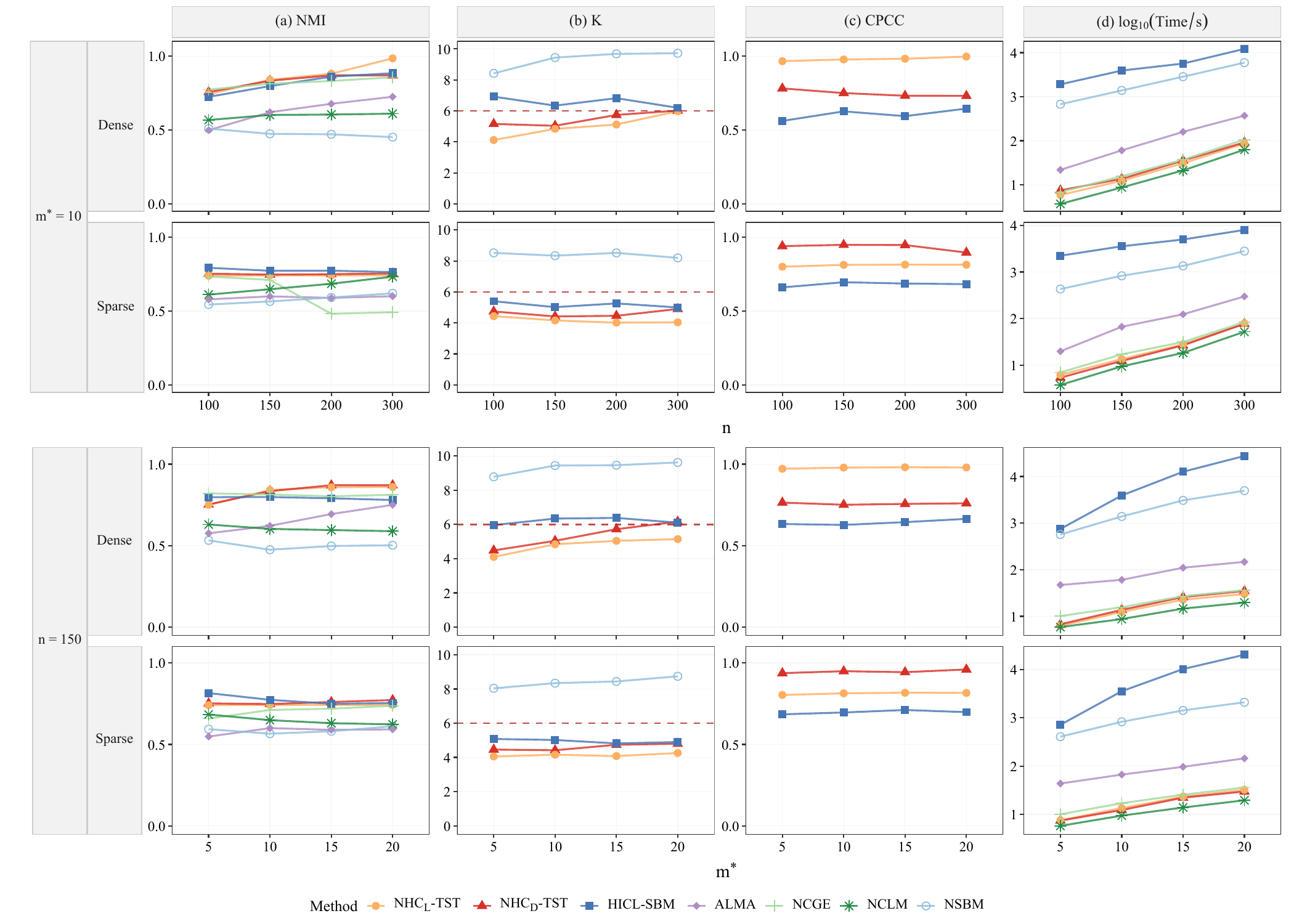}	
				\caption{Behavior comparison when the underlying networks are generated from non-uniform hierarchical smooth graphons. Panels report (a)  NMI; (b) the estimated total number of clusters; (c) CPCC for $\text{NHC}_{\text{L}}\text{-TST}$, $\text{NHC}_{\text{D}}\text{-TST}$, and HICL-SBM; and (d) log transformed running time (in seconds).}
%				\small \textbf{Alt text:} A grid of line plots evaluating various clustering methods on networks generated from non-uniform hierarchical smooth graphons. The plots depict NMI, estimated cluster number K, CPCC, and log 10 running time under dense and sparse conditions  as the number of networks or nodes number varies.
				\label{fig:trimmed-res}
			\end{figure}

			Overall, our proposed methods $\text{NHC}_{\text{D}}\text{-TST}$ and $\text{NHC}_{\text{L}}\text{-TST}$  achieve a highly favorable trade-off between accuracy,  robustness, and efficiency. As shown in Figure~\ref{fig:trimmed-res}, HICL-SBM suffers from severe computational bottlenecks, whereas both NHC-TST variants remain high efficiency, running nearly as fast as the fastest baseline NCLM.
			In dense networks, the NHC-TST framework predominantly attains the highest NMI scores. Regarding the estimated number of clusters, both NHC-TST and HICL-SBM successfully avoid the severe overestimation exhibited by NSBM; at larger data scales, $\text{NHC}_{\text{D}}\text{-TST}$ provides the most accurate estimation, converging tightly to the ground truth $K = 6$. Furthermore, the NHC-TST variants strictly outperform HICL-SBM in terms of CPCC, with the values for $\text{NHC}_{\text{L}}\text{-TST}$ approaching $1$.
			In sparse networks, although HICL-SBM holds a marginal advantage in NMI and $K$ estimation, this minor gain comes at an excessive computational price. More importantly, the NHC-TST framework demonstrates a distinct advantage in recovering the latent hierarchy despite the mild HDM violation. $\text{NHC}_{\text{D}}\text{-TST}$ maintains optimal CPCC scores near $1$ while HICL-SBM systematically underperforms.

			\section{Application to migration networks}
			\label{sec:application}
			
			%In this section, we apply our method to a novel dataset of monthly country-to-country migration flow estimates from January 2019 to December 2022 \citep{chi2025measuring}.
			%This dataset is constructed from privacy-protected Facebook user data and offers high-resolution estimations of global bilateral migration dynamics. While the original study includes 181 countries, the publicly available dataset provides records for 180 countries. We formulate the global migration system as a sequence of temporal networks, where the 180 countries serve as nodes and the total monthly number of migrants between any two countries defines the edge weights, resulting in 48 directed, weighted networks that capture global monthly migration flows over the four-year period.
			In this section, we apply our method to the global migration dataset introduced by \cite{chi2025measuring}. While the original study includes 181 countries, the publicly available dataset provides records for 180 countries. We formulate the global migration system as a sequence of temporal networks, where the 180 countries serve as nodes and the total monthly number of migrants between any two countries defines the edge weights, resulting in 48 monthly directed, weighted networks spanning January 2019 to December 2022.
			
			To capture core structural relationships between countries and align with the clustering approaches, we preprocess the initial networks into a sequence of undirected binary networks. First, we convert directed flows to undirected interactions by summing bidirectional migration volumes between each country pair, reflecting total bilateral migration intensity. We then apply a data-driven thresholding strategy to filter topological noise: for each monthly network, we binarize edges by retaining only those whose migration volumes exceed the $\tau$-th quantile of all non-zero flows.
			%we retain only the edges whose migration volumes exceed the $\tau$-th quantile of all non-zero flows. Finally, we assign a value of 1 to the remaining edges. 
			
			Our goal is to cluster the 48 networks and evaluate whether they can be meaningfully grouped. Specifically, for ALMA, we tune the number of clusters $K \in \{2, \ldots, 6\}$ and the number of underlying communities per cluster $Q \in \{2, 3, 4\}$ via grid search using the MSE-based elbow method.
			To achieve this, we implement the clustering methods introduced in the simulation study and adopt the corresponding parameter selection procedures for each method.
			%we search over a candidate grid for the number of clusters $K_0 = \{2, \ldots, 6\}$ and the number of underlying communities within each cluster $Q_0 = \{2, 3, 4\}$. We select the optimal values using MSE-based elbow method.
			%The MSE-based elbow method yields $K = 5$ and $Q_0 = 3$ as the optimal parameters.  
			For NCGE and NCLM, the optimal number of clusters is analogously selected via the eigenvalue elbow heuristic. All other algorithmic settings remain identical to those specified in Section \ref{sec:sim}.
			
			We examined  13 threshold levels, corresponding to $\tau \in \{0.2, 0.25, \ldots, 0.8\}$. Across these thresholds, the proposed NHC-TST method exhibits relatively stable cluster number estimates, whereas some competing methods either collapse to a single cluster or produce highly fragmented partitions. As a representative example, we report the results for $\tau=0.7$ in Table \ref{tab:cluster_counts}. This threshold yields a network of moderate sparsity and the estimated cluster structure remains insensitive to small perturbations of the threshold.
			%Table \ref{tab:cluster_counts} summarizes the number of clusters obtained by all methods. It shows that the partition complexity varies significantly  across methods.
			Table \ref{tab:cluster_counts} shows that  $\text{NHC}_{\text{L}}\text{-TST}$/$\text{NHC}_{\text{D}}\text{-TST}$  and  ALMA obtain moderate partition number, ranging from 3 to 5. In contrast, NCGE collapses all countries into a single cluster, while the other methods yield overly fragmented partitions that are difficult to interpret at a global scale.
			
			\begin{table}[htbp]
				\caption{Number of clusters identified by different network clustering methods.}
%				\small \textbf{Alt text:} A table showing the number of clusters identified by seven different network clustering methods. 
				\centering
				\begin{tabular}{@{} l ccccccc @{}}
					\toprule
					Method	& $\text{NHC}_{\text{L}}\text{-TST}$ & $\text{NHC}_{\text{D}}\text{-TST}$ & HICL\_SBM & ALMA & NCLM & NCGE & NSBM  \\
					\midrule
					Number of clusters  & 3  & 4  & 7  & 5 & 6  & 1 & 10 \\
					\bottomrule
				\end{tabular}
				\label{tab:cluster_counts}
			\end{table}
			
			The resulting cluster assignments are presented on the bottom panel of Figure~\ref{fig: compare}. Following the report in \citet{chi2025measuring}, global migration flows during this period were heavily influenced by the profound shocks of the COVID-19 pandemic and the Russia-Ukraine war. To better interpret the clustering results, we display the COVID-19 policy stringency index alongside four pivotal chronological markers provided in \citet{chi2025measuring} on the top panel: the pandemic's onset, the rebound phase, the restoration of pre-pandemic levels, and the outbreak of the war.
			
			\begin{figure}[!htb]
				\centering
				\includegraphics[width=1\linewidth]{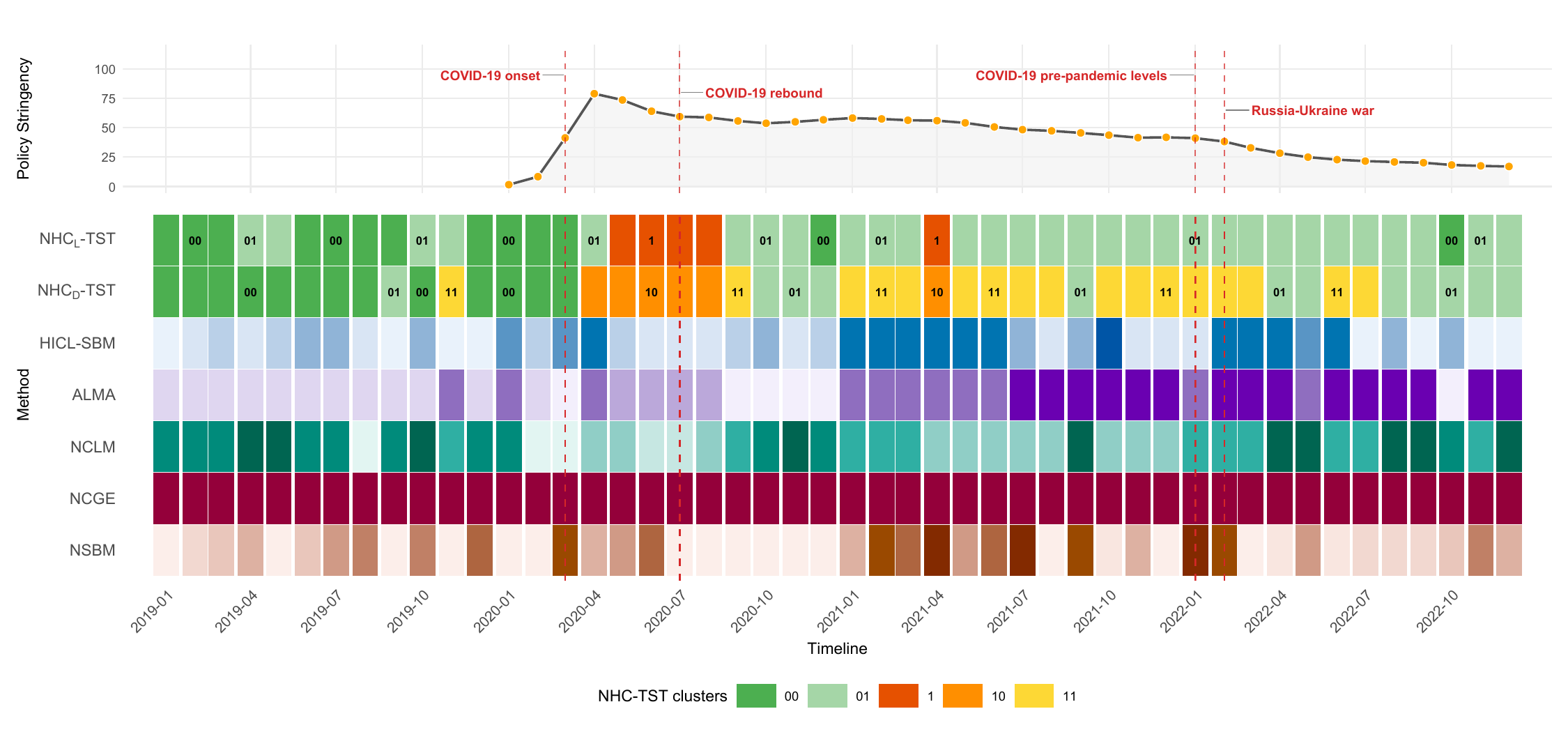}	
				\caption{Cluster assignments on migration networks. Top: Policy stringency. Bottom: Assignments. $\text{NHC}_{\text{L}}\text{-TST}$ and $\text{NHC}_{\text{D}}\text{-TST}$ share a unified palette with on-tile labels denoting hierarchical states, while other flat methods adopt distinct, unlegended palettes to differentiate categorical clusters.}
%				\small	\textbf{Alt text:} A two-part visual panel chart of migration networks over the four-year period. The upper panel displays policy stringency levels. The lower panels compare cluster assignments across different algorithms, using a shared color palette with text labels for the NHC-TST method and varying unlegended color palettes for other flat clustering methods.
				\label{fig: compare}
			\end{figure}	
			
			Within our framework, $\text{NHC}_{\text{L}}\text{-TST}$ partitions the timeline into three  clusters. It clearly isolates the period between the COVID-19 onset and the rebound as a unified crisis state (cluster 1). Meanwhile, the periods both before the onset and after the rebound are predominantly grouped into clusters 00 and 01. This indicates that once  the most severe phase of the pandemic passed, the network reverted to a structure similar to its pre-shock state.
			
			Method $\text{NHC}_{\text{D}}\text{-TST}$ captures finer-grained event shocks by partitioning the timeline into four distinct clusters, which align with the four chronological markers.
			Branch 1 isolates the active crisis, grouping the pandemic onset (10) and the subsequent rebound (11).
			Meanwhile, Branch 0 connects the pre-pandemic phase (00) and recovery period (01).
			Notably, the 2022 recovery is classified as cluster 01 rather than 00. % a direct return to the original 00 state. 
			This aligns with historical events: while the pandemic stabilized by January 2022, the immediate outbreak of the Russia-Ukraine war in February 2022 triggered a new geopolitical conflict.
			
			These structural insights from $\text{NHC}_{\text{D}}\text{-TST}$ are further supported by the pairwise interaction patterns shown in Figure~\ref{fig: chordal}.
			To capture the key features of each temporal state, we construct the diagrams based on anomalous interaction intensities rather than absolute migration volumes. For each cluster, we compute its average adjacency matrix and subtract the global average matrix over the  4-year period. Negative values are set to 0, thereby isolating the connections that exhibit a significant positive deviation from the global baseline. We then retain the top 20 edges with the largest anomalous weights to construct undirected networks.
			This procedure removes baseline effects and highlights connections that become exceptionally active in each phase.
			
			\begin{figure}[!htb]
				\centering
				\includegraphics[width=1\linewidth]{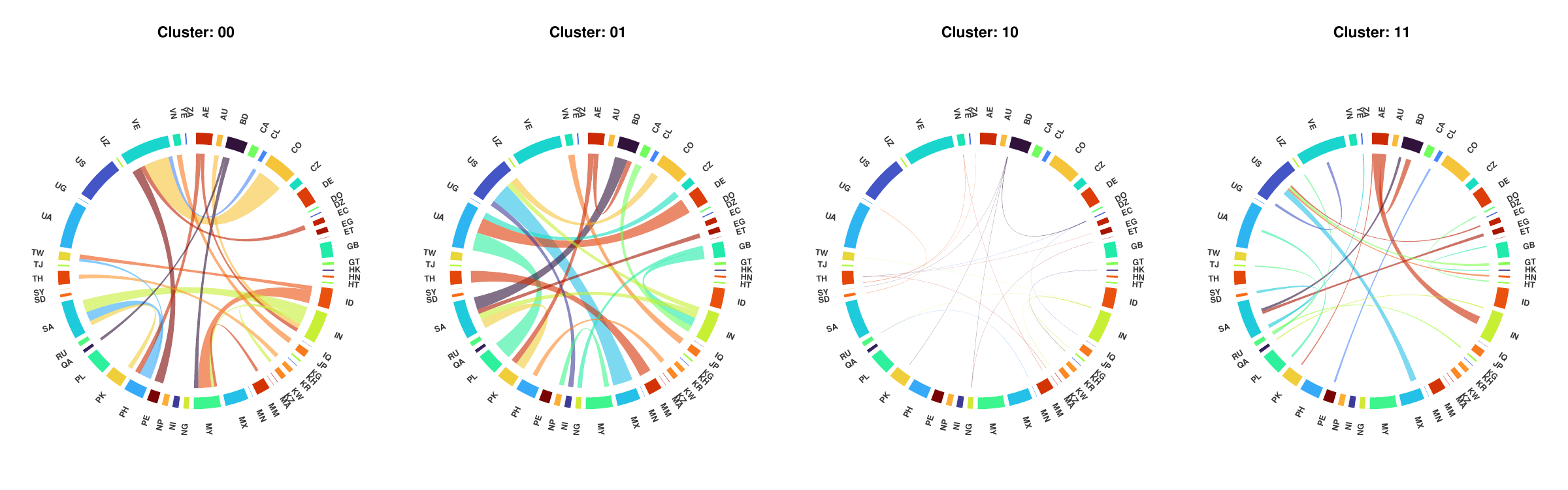}		
				\caption{Pairwise interaction networks from $\text{NHC}_{\text{D}}\text{-TST}$ based on anomalous interaction intensities, highlighting the top 20 anomalous edges and associated countries.}
%				\small \textbf{Alt text:}  Four circular chord diagrams displaying pairwise interaction networks for the four clusters obtained by the NHC-D. Within each cluster's diagram, inner connecting lines highlight the top 20 anomalous edges, while the outer ring labels the corresponding countries connected by all these edges.
				\label{fig: chordal}
			\end{figure}	
			
			As illustrated in Figure~\ref{fig: chordal}, method $\text{NHC}_{\text{D}}\text{-TST}$ effectively uncovers the temporal evolution of global migration by separating stable structural regimes (Branch 0) from severe systemic shocks (Branch 1).
			Branch 0 represents the normal baseline of global migration. Both clusters 00 and 01 share a high connection intensity driven by major labor flows between South Asian (e.g., India (IN), Pakistan (PK), and Bangladesh (BD)) and West Asian (e.g., Saudi Arabia (SA), Qatar (QA), and the United Arab Emirates (AE)).
			Within this stable branch, cluster 00 specifically highlights steady regional movements, especially the intense connection between Venezuela (VE) and Colombia (CO).
			Meanwhile, cluster 01 is heavily driven by the massive and steady migration connection between Mexico (MX) and the United States (US). %Built on this strong foundation, 
			Cluster 01 also clearly  captures the large-scale migration triggered by the Russian invasion of Ukraine, seen through the sudden emergence of migration corridors between Ukraine (UA) and European nations such as Poland (PL), Germany (DE), and the Czech Republic (CZ).  
			In contrast, Branch 1 reveals the profound systemic shock of the COVID-19 pandemic. 
			Cluster 10 characterizes the early crisis stage with a visibly sparse network, reflecting the severe drop in global migration  due to widespread lockdowns and border closures.
			Cluster 11 illustrates the rebound phase. During this stage, global flows began to recover and the network's overall connectivity gradually restored. These findings are highly consistent with the  results reported by \citet{chi2025measuring}.
			
			As for ALMA, although it  produces a moderate five-cluster partition, the resulting timeline blocks are hard to align with major historical events, especially  in the later periods. Furthermore, this flat partition cannot fully represent structural similarities across clusters, obscuring which temporal phases are topologically more closely related.
			
			\section{Conclusion and discussion}
			\label{sec:conclusion}
			
			This paper has introduced a framework for hierarchically clustering populations of networks.
			We introduce a \emph{Hierarchical Distance Matrix} (HDM) that formalizes the geometry of a tree-organized population of networks.
			On the algorithmic side, the \emph{Network Hierarchical Clustering with Two-Sample Test} (NHC-TST) procedure learns the hierarchical structures of the network populations  by interleaving spectral bipartitioning with a graphon-based two-sample stopping rule, yielding a data-driven hierarchy that requires no \emph{a~priori} knowledge  of the number of clusters. On the statistical side, our analysis shows that the two spectral variants -- $\mathrm{NHC}_{\mathrm L}\text{-TST}$ and $\mathrm{NHC}_{\mathrm D}\text{-TST}$ -- recover the true hierarchy exactly at the population level and that $\mathrm{NHC}_{\mathrm L}\text{-TST}$ is consistent in the empirical regime where link-probability matrices are estimated  via neighborhood smoothing. The simulations and the global migration application confirm that this combination of structural insight and statistical adaptivity translates into state-of-the-art practical performance.
			
			Some future directions include  extending our framework to hierarchical structures that are beyond binary trees, accommodate weighted and edge-attributed networks and possibly model dynamic networks. 
			
			\section*{Acknowledgment}
			The work of Li Chen was supported by the Scientific and Technological Innovation Team for Qinghai-Tibetan Plateau Research in Southwest Minzu University (2024CXTD19) and the Fundamental Research Funds for the Central Universities, Southwest Minzu University (ZYN2024069). Eric Kolaczyk was supported by the Natural Sciences and Engineering Research Council of Canada (NSERC), through Grants RGPIN-2023-03566 and DGDND-2023-03566, and by the Canada Research Chairs Program. Lizhen Lin would like to acknowledge the generous support of NSF grant DMS 2503119. 
			
			\bibliographystyle{biometrika}
			\bibliography{refs}
			
%			\clearpage
			\appendix
			
			\renewcommand{\thesection}{S.\arabic{section}}
			\counterwithout{equation}{section} 
			\setcounter{equation}{0}
			\renewcommand{\theequation}{S.\arabic{equation}}
			
			\section*{Appendix}
			
			This supplement includes detailed proof for all the main results presented in the paper, along with supporting lemmas.

			\section{Preliminaries}
			
			%	The following lemma about the class of Positive-Negative block matrix is a preliminary for the proof.
			
			\begin{lemma}[Positive-Negative block matrix \citep{balakrishnan2011noise}] \label{Positive-Negative}
				Let $B$ be an $m \times m$ symmetric matrix with the Positive-Negative block structure of 
				\begin{equation*}
					B = 
					\begin{pNiceArray}{c|c}[cell-space-limits = 2pt]
						B_+ & B_- \\
						\Hline 
						B_-^T & \tilde{B}_+ \\
					\end{pNiceArray},
				\end{equation*}
				where, for $p, q \geq 1$, the $p \times p$ block $B_+$ and the $q \times q$ block $\tilde{B}_+$ have strictly positive off-diagonal entries, while the $p \times q$ block $B_-$ has strictly negative entries. Let $v$ be the dominant eigenvector of $B$. Then the eigenvector $v$ either has the sign pattern of $\begin{pmatrix} v_+ \\ v_- \end{pmatrix}$, where $v_+$, the first $p$ elements of $v$, are strictly positive and $v_-$, other $q$ elements of $v$, are strictly negative or has the reverse sign pattern.
			\end{lemma}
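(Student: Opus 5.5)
The plan is to reduce the statement to the Perron--Frobenius theorem with a diagonal signature similarity. Let $S = \mathrm{diag}(I_p, -I_q)$, which is symmetric, orthogonal and satisfies $S^2 = I$. Then
\[
SBS = \begin{pNiceArray}{c|c}[cell-space-limits = 2pt]
B_+ & -B_- \\
\Hline
-B_-^T & \tilde{B}_+ \\
\end{pNiceArray}.
\]
By hypothesis every off-diagonal entry of this matrix is strictly positive: the within-block entries are unchanged, and the cross-block entries have their signs flipped from strictly negative to strictly positive. Because $S$ is orthogonal, $SBS$ has the same eigenvalues as $B$, and $Bv = \lambda v$ holds if and only if $(SBS)(Sv) = \lambda (Sv)$. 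So dominant eigenvectors of $B$ correspond exactly to dominant eigenvectors of $SBS$ under $v \mapsto Sv$.

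Next I would remove the uncontrolled diagonal by a shift. Choose $c > \max_i |B_{ii}|$ and set $C = SBS + cI$. Then $C$ is an entrywise strictly positive symmetric matrix. Its eigenvectors coincide with those of $SBS$, and its eigenvalues are shifted by $c$, so the ordering of eigenvalues is preserved. By Perron's theorem for positive matrices, the spectral radius $\rho(C)$ is an eigenvalue of $C$. It is algebraically simple and has an eigenvector $w$ with all entries strictly positive; this eigenvector is unique up to scaling.

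Since $C$ is symmetric, all its eigenvalues are real and bounded above by $\rho(C)$. Hence $\rho(C)$ is precisely the largest eigenvalue $\lambda_1(C)$. It follows that $\lambda_1(B) = \rho(C) - c$ is a simple eigenvalue of $B$, and the dominant eigenvector of $B$ is well defined up to sign. That eigenvector is $v = \pm S w$. Its first $p$ coordinates are $\pm w_{1:p}$, which are all strictly positive or all strictly negative, and its last $q$ coordinates are $\mp w_{p+1:m}$, which carry the opposite sign. This is exactly the claimed sign pattern or its reverse.

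The only step needing care is the identification in the previous paragraph: Perron's theorem guarantees simplicity and positivity for the spectral radius, while the lemma concerns the largest eigenvalue. Symmetry of $C$ closes this gap, because every eigenvalue is real and bounded by $\rho(C)$. Simplicity is what makes ``the'' dominant eigenvector meaningful, and it rules out a mixed-sign vector drawn from a higher-dimensional eigenspace.

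The degenerate cases $p = 1$ or $q = 1$ need no extra treatment. A $1\times 1$ block has no off-diagonal entries, so the hypothesis on it is vacuous, and the shift by $cI$ still makes $C$ entrywise positive.
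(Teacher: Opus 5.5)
Your argument is correct and complete. Conjugating by $S=\mathrm{diag}(I_p,-I_q)$ makes every off-diagonal entry strictly positive. The shift by $cI$ then makes the matrix entrywise positive without changing eigenvectors or the order of eigenvalues. Symmetry identifies the Perron root $\rho(C)$ with $\lambda_1(C)$, so the top eigenvector of $B$ is $\pm Sw$ with $w>0$ entrywise. The paper gives no proof of this lemma; it quotes it from \citet{balakrishnan2011noise}, so there is no in-paper argument to match step by step.

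The nearest in-paper analogue is the sign-flipping Rayleigh-quotient argument for the minimal eigenvector of $D$ in the proof of Theorem~\ref{theorem D}. Applied to $B$, that route is essentially the variational proof of Perron's theorem: you would compare $v^\top B v$ with $(S|v|)^\top B (S|v|)$. It also needs a separate step to rule out zero entries, as in its Case~2. Invoking Perron directly, as you do, is shorter. It also gives simplicity of $\lambda_1(B)$ explicitly, so that ``the'' dominant eigenvector is well defined.

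That simplicity is worth stating, because the proof of Theorem~\ref{theorem L_D} relies on it without saying so when it passes from $u_1\neq\mathbf 1$ to $u_1\perp\mathbf 1$. If $\lambda_1(B)=\kappa m$, simplicity forces $u_1\propto\mathbf 1$, which contradicts the sign pattern. Otherwise $\lambda_1(B)\neq\kappa m$, and orthogonality follows because $B$ is symmetric.
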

			
			%	Before we proceed to the proof, we state a result in Spectral Graph Theory that we will use:
			\begin{lemma}[Lemma 6.3.1 in \citet{Spielman2009}, Corollary 4.3.12 in \citet{horn2013matrix}] \label{lemma spe}
				If $G$ and $H$ are graphs whose Laplacian matrices satisfy 
				\[
				L_G \succcurlyeq c\,L_H,
				\]
				for some $c > 0$. That is, $L_G - c\,L_H$ is positive semidefinite, then for each $i$,
				\[
				\lambda_i(G) \;\ge\; c\,\lambda_i(H).
				\]		
			\end{lemma}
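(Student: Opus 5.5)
The plan is to prove the statement for arbitrary real symmetric matrices, not only graph Laplacians. The graph structure of $G$ and $H$ plays no role; all that is used is that $L_G$ and $L_H$ are symmetric and that $L_G - cL_H \succcurlyeq 0$. Here $\lambda_i(G)$ and $\lambda_i(H)$ mean $\lambda_i(L_G)$ and $\lambda_i(L_H)$. The main tool is the Courant--Fischer min--max characterization. With the paper's convention that eigenvalues are ordered decreasingly, $\lambda_1(M)\ge\cdots\ge\lambda_n(M)$, it reads
\begin{equation*}
\lambda_i(M) \;=\; \max_{\substack{S\subseteq\mathbb{R}^n\\ \dim S = i}} \; \min_{\substack{x\in S\\ \|x\|=1}} x^\top M x .
\end{equation*}

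First, rewrite the hypothesis as a pointwise inequality between quadratic forms. Positive semidefiniteness of $L_G - cL_H$ means exactly that $x^\top L_G x \ge c\, x^\top L_H x$ for every $x\in\mathbb{R}^n$.

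Second, fix $i$ and let $S^\ast$ be an $i$-dimensional subspace attaining the maximum for $L_H$ in the Courant--Fischer formula. One can take $S^\ast$ to be the span of the eigenvectors of $L_H$ belonging to $\lambda_1(L_H),\dots,\lambda_i(L_H)$. Then every unit $x\in S^\ast$ satisfies $x^\top L_H x \ge \lambda_i(L_H)$. Combining this with the first step, and using $c>0$ so that the inequality direction is preserved, gives
\begin{equation*}
\lambda_i(L_G) \;\ge\; \min_{\substack{x\in S^\ast\\ \|x\|=1}} x^\top L_G x \;\ge\; c\min_{\substack{x\in S^\ast\\ \|x\|=1}} x^\top L_H x \;=\; c\,\lambda_i(L_H).
\end{equation*}
The first inequality holds because $S^\ast$ is one admissible subspace in the maximum that defines $\lambda_i(L_G)$. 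This is the claim.

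There is no real obstacle here. The only point that needs care is bookkeeping of the ordering convention: the min--max form used must match the decreasing labelling $\lambda_1\ge\cdots\ge\lambda_n$ used in the paper. Had the increasing order been used, the dual max--min form would be the right one, and the same argument goes through. Equivalently, one could write $L_G = cL_H + E$ with $E\succcurlyeq 0$ and invoke Weyl's monotonicity theorem, which is the content of Corollary 4.3.12 in \citet{horn2013matrix}; the Courant--Fischer argument above is the self-contained proof of that monotonicity.
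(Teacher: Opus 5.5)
Your argument is correct, and it is the standard Courant--Fischer proof behind the results the paper cites; the paper gives no proof of its own and simply defers to Spielman's Lemma 6.3.1 and to Weyl's monotonicity theorem (Corollary 4.3.12 in Horn and Johnson). Your point that no graph structure is used also fits how the lemma is applied later: it is invoked with $c=1$ for $L_{H_\beta} \succcurlyeq L_D$, and these are Laplacians of real-weighted distance matrices rather than of unweighted graphs.
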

			
			%		We now introduce an eigenvalue stability inequality that we will use later (see Corollary III.2.6 in \citet{bhatia2013matrix}).
			
			\begin{lemma}[Weyl's Inequality\citep{bhatia2013matrix}]\label{weyl}
				Let $A$ and $B$ be two $m \times m$ Hermitian matrices, we have the following eigenvalue stability inequality
				\begin{equation*}
					\max_{i = 1, 2, \ldots, m} |\lambda_i(A) - \lambda_i(B)| \leq \|A - B\|_{\mathrm{op}} \enskip,
				\end{equation*}
				where $\|\cdot\|_{\mathrm{op}}$  denotes the operator norm.
			\end{lemma}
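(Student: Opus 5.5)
We would prove Weyl's inequality directly from the Courant--Fischer min--max characterization of eigenvalues of a Hermitian matrix, using the ordering convention of the paper, $\lambda_1(\cdot)\ge\cdots\ge\lambda_m(\cdot)$. Set $E = A - B$, which is Hermitian because $A$ and $B$ are. The goal is the two one-sided bounds
\[
\lambda_i(A) \le \lambda_i(B) + \|E\|_{\mathrm{op}}, \qquad \lambda_i(B) \le \lambda_i(A) + \|E\|_{\mathrm{op}}
\]
for every $i$. Taken together they give $|\lambda_i(A)-\lambda_i(B)|\le\|E\|_{\mathrm{op}}$, and maximizing over $i$ yields the claim.

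\textbf{Step 1 (min--max).} Recall that for any Hermitian $H$,
\[
\lambda_i(H)=\max_{\dim S=i}\ \min_{x\in S,\ \|x\|=1} x^* H x .
\]

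\textbf{Step 2 (quadratic-form bound).} For any unit vector $x$ we have $x^*Ex\le\lambda_1(E)\le\|E\|_{\mathrm{op}}$. The first inequality is the Rayleigh quotient bound. The second holds because, for Hermitian $E$, the operator norm equals the spectral radius $\max\{|\lambda_1(E)|,|\lambda_m(E)|\}$. Therefore
\[
x^*Ax = x^*Bx + x^*Ex \le x^*Bx + \|E\|_{\mathrm{op}}
\]
for every unit $x$.

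\textbf{Step 3 (pass through min and max).} Fix any $i$-dimensional subspace $S$ and take the minimum over unit $x\in S$ on both sides of the Step 2 inequality. This gives
\[
\min_{x\in S,\ \|x\|=1} x^*Ax \le \min_{x\in S,\ \|x\|=1} x^*Bx + \|E\|_{\mathrm{op}}.
\]
Both minimum and maximum are monotone, and adding a constant commutes with them. So taking the maximum over $S$ and applying Step 1 yields $\lambda_i(A)\le\lambda_i(B)+\|E\|_{\mathrm{op}}$.

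\textbf{Step 4 (reverse bound).} Exchange the roles of $A$ and $B$, which replaces $E$ by $-E$. Since $\|-E\|_{\mathrm{op}}=\|E\|_{\mathrm{op}}$, the same argument gives the reverse inequality.

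There is no substantive obstacle. The only points needing care are that the index $i$ refers to the same position in the descending order for $A$ and $B$, so Courant--Fischer applies with the same subspace dimension, and the identification $\|E\|_{\mathrm{op}}=\max_j|\lambda_j(E)|$, which uses that $E$ is Hermitian. If a self-contained argument is preferred over citing Courant--Fischer, it can be proved briefly via a dimension-counting argument. Take $S$ spanned by the top $i$ eigenvectors and $T$ spanned by the bottom $m-i+1$ eigenvectors. Because $\dim S+\dim T=m+1>m$, these subspaces intersect nontrivially, and evaluating the quadratic form on a unit vector in the intersection gives the required bound.
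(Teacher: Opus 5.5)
Your proof is correct and complete: the pointwise bound $x^*Ax \le x^*Bx + \|A-B\|_{\mathrm{op}}$ carried through the Courant--Fischer characterization, followed by swapping $A$ and $B$, is the standard argument. The paper does not prove this lemma; it only cites Bhatia's \emph{Matrix Analysis}, where the result is derived from the same variational (min--max) principle, so your route matches the cited one. In your optional dimension-counting variant, state explicitly that $S$ is spanned by the top $i$ eigenvectors of $A$ and $T$ by the bottom $m-i+1$ eigenvectors of $B$.
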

			
			%	\textbf{Graphon estimation error.}
			%	Under the assumption that an NBS-type estimator is used for each link probability matrix, the estimation error satisfies the following bound \citep{zhao2019change}.
			%	\begin{equation}\label{NBS}
				%		\|\hat P^{(i)} - P^{(i)} \|_F \leq C_0 \sqrt{n \log n} \quad \text{for all} \ i = 1, 2, \ldots, m,
				%	\end{equation}
			%	with probability at least $1 - n^{- \varepsilon}$ for any $\varepsilon > 0$, where $C_0$ is a positive global constant depending on $\varepsilon$ and another global constant $B_0 > 0$ but not on $n$ or $m$.
			
			\section{Proof of Theorem \ref{theorem L_D}}
			%	\begin{theorem}\label{theorem L_D}		
				%		Let $D \in \mathbb{R}^{m \times m}$ be a normalized population distance matrix, and $L_D = \text{diag}(D \mathbf{1}) - D$ be its associated Laplacian matrix. Suppose $D$ is an HDM. Under Assumptions \ref{A1} and \ref{A2}, given an oracle stopping rule, the $\text{NHC}_{\text{L}}$ method will exactly recover the true hierarchical topology.
				%	\end{theorem}
			
			%	\begin{proof}
				Consider the first split of the hierarchical clustering and the corresponding distance matrix is block-partitioned according to \eqref{eq:D}.		
				Since $\beta_0<\alpha_1$, there exists some constant $\kappa \in (\beta_0, \alpha_1)$, such that the constructed matrix $B$  below admits a Positive-Negative block matrix representation.
				\begin{equation*}
					B = \kappa \mathbf{1} \mathbf{1}^T + L_D = \kappa \mathbf{1} \mathbf{1}^T + \text{diag}(D \mathbf{1}) - D = 
					\begin{pNiceArray}{c|c}[cell-space-limits = 2pt]
						B_+ & B_- \\
						\Hline
						B_-^T & \tilde{B}_+ \\
					\end{pNiceArray}.
				\end{equation*}
				The diagonal blocks $B_+ \in \mathbb{R}^{m_0 \times m_0}, \tilde{B}_+ \in \mathbb{R}^{m_1 \times m_1}$, and the off-diagonal block $B_- \in \mathbb{R}^{m_0 \times m_1}$.

				Let $\{\lambda_i(B)\}_{i=1}^m$ and $\{u_i\}_{i=1}^m$ be the eigenvalues and corresponding eigenvectors of $B$, and let $\{\lambda_i(L_D)\}_{i=1}^m$ and $\{v_i\}_{i=1}^m$ be those of $L_D$.
				Applying Lemma \ref{Positive-Negative}, we know that the dominant eigenvector $u_1$ has the sign pattern of $\begin{pmatrix} u_{1+} \\ u_{1-} \end{pmatrix}$. Part $u_{1+}$ are the first $m_0$ elements and are strictly positive, part $u_{1-}$ are the other $m_1$ elements and are strictly negative, or has the reverse sign pattern.
				
				Now we prove that $u_1$ also is the dominant eigenvector of $L_D$.	
				
				For $B$, we have
				\begin{equation*}
					B \mathbf{1} = (\kappa \mathbf{1} \mathbf{1}^T + L_D) \mathbf{1} = \kappa m \mathbf{1}.
				\end{equation*}
				The last equation holds true because as a Laplacian matrix,  $L_D$ has an eigenvector $\mathbf{1}$ with eigenvalue 0.
				So $B$ has an eigenvector $\mathbf{1}$ with eigenvalue $\kappa m$. Observe that the signs of  $\mathbf{1}$ are the same, implying $u_1 \neq \mathbf{1}$. Consequently, $u_1$ must belong to the set of remaining eigenvectors orthogonal to $\mathbf{1}$, which we denote as $S_B \triangleq \{u_i \mid B u_i = \lambda_i(B) u_i, u_i^\top \mathbf{1} = 0\}$.
				
				Recall that vector $\mathbf{1}$ corresponds to the zero eigenvalue of $L_D$. 	
				By definition, $L_D = \text{diag}(D \mathbf{1}) - D$. Because a splittable distance matrix $D$ contains positive off-diagonal entries, ensuring that the trace of $L_D$ is strictly positive. 
				This inherently forces the dominant eigenvalue of the positive semi-definite matrix $L_D$ to be strictly positive.
				Consequently, $v_1$ must belong to the set of remaining eigenvectors orthogonal to $\mathbf{1}$, which we denote as $S_L \triangleq \{v_i \mid L_D v_i = \lambda_i(L_D) v_i, v_i^\top \mathbf{1} = 0\}$.
				
				Restricting our analysis to the eigenspaces orthogonal to $\mathbf{1}$, we consider $S_B$ and $S_L$.
				
				If $u_i \in S_B$, we have $\mathbf{1}^T u_{i} = 0$, and
				\begin{equation*}
					L_D u_i = (B - \kappa \mathbf{11}^T) u_i = B u_i = \lambda_i(B) u_i. 
				\end{equation*}
				So $u_i$ is also an eigenvector of $L_D$ with $B$'s eigenvalue $\lambda_i(B)$, i.e., $S_B \subseteq S_L$. 
				
				If $v_i \in S_L$, we have $\mathbf{1}^T v_{i} = 0$, and
				\begin{equation*}
					B v_i = (\kappa \mathbf{11}^T + L_D) v_i = L_D v_i = \lambda_i(L_D) v_i.
				\end{equation*}
				So $v_i$ is also an eigenvector of $B$ with $L_D$'s eigenvalue $\lambda_i(L_D)$, i.e., $S_L \subseteq S_B$. 
				
				Consequently, we have $S_B = S_L$, with identical corresponding eigenvalues throughout the shared spectrum.
				Therefore, $u_1$ is also the dominant eigenvector of $L_D$. The structure of  $v_1$ exhibits the sign pattern of $u_1$: the first $m_0$ components are strictly positive and the last $m_1$ components are strictly negative (or vice versa). So $\text{NHC}_{\text{L}}$ method  naturally exploits this sign separation to correctly recover the first split. Repeated application of this result concludes the proof.	
				%	\end{proof}

			\section{Proof of Theorem \ref{theorem D}}
			%	\begin{theorem}\label{theorem D}
				%		%	For a normalized distance matrix $D \in \mathbb{R}^{m \times m}$ satisfying the monotonicity property described in Definition \ref{def:HDM}, let 
				%		Let $D \in \mathbb{R}^{m \times m}$ be a normalized population distance matrix and let
				%		\begin{equation}
					%			%g(m_{0-},m_{1+}) = \frac{- \beta_0(m_{0-}^2 + m_{1+}^2) + 2 m_{1+} m_{0-} \beta_0+ \beta_0(m_{0-} - m_{1+})(m_0 - m_1)}{m_0 m_{1+} + m_1 m_{0-} - 2 m_{1+} m_{0-}},
					%			g(m_{0-}, m_{0+}, m_{1-}, m_{1+}) =  \frac{\beta_0 (m_{0-} - m_{1+}) (m_{0+} - m_{1-})}{ m_{0+} m_{1+} + m_{1-} m_{0-}},
					%		\end{equation}
				%		where $m_{0+} + m_{0-} = m_0, m_{1+} + m_{1-} = m_1, 1 \leq m_{0-} \leq m_0 - 1$, $1 \leq m_{1+} \leq m_1 - 1$. Define $M = \max\limits_{m_{0-}, m_{0+}, m_{1-}, m_{1+}} g(m_{0-}, m_{0+}, m_{1-}, m_{1+})$. 
				%		Suppose $D$ is an HDM and Assumptions \ref{A1} -- \ref{A3} hold.
				%		Given an oracle stopping rule, method $\text{NHC}_{\text{D}}$ will exactly recover the initial root bipartition if the following separation conditions hold:
				%		$$
				%		\text{(1)} \quad \alpha_1 > \beta_0 + M, \qquad \text{and} \qquad \text{(2)} \quad \alpha_1 > \max\{\eta, \eta^{-1}\} \beta_0.
				%		$$
				%		%\begin{enumerate}[label=(\arabic*)]
				%		%	\item $\alpha_1 > \beta_0 + M \enskip.$ 
				%		%	\item $\alpha_1 > \max\left\{\eta, 1 / \eta \right\} \beta_0 \enskip.$
				%		%\end{enumerate}
				%		
				%		Furthermore, provided that analogous local separation conditions are satisfied at all internal nodes, the recursive application of $\text{NHC}_{\text{D}}$ will exactly recover the true hierarchical topology.
				%	\end{theorem}
			
			%\begin{proof}
			Consider the first split of the hierarchical clustering, where the population distance matrix $D$ takes the form given in \eqref{eq:D}.
			Let $u = (u_+, u_-)^\top$ be the  minimal unit eigenvector of $D$ where $u_+$ are the first $m_0$ elements and $u_-$ are the other $m_1$ elements. The signs of its entries are undetermined. 
			
			We will prove that $u_+$ and $u_-$ have opposite signs. The proof proceeds by considering two main cases depending on whether $u$ contains zero elements.
			
			\textbf{Case 1: Suppose $u_i \neq 0$ for all $i = 1, 2, \ldots, m$.} 
			
			Let $I_+, I_-$ be index sets of positive and negative elements in $u_+$, with their cardinalities $|I_+| = m_{0+}, |I_-| = m_{0-}, m_{0+} + m_{0-} = m_0$. Similarly, let $J_+$ and $J_-$ represent the corresponding index sets for $u_-$, with sizes $|J_+| = m_{1+}, |J_-| = m_{1-}, m_{1+} + m_{1-} = m_1$. Then
			\begin{align}
				u^\top D u 
				&= u_{+}^\top D^{[0]} u_+ + u_{-}^\top D^{[1]} u_- + 2 u_{+}^\top R u_- \nonumber\\
				&= \Bigg( \sum_{i, j \in I_+} u_i D_{ij} u_j + \underbrace{2 \sum_{i \in I_+, j \in I_-} u_i D_{ij} u_j}_{\mathcal{T}_1} + \sum_{i, j \in I_-} u_i D_{ij} u_j \Bigg) \nonumber\\
				&\quad + \Bigg( \sum_{i, j \in J_+} u_i D_{ij} u_j + \underbrace{2 \sum_{i \in J_+, j \in J_-} u_i D_{ij} u_j}_{\mathcal{T}_2} + \sum_{i, j \in J_-} u_i D_{ij} u_j \Bigg) \nonumber\\
				&\quad + 2 \Bigg( \underbrace{\sum_{i \in I_+, j \in J_+} u_i D_{ij} u_j}_{\mathcal{T}_3} + \sum_{i \in I_+, j \in J_-} u_i D_{ij} u_j 
				+ \sum_{i \in I_-, j \in J_+} u_i D_{ij} u_j + \underbrace{\sum_{i \in I_-, j \in J_-} u_i D_{ij} u_j}_{\mathcal{T}_4} \Bigg) \enskip. \label{eqt}
			\end{align}
			
			By the Rayleigh-Ritz theorem (Section 5.5.2 of \citet{lutkepohl1997handbook}), vector $u$ would be the minimal eigenvector when  $u^\top D u$ reach the minimum value (subject to $\|u\| = 1$). 
			Without loss of generality, we will analyze the mixed sign patterns across subsets $I_+, I_-, J_+$, and $J_-$ from two representative cases.
			
			\textbf{(i) All the subsets $I_+, I_-, J_+$, and $J_-$ are strictly non-empty.} 
			
			For this fully mixed case, $m_{0+}, m_{0-}, m_{1+}, m_{1-}$ are all at least 1. To analyze when $u^\top D u$ reaches the minimum value, let us construct a new vector $\tilde{u}$ by flipping the signs of all elements in $I_-$ and $J_+$. We now evaluate the value change: $$\tilde{u}^\top D \tilde{u} - u^\top D u \enskip.$$
			Notice that flipping these signs only alters the cross-terms between flipped and unflipped index sets. Specifically, in equation \eqref{eqt}, terms $\mathcal{T}_1$ and $\mathcal{T}_2$ will strictly increase (transitioning from negative to positive products), whereas terms $\mathcal{T}_3$ and $\mathcal{T}_4$ will strictly decrease. Thus, the value change can be bounded as:	
			\begin{align} \label{diff}
				& \quad \ \tilde{u}^T D \tilde{u} - u^T D u  \nonumber\\
				& =  - 4 \sum_{i \in I_+, j \in I_-} u_i D_{ij} u_j - 4 \sum_{i \in J_+, j \in J_-} u_i D_{ij} u_j - 4 \sum_{i \in I_+, j \in J_+} u_i D_{ij} u_j - 4 \sum_{i \in I_-, j \in J_-} u_i D_{ij} u_j \nonumber \\
				& \leq  4 \beta_0(m_{0+} m_{0-} + m_{1+} m_{1-}) - 4 \alpha_1(m_{0+} m_{1+} + m_{0-} m_{1-}) \nonumber\\
				& = 4 \beta_0 \big((m_0 - m_{0-}) m_{0-} + (m_1 - m_{1+}) m_{1+}\big) - 
				4 \alpha_1 \big( (m_0 - m_{0-}) m_{1+} + (m_1 - m_{1+}) m_{0-} \big) \enskip.
			\end{align}
			
			Denoting $\delta = \alpha_1 - \beta_0$, we insert $\alpha_1  = \delta + \beta_0$ to \eqref{diff} and  obtain:
			\begin{align*}
				& \quad \ \tilde{u}^T D \tilde{u} - u^T D u  \nonumber\\
				& \leq 4 \big(2 m_{1+} m_{0-} (\beta_0 + \delta) - \beta_0(m_{0-}^2 + m_{1+}^2) + \beta_0 (m_{0-} - m_{1+})(m_0 - m_1) - \delta (m_0 m_{1+} + m_1 m_{0-})\big) \enskip.
			\end{align*}
			To ensure the objective function $u^\top D u $ strictly decreases (i.e., $\tilde{u}^T D \tilde{u} - u^T D u < 0$), the signal gap $\delta$ must satisfy 
			\begin{align*}
				\delta & > \frac{- \beta_0(m_{0-}^2 + m_{1+}^2) + 2 m_{1+} m_{0-} \beta_0+ \beta_0(m_{0-} - m_{1+})(m_0 - m_1)}{m_0 m_{1+} + m_1 m_{0-} - 2 m_{1+} m_{0-}} \nonumber \\
				& = \frac{- \beta_0(m_{0-} - m_{1+})^2 + \beta_0(m_{0-} - m_{1+})(m_0 - m_1)}{(m_{0-} + m_{0+})m_{1+} + (m_{1-} + m_{1+})m_{0-} - 2 m_{1+} m_{0-}} \\
				& = \frac{\beta_0 (m_{0-} - m_{1+}) (m_{0+} - m_{1-})}{ m_{0+} m_{1+} + m_{1-} m_{0-}} = g(m_{0-}, m_{0+}, m_{1-}, m_{1+}) \enskip .
			\end{align*}
			
			According to the regimes of $m_{0-}, m_{0+}, m_{1-}, m_{1+}$, the denominator is strictly positive. By taking the maximum over all values, we can guarantee that $M = \max\limits_{m_{0-}, m_{0+}, m_{1-}, m_{1+}} g(m_{0-}, m_{0+}, m_{1-}, m_{1+}) \geq 0$. 	
			When $\delta > M$, i.e., $$\alpha_1 > \beta_0 + M,$$
			we have  $\tilde{u}^T D \tilde{u} - u^T D u < 0$. 
			Consequently, minimizing $u^T D u$ among all unit-norm vectors forces the sets $I_-$ and $J_+$ to be empty, thereby ensuring that the elements of $u_+$ and $u_-$ have uniform, yet strictly opposite, signs.
			
			\textbf{(ii) Subsets $I_+$ and $I_-$ are strictly non-empty, but $J_+$ is empty (i.e., $m_{0+} \geq 1 , m_{0-} \ge 1, m_{1+} = 0, m_{1-} = m_1 $).} 
			
			Similarly, we construct a new vector $\tilde{u}$ by flipping the signs of all elements in $I_-$. Under this construction, the term $\mathcal{T}_1$ in equation \eqref{eqt} strictly increases, whereas $\mathcal{T}_4$ strictly decreases. Thus, the value change can be bounded as:	
			\begin{equation*}
				\tilde{u}^\top D \tilde{u} - u^\top D u \leq 4 m_{0+} m_{0-} \beta_0 - 4 m_{0-} m_1 \alpha_1 \enskip.
			\end{equation*}
			To guarantee that the difference is strictly negative (i.e., $\tilde{u}^\top D \tilde{u} - u^\top D u < 0$), it suffices to require
			\begin{equation*}
				\alpha_1 > \frac{m_{0+}}{m_1} \beta_0 \enskip. 
			\end{equation*}
			Consequently, minimizing $u^\top D u$ universally forces $I_-$ to be empty.
			
			Analogously, if we consider the symmetric case where $I_-$ is empty while $J_+$ and $J_-$ are strictly non-empty, we obtain 
			\begin{equation*}
				\alpha_1 > \frac{m_{1-}}{m_0} \beta_0 \enskip.
			\end{equation*}
			Similarly, in this scenario,  minimizing $u^\top D u$ universally forces $J_+$ to be empty.
			
			By considering the worst-case upper bounds (where $m_{0+}$ is bounded by $m_0$ and $m_{1-}$ is bounded by $m_1$), we naturally arrive at the global sufficient condition:
			\begin{equation*}
				\alpha_1 > \max\left\{\frac{m_{0}}{m_1}, \frac{m_{1}}{m_0}\right\} \beta_0 = \max\left\{\eta, \frac{1}{\eta}\right\} \beta_0 \enskip.
			\end{equation*}
			Under this condition, the elements of $u_+$ and $u_-$ are guaranteed to have uniform, yet strictly opposite, signs.
			
			\textbf{Case 2: Suppose $u_{r} = 0$ for some index $r \in \{1, 2, \ldots, m \}$.} 
			
			Let $D_{r\cdot}$ be the $r$-th row of $D$. From the eigenvalue equation, we have 
			\begin{equation} \label{eigen equ}
				D_{r \cdot} u = \sum_{j \neq r} D_{r j} u_j = \lambda_m(D) u_r = 0.			
			\end{equation}		
			Let us form a new vector $\hat u$ by changing $u_r$ to any nonzero value $\hat u_r \neq 0$:
			\begin{equation*}
				\hat u_i = \begin{cases}
					u_i, & i \neq r, \\
					\hat u_r, &i = r.
				\end{cases}
			\end{equation*}
			To obtain the value change of $u^T D u$, we first calculate $(\hat u)^T D \hat u$,
			\begin{equation*}\label{equ 2}
				(\hat u)^T D \hat u = \sum_{\substack{i \neq r \\ j \neq r}}  u_i u_j D_{ ij} + 
				2 \hat u_r \sum_{j \neq r} D_{r j} u_j + \hat u_r^2 D_{rr} = \sum_{\substack{i \neq r \\ j \neq r}}  u_i u_j D_{ ij} =  u^T D u \enskip.
			\end{equation*}
			The second equation comes from \eqref{eigen equ} and that $D_{rr} = 0$. By the Rayleigh-Ritz theorem (Section 5.5.2 of \citet{lutkepohl1997handbook}), we have 
			\begin{equation*}	
				\lambda_m(D) = \min_{\substack{v \in \mathbb{R}^m \\ v \neq \textbf{0}}} \frac{v^T D v}{v^T v} \leq \frac{(\hat u)^T D \hat u}{(\hat u)^T \hat u} = \frac{u^T D u}{\sum_{i \neq r} u_i^2 + (\hat u_r)^2} < \frac{u^T D u}{u^T u} = u^T D u \enskip .
			\end{equation*}
			So $u^T D u$ is not the smallest and this contradicts the assumption that $u$ is the minimal eigenvector.
			
			Consequently, all components of $u$ must be nonzero. 
			Combining the results from both Case 1 and Case 2, we conclude that the subvector  $u_+$ is strictly positive while $u_-$ is strictly negative (or vice versa), 
			revealing a clean sign pattern in $u$. Method $\text{NHC}_{\text{D}}$ thus exploits this sign separation to correctly recover the first split. %Repeated application of this result concludes the proof.	
			%\end{proof}
			
			\section{Proof of Theorem \ref{theorem sign}}
			%	Inspired by \citet{balakrishnan2011noise}, our investigation begins with a simplified class of matrices possessing the identical topological structure of a HDM, yet with constant off-diagonal distances.
			We first investigate a simplified class of matrices that share the exact structure of an HM, but with constant off-diagonal entries. Building on this construction, we then state and prove four key lemmas, which serve as the primary steps  for establishing Theorem \ref{theorem sign}. 
			
			\begin{definition}[Hierarchical Constant Matrix]\label{def:HCM}
				A matrix $H$ is a \emph{Hierarchical Constant Matrix (HCM)} if it shares the identical hierarchical cluster structure and block-partitioning scheme as HM $D$ described in Definition \ref{def:HM}, but possesses constant off-diagonal blocks. Specifically, for any internal node $x$ in the hierarchy, the between-cluster distance block $R^{[x]}$ is a constant matrix, denoted as $B^{[x]} = b^x \mathbf{1}\mathbf{1}^\top$, where $b^x > 0$ is a scalar.
				%	
				%	\hfill $\square$
			\end{definition}
			
			For instance, expanding $H$ down to the second hierarchical layer yields the following block structure:
			\begin{equation}\label{H_layer2}
				H = \begin{pNiceArray}{cc|cc}[cell-space-limits = 3pt, margin=1pt]
					H^{[00]} & B^{[0]} & \Block[borders=left]{2-2}{B} & \\
					\big(B^{[0]}\big)^\top & H^{[01]} & & \\	
					\Hline
					\Block[borders=right]{2-2} {B^\top} & & H^{[10]} & B^{[1]}  \\
					& & \big(B^{[1]}\big)^\top & H^{[11]} 
				\end{pNiceArray} \enskip,
			\end{equation}
			where $B, B^{[0]}$, and $B^{[1]}$ are the constant between-blocks with identical entries equal to the scalars $b, b^0$, and $b^1$, respectively. By definition, the diagonal blocks $H^{[00]}, H^{[01]}, H^{[10]}$, and $H^{[11]}$ recursively inherit the HCM property or reduce to scalar zeros at the terminal leaf level.
			
			Since the HM $D$ satisfies the monotonicity property along the binary tree, the same monotonic structure is inherited by the corresponding HCM representation $H$, implying that
			\begin{equation*}
				\max\{b^{x0},\, b^{x1}\} \;<\; b^{x} .
			\end{equation*}
			
			By construction, the HCM $H$ is a special case of the HM $D$. It strictly satisfies assumptions \ref{A1}--\ref{A3}, inheriting the identical within-cluster bounds and balance parameters.  
			%Notably, at each cluster index $x$, the between-cluster parameters in assumption \ref{A1} reduce to a single value:
			%\[
			%\alpha_1^{x} = \beta_1^{x} = b^{x} .
			%\]
			%We adopt this parameterization for $H$ throughout the remainder of the analysis.
			
			\begin{lemma}[Spectrum of HCMs]\label{lemma HCM}
				Consider an HCM $H \in \mathbb{R}^{m \times m}$ defined in Definition \ref{def:HCM}.
				%characterized by a collection of constant cross-cluster distances $\{b^x\}$ at each internal node $x$. 
				Suppose that $H$ satisfies assumptions \ref{A1}--\ref{A3} with  parameters $(m_x, \alpha_0^x, \beta_0^x, \alpha_1^x, \beta_1^x, \eta^x)$. 
				Then,
				\begin{enumerate}[label=(\arabic*)]
					\item $\lambda_1(L_H) = m b$.
					\item $\lambda_2(L_H) = \max \left\{ m_0 b^0 + m_1 b, \, m_1 b^1 + m_0 b \right\}$.
				\end{enumerate}
			\end{lemma}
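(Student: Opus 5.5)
The plan is to compute the spectrum of $L_H$ exactly by splitting it along the root partition \eqref{H_layer2} and then arguing by induction on the depth of the tree. At the root, write
\[
H=\begin{pmatrix} H^{[0]} & 0\\ 0 & H^{[1]}\end{pmatrix} + b\left(\mathbf 1\mathbf 1^\top - \begin{pmatrix}\mathbf 1_{m_0}\mathbf 1_{m_0}^\top & 0\\ 0 & \mathbf 1_{m_1}\mathbf 1_{m_1}^\top\end{pmatrix}\right).
\]
The Laplacian is linear in the weight matrix, so $L_H = \mathrm{diag}(L_{H^{[0]}},L_{H^{[1]}}) + b\,L_{K}$, where $L_K$ is the Laplacian of the unweighted complete bipartite graph $K_{m_0,m_1}$.

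\textbf{Step 1: common invariant subspaces.} I would exhibit an orthogonal decomposition of $\mathbb R^m$ that is invariant under both summands:
\begin{enumerate}[label=(\alph*)]
\item $\mathrm{span}\{\mathbf 1\}$. Here $L_H\mathbf 1=0$.
\item The vector $w=(m_1\mathbf 1_{m_0}^\top,\,-m_0\mathbf 1_{m_1}^\top)^\top$. Since $L_{H^{[0]}}\mathbf 1_{m_0}=0$ and $L_{H^{[1]}}\mathbf 1_{m_1}=0$, the block-diagonal part kills $w$. A direct computation gives $bL_K w = bm\,w$, so $L_H w = mb\,w$.
\item Vectors $(y^\top,0)^\top$ with $y\perp\mathbf 1_{m_0}$. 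On these, $L_K$ acts as $m_1 I$, because every node in block $0$ has degree $m_1$ in $K_{m_0,m_1}$ and the cross term is $\mathbf 1\mathbf 1^\top y=0$. Hence $L_H$ acts there as $L_{H^{[0]}}+m_1 b I$. Because $L_{H^{[0]}}$ preserves $\mathbf 1_{m_0}^\perp$, the eigenvalues on this subspace are $\{\lambda_i(L_{H^{[0]}})+m_1 b\}$ with $\lambda_i(L_{H^{[0]}})$ ranging over the eigenvalues on $\mathbf 1_{m_0}^\perp$.
\item Symmetrically, vectors $(0,z^\top)^\top$ with $z\perp\mathbf 1_{m_1}$. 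Here the eigenvalues are $\{\lambda_i(L_{H^{[1]}})+m_0 b\}$.
\end{enumerate}
The dimensions add up to $1+1+(m_0-1)+(m_1-1)=m$, so this accounts for the whole spectrum.

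\textbf{Step 2: induction.} I would prove the statement together with the auxiliary claim that every eigenvalue of $L_{H^{[x]}}$ is at most $m_x b^x$, by induction on the depth of the subtree. For a terminal leaf, $H^{[x]}$ is zero and $L_{H^{[x]}}=0$. In that case I adopt the convention $b^x=0$, under which the formula $m_0b^0+m_1b$ reads $m_1 b$. For an internal node $x$, the induction hypothesis applied to its children gives that the largest eigenvalue of $L_{H^{[x0]}}$ restricted to $\mathbf 1^\perp$ equals $m_{x0}b^{x0}$. This requires part (1) applied to $H^{[x0]}$, whose top eigenvector lies in $\mathbf 1^\perp$ because its eigenvalue is positive.

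\textbf{Step 3: ordering the eigenvalues.} From Step 1, the largest eigenvalue on subspace (c) is $m_0b^0+m_1b$ and the largest on (d) is $m_1b^1+m_0b$. The HCM monotonicity $\max\{b^0,b^1\}<b$ gives
\[
m_0b^0+m_1b< m_0b+m_1b=mb,
\]
and likewise for the (d) value. So $\lambda_1(L_H)=mb$, which also closes the auxiliary claim at node $x$. The next largest eigenvalue is $\max\{m_0b^0+m_1b,\ m_1b^1+m_0b\}$, which is part (2).

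\textbf{Expected obstacle.} The main difficulty is bookkeeping rather than analysis. I need to make sure the invariant subspaces in (c) and (d) are exactly $\mathbf 1_{m_0}^\perp$ and $\mathbf 1_{m_1}^\perp$, so that the zero eigenvalue of each child Laplacian is not double counted. I also need to handle leaf or singleton children consistently, where the corresponding subspace is empty or carries only $L=0$, via the convention $b^x=0$.
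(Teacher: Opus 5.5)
Your proposal is correct and is essentially the paper's own argument. The paper likewise checks that $\tilde u$ (a rescaling of your $w$) has eigenvalue $mb$, zero-pads the eigenvectors of $L_{H^{[0]}}$ and $L_{H^{[1]}}$ orthogonal to $\mathbf 1$ to obtain $\lambda_i(L_{H^{[0]}})+m_1 b$ and $\lambda_i(L_{H^{[1]}})+m_0 b$, counts to $m$, and inducts on depth using $\max\{b^0,b^1\}<b$; your splitting $L_H=\mathrm{diag}(L_{H^{[0]}},L_{H^{[1]}})+bL_K$ is just a cleaner form of its block formula.

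One caveat, shared with the paper, concerns singleton children: there the corresponding subspace is empty, so the convention $b^x=0$ does not rescue part (2). For example, $m_0=1$ with a leaf child of size $m_1\ge 2$ gives a weighted star with spectrum $\{0,b,\ldots,b,mb\}$, so $\lambda_2(L_H)=b\neq m_1 b$. In general your argument gives $\lambda_2(L_H)\le\max\{m_0b^0+m_1b,\,m_1b^1+m_0b\}$, with equality when $m_0,m_1\ge 2$; only this upper bound is used later in Lemma~\ref{lemma HDM}.
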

			
			\begin{proof}
				\textbf{(i) Proof of claim (1).}
				
				We complete the proof through a two-phase argument. We first construct an unit eigenvector $\tilde{u}$ of $L_H$ with eigenvalue $m b$. Through mathematical induction, we then demonstrate that $\tilde{u}$ is the dominant eigenvector $u$, thereby establishing the desired results.
				
				Let \begin{equation*}
					\tilde{u} = \Big(\underbrace {\sqrt{\frac{m_1}{m m_0}},\ldots, \sqrt{\frac{m_1}{mm_0}}}_{m_0}, \underbrace{- \sqrt{\frac{m_0}{mm_1}}, \ldots,- \sqrt{\frac{m_0}{mm_1}}}_{m_1}\Big)^T.
				\end{equation*}
				
				Remember that $m_0, m_1$ are the sizes of the two clusters at the first layer. We first verify that $\tilde{u}$ is an eigenvector of $L_H$ with eigenvalue $m b$.	
				
				%When the depth of HCM $H$ is 0, we have $H = b(\mathbf{1}\mathbf{1}^\top - I)$, yielding the Laplacian $L_H =  \mathrm{diag}(H\mathbf 1) - H = m b I - b\mathbf{1}\mathbf{1}^\top$. 
				%It immediately follows that
				%\begin{equation}\label{equ 1}
				%L_H \tilde{u} = (m b I - b\mathbf{1}\mathbf{1}^\top) \tilde{u} = m b \tilde{u} - b \mathbf{1} (\mathbf{1}^\top \tilde{u}) = m b \tilde{u} - b \sum_{i = 1}^{m}\tilde{u}_i \mathbf{1} =  m b \tilde{u} \enskip.
				%\end{equation}
				%This implies that $\tilde{u}$ is an eigenvector of $L_H$ with eigenvalue of $mb$.	
				When the depth of HCM $H$ is 0,both $H$ and $L_H$ reduce to zero matrices, and $b = 0$. It immediately follows that $\tilde{u}$ is an eigenvector of $L_H$ associated with the eigenvalue of $mb = 0$.	
				
				Similarly, when the depth of HCM $H$ is no less than 1, it admits a natural block partition $$H = \begin{pmatrix} H^{[0]} & B \\ B^\top & H^{[1]} \end{pmatrix},$$ with $B = b \mathbf{1}_{m_0} \mathbf{1}_{m_1}^\top$. Let $I_r$ denote the identity matrix of size $r$.  Consequently, its graph Laplacian can be decomposed into:
				\begin{equation}\label{equ_block_L}
					L_H = \begin{pmatrix} 
						L_{H^{[0]}} + b m_1 I_{m_0} & -b \mathbf{1}_{m_0} \mathbf{1}_{m_1}^\top \\ 
						-b \mathbf{1}_{m_1} \mathbf{1}_{m_0}^\top & L_{H^{[1]}} + b m_0 I_{m_1} 
					\end{pmatrix} \enskip.
				\end{equation}
				Write vector $\tilde{u}$ as $\tilde{u} = \big(\sqrt{{m_1} / (m m_0)} \mathbf{1}_{m_0}^\top, - \sqrt{{m_0} / (mm_1)} \mathbf{1}_{m_1}^\top \big)^\top$. Note that $L_{H^{[0]}} \mathbf{1}_{m_0} = \mathbf{0}$ and $L_{H^{[1]}} \mathbf{1}_{m_1} = \mathbf{0}$. Thus, the upper block of $L_H \tilde{u}$ yields:
				\begin{equation*}
					(b m_1 \sqrt{\frac{m_1}{m m_0}} + b m_1 \sqrt{\frac{m_0}{m m_1}}) \mathbf{1}_{m_0}  = m b \sqrt{\frac{m_1}{m m_0}} \mathbf{1}_{m_0} \enskip.
				\end{equation*}
				By symmetry, the lower block computes to $- m b \sqrt{{m_0} / (m m_1)} \mathbf{1}_{m_1}$, establishing  that $L_H \tilde{u} = m b \tilde{u}$.	Thus, $\tilde{u}$ is still an eigenvector of $L_H$ with eigenvalue $m b$.	
				
				Next, we will prove that $u = \tilde{u}$ by induction on number of depth in $H$, denoting as $\mathcal{L}$. When $\mathcal{L} = 0$, all eigenvalues of $L_H$ are $0$.  Thus, $\lambda_1 (L_H) = m b = 0$, and $\tilde{u}$ is naturally an associated eigenvector.
				%When $\mathcal{L} = 0$, according to the last two terms in \eqref{equ 1}, any vector orthogonal to $\mathbf{1}$ is an eigenvector of $L_H$. Thus, the multiplicity of $m b$ is $m -1$. Note that $0$ is an eigenvalue of $L_H$ and $L_H$ has $m$ eigenvalues in total. Thus, $\lambda_1 (L_H) = m b$ and $u = \tilde{u}$.
				
				Assume inductively that claim (1) holds for depth $\mathcal{L} = \cl \geq 1$. We now proceed to the inductive step for $\mathcal{L} = \cl + 1$. 
				Following the root split,  submatrices $H^{[0]}$ and $H^{[1]}$ are themselves HCMs of depth $\cl$. Let $(L_H)^{[0]} \in \mathbb{R}^{m_0 \times m_0}$ and $(L_H)^{[1]} \in \mathbb{R}^{m_1 \times m_1}$ denote the corresponding upper-left and lower-right diagonal blocks of the Laplacian matrix $L_H$ in \eqref{equ_block_L}. Then we have
				\begin{align}
					(L_H)^{[0]} &= L_{H^{[0]}} + m_1 b I_{m_0}, \label{L0}\\
					(L_H)^{[1]} &= L_{H^{[1]}} + m_0 b I_{m_1} \enskip.
				\end{align}
				
				From the inductive hypothesis, the largest eigenvalue of $L_{H^{[0]}}$ is $m_0 b^{0}$. 
				Let $\{v_1,\ldots,v_{m_0}\}$ be an orthonormal eigenbasis of $L_{H^{[0]}}$, where $v_i$ is associated with the eigenvalue $\lambda_i(L_{H^{[0]}})$ for $i=1,\ldots,m_0$, and choose $v_{m_0}=\mathbf1_{m_0} / \sqrt{m_0}$. Consequently,	$\mathbf1_{m_0}^{\top}v_i=0, i=1,\ldots,m_0-1$. 
				For each $i=1,\ldots,m_0-1$, define $\tilde v_i = (v_i^\top, \mathbf0_{m_1}^\top)^\top \in \mathbb{R}^m$. Using \eqref{equ_block_L} and \eqref{L0}, $\tilde v_i$ becomes an eigenvector of $L_H$ with eigenvalue 
				\begin{equation*} \label{lambda}
					\lambda_i(L_H) = \lambda_i(L_{H^{[0]}}) + m_1 b_ \leq m_0 b^{0} + m_1 b < m b \enskip .
				\end{equation*}
				The last inequality holds because $b^{0} < b$ and $m_0 + m_1 = m$. 
				Hence, we know that there are at least $m_0 - 1$ eigenvalues of $L_H$  are smaller than $m b$. 
				Apply the same argument to $L_{H^{[1]}}$ and we obtain another $m_1 - 1$ eigenvalues of $L_H$, all strictly smaller than $m b$. Together with the eigenvalue $0$ associated with $\mathbf1_m$
				and the eigenvalue $mb$ associated with $\tilde u$, these account for all $m$ eigenvalues of $L_H$.
				Therefore, 
				\[
				\lambda_1(L_H)=mb,
				\]
				and hence $u=\tilde u$.
				
				%	From the inductive hypothesis, the largest eigenvalue of $L_{H^{[0]}}$ is $m_0 b^{0}$. Let $v \in \mathbb{R}^{m_0}$ denote the corresponding dominant eigenvector of $L_{H^{[0]}}$. As a non-trivial Laplacian eigenvector, $v$ satisfies $\mathbf{1}_{m_0}^\top v = 0$.  We construct an augmented vector $\tilde{v} \in \mathbb{R}^m$ by zero-padding the remaining $m_1$ coordinates, yielding $\tilde v = ( v^\top, \boldsymbol 0_{m_1}^\top )^\top \in \mathbb{R}^m$. From \eqref{equ_block_L} and \eqref{L0}, $\tilde v$ becomes an eigenvector of $L_H$ with eigenvalue $m_0 b^{0} + m_1 b$.
				%	Similarly, we can construct $m_0$ eigenvalues of $L_H$ as $\lambda_i(L_{H^{[0]}}) + m_1 b$ and they are bounded as follows:
				%	\begin{equation*} \label{lambda}
					%		\lambda_i(L_H) = \lambda_i(L_{H^{[0]}}) + m_1 b_ \leq m_0 b^{0} + %m_1 b < m b \enskip .
					%	\end{equation*}
				%	The last inequality holds because $b^{0} < b$ and $m_0 + m_1 = m$. 
				%	Hence, we know that there are $m_0$ eigenvalues of $L_H$  are smaller than $m b$. 
				%	Apply the same argument to $L_{H^{[1]}}$ and we see that there are $m_1$ eigenvalues of $L_H$  are smaller than $m b$. Since $0$ is an eigenvalue contained in both cases of $L_{H^{[0]}}$ and $L_{H^{[1]}}$, there are $m_0 + m_1 - 1 = m - 1$ eigenvalues of $L_H$ smaller than $m b$. Therefore, we conclude that $mb$ is the largest eigenvalue of $L_H$ and $u = \tilde u$.
				
				Claim (1) has been proved.
				
				\textbf{(ii) Proof of claim (2).}
				
				Building upon the preceding analysis,  the second largest eigenvalue $\lambda_2(L_H)$ is strictly determined by the maximum of the leading eigenvalues induced by $L_{H^{[0]}}$ and $L_{H^{[1]}}$. That is,
				\begin{equation*}
					\lambda_2(L_H) = \max \left\{ m_0 b^0 + m_1 b, \, m_1 b^1 + m_0 b \right\} \enskip.
				\end{equation*}
				
				Claim (2) has been proved.
			\end{proof}
			
			\begin{lemma}\label{lemma eigen range}
				Suppose that the normalized population distance matrix $D \in \mathbb{R}^{m \times m}$ satisfies the HDM structure defined %satisfying the monotonicity property described 
				in Definition \ref{def:HDM}, and that Assumptions \ref{A1}--\ref{A3} hold. Then, we have
				\begin{equation*}
					m \alpha_1 \leq \lambda_1(L_D) \leq m \beta_1.
				\end{equation*}
			\end{lemma}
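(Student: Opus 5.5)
We prove both inequalities with the Rayleigh--Ritz characterization $\lambda_1(L_D)=\max_{\|v\|=1} v^\top L_D v$, together with the standard quadratic-form identity for a Laplacian built from a symmetric nonnegative matrix:
\begin{equation*}
v^\top L_D v \;=\; \sum_{i<j} D_{ij}\,(v_i-v_j)^2 .
\end{equation*}
This identity follows directly from $L_D=\mathrm{diag}(D\mathbf{1})-D$ and $D_{ii}=0$. The two-sided bound then reduces to controlling the weights $D_{ij}$ with Assumptions \ref{A1}--\ref{A2}. The test vector for the lower bound comes from the proof of Lemma~\ref{lemma HCM}; the upper bound can alternatively be cross-checked against it via Lemma~\ref{lemma spe}.

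\emph{Lower bound.} Take the unit vector from the proof of Lemma~\ref{lemma HCM},
\begin{equation*}
\tilde u=\big(\sqrt{m_1/(m m_0)}\,\mathbf{1}_{m_0}^\top,\,-\sqrt{m_0/(m m_1)}\,\mathbf{1}_{m_1}^\top\big)^\top .
\end{equation*}
Its entries are constant within each root branch. Hence every within-branch pair has $(\tilde u_i-\tilde u_j)^2=0$, and only the $m_0m_1$ cross pairs (entries of $R$) contribute to the quadratic form.

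For each cross pair,
\begin{equation*}
(\tilde u_i-\tilde u_j)^2=\frac{(m_1+m_0)^2}{m\,m_0m_1}=\frac{m}{m_0m_1}.
\end{equation*}
Assumption \ref{A1} gives $R_{ij}\ge\alpha_1$ on every cross pair. Therefore
\begin{equation*}
\lambda_1(L_D)\ge \tilde u^\top L_D\tilde u \ge m_0m_1\cdot\alpha_1\cdot\frac{m}{m_0m_1}=m\alpha_1 .
\end{equation*}

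\emph{Upper bound.} First observe that every off-diagonal entry of $D$ is at most $\beta_1$. Cross entries satisfy $R_{ij}\le\beta_1$ by Assumption \ref{A1}. Within-branch entries lie in $[0,\beta_0]$ by Assumption \ref{A2}, and the gap condition gives $\beta_0<\alpha_1\le\beta_1$.

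Consequently, for any unit vector $v$,
\begin{equation*}
v^\top L_D v\le \beta_1\sum_{i<j}(v_i-v_j)^2=\beta_1\big(m\|v\|^2-(\mathbf{1}^\top v)^2\big)\le m\beta_1 .
\end{equation*}
Here the middle expression is $\beta_1\, v^\top L_{K_m} v$, where $L_{K_m}=mI-\mathbf{1}\mathbf{1}^\top$ is the Laplacian of the complete graph. Taking the maximum over unit $v$ yields $\lambda_1(L_D)\le m\beta_1$.

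Equivalently, $\beta_1 L_{K_m}-L_D$ is the Laplacian of the nonnegative weights $\beta_1-D_{ij}$, so it is positive semidefinite. Lemma~\ref{lemma spe} then gives the same conclusion.

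The only step needing care is the upper bound's claim that the within-block entries, at every depth of the hierarchy, are dominated by $\beta_1$. This is exactly where the HDM monotonicity (Definition~\ref{def:HM}(i)) and the gap condition $\beta_0<\alpha_1$ are used. Everything else is a routine computation.
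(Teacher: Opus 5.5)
Your proof is correct, and it takes a more direct route than the paper. The paper sandwiches $D$ entrywise between two hierarchical constant matrices, $H_\alpha \le D \le H_\beta$, whose between-blocks equal $\alpha_1^x$ and $\beta_1^x$ at every internal node $x$. It then uses monotonicity of $\tfrac12\sum_{i,j} D_{ij}(v_i-v_j)^2$ in the weights to get $\lambda_1(L_{H_\alpha}) \le \lambda_1(L_D) \le \lambda_1(L_{H_\beta})$. Finally it invokes Lemma~\ref{lemma HCM}, proved by induction on depth, to identify $\lambda_1(L_H)=mb$ exactly.

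Your lower bound is essentially the same computation, since $\tilde u^\top L_D\tilde u \ge \tilde u^\top L_{H_\alpha}\tilde u = m\alpha_1$. Because you evaluate the Rayleigh quotient of $L_D$ at $\tilde u$ directly, you never need that $m\alpha_1$ is the \emph{top} eigenvalue of $L_{H_\alpha}$.

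Your upper bound, by comparison with $\beta_1 L_{K_m}$, uses only $\max_{i\neq j} D_{ij}\le \beta_1$. That requires nothing beyond the root-level Assumptions~\ref{A1}--\ref{A2} and the gap condition. You thereby avoid the induction, the deeper-level parameters $\alpha_1^x,\beta_1^x$, and the strict nesting $\max\{\beta_1^{x0},\beta_1^{x1}\}<\beta_1^{x}$ that the HCM argument implicitly needs.

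What the paper's construction buys is reuse. The same $H_\beta$, together with claim (2) of Lemma~\ref{lemma HCM}, is needed in Lemma~\ref{lemma HDM} to bound $\lambda_2(L_D)$ and obtain the spectral gap $m\Delta$. There a complete-graph comparison would only give the useless bound $\lambda_2(L_D)\le m\beta_1$. For the present lemma on its own, your argument is shorter and relies on weaker hypotheses.
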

			
			\begin{proof}
				%		Let $H_\beta$ and $H_\alpha$ be two HCMs sharing the identical block-partitioning structure as $D$ at the first layer, explicitly defined as:
				%		\begin{align*}
					%			H_\beta & = 
					%			\begin{pNiceArray}{c|c}[cell-space-limits = 3pt]
						%				\beta_0 (\mathbf{1}_{m_0} \mathbf{1}_{m_0}^\top - I_{m_0}) & \beta_1 \mathbf{1}_{m_0} \mathbf{1}_{m_1}^\top \\
						%				\hline
						%				\beta_1 \mathbf{1}_{m_1} \mathbf{1}_{m_0}^\top & \beta_0 (\mathbf{1}_{m_1} \mathbf{1}_{m_1}^\top - I_{m_1})
						%			\end{pNiceArray} \enskip, \\		
					%			H_\alpha & = 
					%			\begin{pNiceArray}{c|c}[cell-space-limits = 3pt]
						%				\mathcal{H}_1 & \alpha_1 \mathbf{1}_{m_0} \mathbf{1}_{m_1}^\top \\
						%				\hline
						%				\alpha_1 \mathbf{1}_{m_1} \mathbf{1}_{m_0}^\top & \mathcal{H}_2
						%			\end{pNiceArray} \enskip,
					%		\end{align*}	
				%		where in the diagonal blocks $\mathcal{H}_1$ and $\mathcal{H}_2$, every non-zero entry of the corresponding blocks in $D$ is uniformly replaced by the scalar $\alpha_0$, while all zero entries remain unchanged.
				Building upon the HCM structure formalized in Definition \ref{def:HCM} and illustrated in \eqref{H_layer2}, we construct two HCMs, $H_\beta$ and $H_\alpha$, both modified from $D$. Specifically, $H_\beta$ is constructed by specifying its between-block scalars as $b = \beta_1$ and $b^{[x]} = \beta_1^{[x]}$. The matrix $H_\alpha$ is obtained in the exact same manner by setting $b = \alpha_1$ and $b^{[x]} = \alpha_1^{[x]}$.
				
				%Suppose $v$ is a unit vector. 
				By the property of the Laplacian matrix $L_D$ \citep{von2007tutorial}, we have
				\begin{equation}\label{compute_eigen}
					\lambda_1(L_D) = \max_{\substack{v \in \mathbb{R}^m \\ \|v\|^2 = 1}} v^T L_D v = \max_{\substack{v \in \mathbb{R}^m \\ \|v\|^2 = 1}} \frac{1}{2} \sum_{i, j} D_{ij} (v_i - v_j)^2.
				\end{equation}
				According to the definitions of $H_\alpha$ and $H_\beta$, $H_{\alpha, ij} \leq D_{ij} \leq H_{\beta, ij}$. The value of \eqref{compute_eigen} will not decrease when $D_{ij}$ increases and will not increase when $D_{ij}$ decreases. We obtain
				\begin{equation*} \label{H_beta}
					\lambda_1(L_{H_\alpha}) = \max_{\substack{v \in \mathbb{R}^m \\ \|v\|^2 = 1}} \frac{1}{2} \sum_{i, j} H_{\alpha, ij} (v_i - v_j)^2 \leq \lambda_1(L_D) \leq \max_{\substack{v \in \mathbb{R}^m \\ \|v\|^2 = 1}} \frac{1}{2} \sum_{i, j} H_{\beta, ij} (v_i - v_j)^2 = \lambda_1(L_{H_\beta}).
				\end{equation*}
				Applying Lemma \ref{lemma HCM}, we have $\lambda_1(L_{H_\alpha}) = m \alpha_1, \lambda_1(L_{H_\beta}) = m \beta_1$. Thus,  we have
				\begin{equation*}
					m \alpha_1 \leq \lambda_1(L_D) \leq m \beta_1.
				\end{equation*}
				The proof is completed.
			\end{proof}
			
			\begin{lemma}[Spectrum of HDMs] \label{lemma HDM}
				Let $D \in \mathbb{R}^{m \times m}$ be the normalized population distance matrix defined in \eqref{p-dis}.  Suppose that $D$ is an HDM defined in Definition \ref{def:HDM} and that it satisfies Assumptions \ref{A1}--\ref{A3}.
				Recall the separation $\Delta$ defined as $$\Delta = \alpha_1 - \frac{1}{1 + \tilde \eta} (\beta_0 + \tilde \eta \beta_1) \enskip.$$By Assumptions \ref{asum 4} and \ref{asum 6}, we have $\Delta > 0$ and $\lambda_1(L_D) > \deg_i(D)$ for all $i$. 
				%		Define the separation
				%		\begin{equation}
					%			\Delta = \alpha_1 - \frac{1}{1 + \tilde \eta} (\beta_0 + \tilde \eta \beta_1) \enskip ,
					%		\end{equation}
				%		Assume $\Delta > 0$ and $\lambda_1(L_D) > \deg_i(D)$ for all $i$.  
				Set
				\begin{align*}
					c_1 = \frac{(\alpha_1 - \beta_0)(1 + \tilde\eta)}{(1 + \tilde\eta) \beta_1 - \alpha_1},  \
					c_2 = \frac{\beta_1 (1 + \tilde\eta)}{(1 + \tilde\eta) \alpha_1 - (\beta_0 + \tilde\eta \beta_1)} \enskip.
				\end{align*} 
				Then, 
				\begin{enumerate}[label=(\arabic*)]
					\item The largest and the second largest eigenvalues of $L_D$ satisfy
					\begin{equation*}
						\lambda_1(L_D) - \lambda_2(L_D) \geq  m \Delta.
					\end{equation*}
					\item Let $u$ be the  unit-norm dominant eigenvector of $L_D$, chosen from an orthonormal eigenbasis of $L_D$. Then every entry of $u$ satisfies 
					\begin{equation} \label{v_i}
						\frac{c_1}{\sqrt{m} c_2} \leq |u_i| \leq \frac{c_2}{2 \sqrt m}.
					\end{equation}
				\end{enumerate}
			\end{lemma}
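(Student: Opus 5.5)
For part (1) I would bound $\lambda_1(L_D)$ from below and $\lambda_2(L_D)$ from above, then subtract. The lower bound is Lemma \ref{lemma eigen range}: $\lambda_1(L_D)\ge m\alpha_1$.

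For $\lambda_2$ I would use Courant--Fischer with the test space orthogonal to the dominant eigenvector. A cleaner route is to compare with a matrix whose spectrum is computable. Write $L_D$ in the block form \eqref{equ_block_L}, with general off-diagonal block $-R$ and diagonal blocks $L_{D^{[0]}}+\mathrm{diag}(R\mathbf 1)$ and $L_{D^{[1]}}+\mathrm{diag}(R^\top\mathbf 1)$. Then:
\begin{itemize}
\item Each diagonal-block Laplacian satisfies $L_{D^{[k]}}\preccurlyeq m_k\beta_0 I$, by Lemma \ref{lemma eigen range} applied to the sub-HDM (or directly, since the entries are at most $\beta_0$).
\item The degree contributions from $R$ are at most $m_1\beta_1$ and $m_0\beta_1$ respectively.
\item $\lambda_2(L_D)$ is at most the largest eigenvalue of $L_D$ restricted to vectors orthogonal to both $\mathbf 1$ and the dominant eigenvector. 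I would bound this using the two-dimensional space spanned by the block indicators.
\end{itemize}

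Concretely, I would aim for
\[
\lambda_2(L_D)\le \max\{m_0\beta_0+m_1\beta_1,\ m_1\beta_0+m_0\beta_1\},
\]
mirroring claim (2) of Lemma \ref{lemma HCM}. Taking $m_0/m=\tilde\eta/(1+\tilde\eta)$ for the larger block, this equals $m(\beta_0+\tilde\eta\beta_1)/(1+\tilde\eta)$. Subtracting from $m\alpha_1$ gives $m\Delta$.

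The justification is the step I expect to be hardest, because without constant blocks the eigenvectors no longer decouple. My first attempt is an interlacing argument. Let $P$ be the orthogonal projector onto the complement of $\mathrm{span}\{\mathbf 1,\tilde u\}$, where $\tilde u$ is the HCM vector from Lemma \ref{lemma HCM}. Then $\lambda_2(L_D)\le\lambda_1(PL_DP)$, since the rank-one compression only loses one eigenvalue after $\mathbf 1$ is removed. On that complement, $v^\top L_Dv=\tfrac12\sum D_{ij}(v_i-v_j)^2$. Splitting into within- and cross-block pairs, and using that $v$ has equal block sums, I would bound the cross term by $\beta_1(m_1\|v_0\|^2+m_0\|v_1\|^2)$. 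The within term is bounded by $\beta_0(m_0\|v_0\|^2+m_1\|v_1\|^2)$, which gives the claimed maximum.

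For part (2) I would use the eigen-equation entrywise: $(\lambda_1-\deg_i(D))u_i=-\sum_j D_{ij}u_j$. Assumption \ref{asum 6} makes $\lambda_1-\deg_i>0$.

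For the lower bound on $|u_i|$:
\begin{itemize}
\item By Theorem \ref{theorem L_D}, $u$ has sign pattern constant on each block. Hence $-\sum_jD_{ij}u_j$ is dominated by the opposite-block mass, which is at least $\alpha_1\sum_{\text{opp}}|u_j|$, minus the same-block contribution of at most $\beta_0\sum_{\text{same}}|u_j|$.
\item Orthogonality to $\mathbf 1$ relates the two block masses.
\item The denominator $\lambda_1-\deg_i$ is at most $m\beta_1-m\alpha_1/(1+\tilde\eta)$-type quantities, by Lemma \ref{lemma eigen range}.
\end{itemize}

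For the upper bound, $|u_i|\le \beta_1\|u\|_1/(\lambda_1-\deg_i)$, with $\lambda_1-\deg_i\ge m\Delta\cdot(\text{const})$ obtained from part (1) and $\deg_i\le\lambda_2$-type bounds. Using $\|u\|_1\le\sqrt m$ produces constants of the form $c_2$.

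Chaining the lower bound with the upper bound (the opposite-block mass is at least $\sqrt m\cdot\min|u_j|$-type) yields the $c_1/(\sqrt m c_2)$ bound. Matching the exact constants $c_1,c_2$ is bookkeeping. The main risk is the $\lambda_2$ bound in part (1), since it must hold for a non-constant $R$.
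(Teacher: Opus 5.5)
Your proof is correct. Part (2) is essentially the paper's argument: the entrywise eigen-equation, the sign pattern from Theorem~\ref{theorem L_D}, equal block masses $J$ from $u\perp\mathbf 1$, and denominators controlled by Lemma~\ref{lemma eigen range} and row-sum bounds. Part (1), however, takes a genuinely different route.

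The paper builds the hierarchical constant matrix $H_\beta$, with cross-block constants $\beta_1^x$ at every level. It observes $L_{H_\beta}\succcurlyeq L_D$ from entrywise domination of the quadratic form, and transfers this to $\lambda_2(L_D)\le\lambda_2(L_{H_\beta})$ via Lemma~\ref{lemma spe}. It then reads off $\lambda_2(L_{H_\beta})=\max\{m_0\beta_1^0+m_1\beta_1,\ m_1\beta_1^1+m_0\beta_1\}$ from the recursive spectral computation of Lemma~\ref{lemma HCM}, before relaxing $\beta_1^0,\beta_1^1$ to $\beta_0$.

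You instead apply Courant--Fischer with the single test vector $\tilde u$, giving $\lambda_2(L_D)\le\max\{v^\top L_Dv:\|v\|=1,\ v\perp\tilde u\}$. Because $L_D\mathbf 1=0$ and $\tilde u\perp\mathbf 1$, this maximum may be taken over $v\perp\mathrm{span}\{\mathbf 1,\tilde u\}$, i.e.\ over $v$ whose two block sums both vanish. Say \emph{zero} rather than \emph{equal} here: it is the vanishing of each block sum that kills the term $-2\beta_1(\mathbf 1^\top v_0)(\mathbf 1^\top v_1)$ in the cross part. The within/cross split then gives $(m_0\beta_0+m_1\beta_1)\|v_0\|^2+(m_1\beta_0+m_0\beta_1)\|v_1\|^2$ directly.

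The two routes buy different things:
\begin{itemize}
\item Yours is self-contained and uses only the root-level entrywise bounds. It needs neither Lemma~\ref{lemma spe} nor any structure below the root.
\item The paper's reuses the $H_\alpha/H_\beta$ comparison already set up for Lemma~\ref{lemma eigen range} and yields exact HCM spectra, so the level-specific constants $\beta_1^0,\beta_1^1$ come for free. You can recover those too by bounding the within-block terms with $\lambda_1(L_{D^{[k]}})\le m_k\beta_1^k$, as in your first bullet.
\end{itemize}

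In part (2), three steps should be done as in the paper; otherwise the first two leave you with constants off by a factor of $2$.

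First, the bound $|D_{i\cdot}u|\le\beta_1\|u\|_1$ with $\|u\|_1\le\sqrt m$ only gives $|u_i|\le c_2/\sqrt m$. Use the sign pattern again. $D_{i\cdot}u$ is a same-block sum lying in $[0,\beta_0J]$ minus an opposite-block sum lying in $[\alpha_1J,\beta_1J]$, so $|D_{i\cdot}u|\le\beta_1J$. Combined with the denominator bound below, this gives $|u_i|\le c_2J/m$ with $J=\|u\|_1/2\le\sqrt m/2$, hence $|u_i|\le c_2/(2\sqrt m)$.

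Second, the denominator bound $\lambda_1(L_D)-\deg_i(D)\ge m\alpha_1-\frac{m}{1+\tilde\eta}(\beta_0+\tilde\eta\beta_1)=m\Delta$ should come from Lemma~\ref{lemma eigen range} plus the direct row-sum bound on $\deg_i(D)$. It is not a consequence of part (1), since $\deg_i(D)\le\lambda_2(L_D)$ need not hold for a general Laplacian.

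Third, the lower bound cannot be closed by bounding the block mass below through $\min_j|u_j|$; that is circular. Instead combine unit norm with the entrywise upper bound: $1=\|u\|^2\le m\,(c_2J/m)^2$, so $J\ge\sqrt m/c_2$. Hence $|u_i|\ge c_1J/m\ge c_1/(c_2\sqrt m)$.
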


			%			Before we proceed to the proof, we state a result in Spectral Graph Theory that we will use:
			%			\begin{lemma}[Lemma 6.3.1 in \citet{Spielman2009}, Corollary 4.3.12 in \citet{horn2013matrix}] \label{lemma spe}
				%			If $G$ and $H$ are graphs whose Laplacian matrices satisfy 
				%			\[
				%			L_G \succcurlyeq c\,L_H,
				%			\]
				%			for some $c > 0$. That is, $L_G - c\,L_H$ is positive semidefinite, then for each $i$,
				%			\[
				%			\lambda_i(G) \;\ge\; c\,\lambda_i(H).
				%			\]		
				%			\end{lemma}

			\begin{proof}
				\textbf{(i) Proof of claim (1).}
				
				Recalling a property of the graph Laplacian \citep{von2007tutorial}, for any vector $v \in \mathbb{R}^m$, it holds that 
				\begin{equation*}
					v^T L_D v =  \frac{1}{2} \sum_{i, j} D_{ij} (v_i - v_j)^2.
				\end{equation*}
				Generate $H_\beta$ as in the proof of Lemma \ref{lemma eigen range}, we have $L_{H_\beta} \succcurlyeq L_D$. Applying Lemma \ref{lemma spe},
				\begin{equation*}
					\lambda_2(L_D)  \leq \lambda_2(L_{H_\beta}) \enskip.			
				\end{equation*}
				
				Applying claim (2) of Lemma \ref{lemma HCM} to  matrix $H_\beta$, we have the parameters  $b = \beta_1, b^0 = \beta_1^0, b^1 = \beta_1^1$. Furthermore, by incorporating the cluster size ratio $\eta = m_0 / m_1$ and the conditions $b^0 < \beta_0, b^1 < \beta_0$, we obtain the bound of $\lambda_2(L_D)$ as
				\begin{equation*}
					\lambda_2(L_D) \leq \lambda_2(L_{H_\beta}) \leq \max \left\{ \frac{m}{1 + \eta} (\beta_0 + \eta \beta_1), \frac{m}{1 + \eta} (\eta \beta_0 + \beta_1) \right\} = \frac{1}{1 + \tilde \eta} (\beta_0 + \tilde \eta \beta_1) \enskip.
				\end{equation*}
				
				Lemma \ref{lemma eigen range} show that $\lambda_1(L_D) \geq m \alpha_1$. Therefore, we get
				\begin{equation*}
					\lambda_1(L_D) - \lambda_2(L_D) \geq m \alpha_1 - \frac{m}{1 + \tilde \eta} (\beta_0 + \tilde \eta \beta_1)  = m \Delta\enskip.
				\end{equation*}
				Claim (1) has been proved.
				
				\textbf{(ii) Proof of claim (2).}
				
				To establish the bounds on entries of $u$, we consider a single coordinate of $u$. Since  $u$ is the dominant eigenvector of $L_D$, the associated eigenvalue equation directly yields
				\begin{equation*}
					u_i = \frac{D_{i \cdot} u}{\deg_i(D) - \lambda_1(L_D)}.
				\end{equation*}
				Given Assumption \ref{asum 6} that $\lambda_1(L_D) > \deg_i(D)$, we get the absolute value
				\begin{equation}\label{largest u}
					|u_i| = \frac{|D_{i \cdot} u|}{\lambda_1(L_D) - \deg_i(D)}.
				\end{equation}
				Applying Theorem \ref{theorem L_D}, the signs of $u$ perfectly align with the ground-truth cluster structure, meaning $u_i$ is strictly positive  for the first $m_0$ nodes and strictly negative for the remaining $m_1$ nodes. From the fact that $\mathbf{1}$ is an eigenvector of $L_D$, we get that  $u^T \mathbf{1} = 0$  and $\sum_{i: u_i > 0} u_i = \sum_{i: u_i < 0} |u_i| \triangleq J$. Then, we have
				\begin{equation*}
					J^2 = (\sum_{i:u_i > 0} u_i)^2 \leq m_0 \sum_{i:u_i > 0}u_i^2 = m_0 \big(1 - \sum_{i:u_i < 0} u_i^2\big) \leq m_0 \Big(1 - \frac{1}{m_1} \big(\sum_{i:u_i < 0} |u_i|\big)^2\Big) = m_0 - \frac{m_0}{m_1} J^2 \enskip.
				\end{equation*}
				The two inequality hold true by the Cauchy-Schwarz inequality. Hence,
				\begin{equation} \label{J 1}
					J \leq \sqrt{\frac{m_0 m_1}{m}} = \sqrt{\frac{m_0(m - m_0)}{m}} \leq \frac{\sqrt m}{2}.
				\end{equation}
				The numerator of \eqref{largest u} is bounded as:
				\begin{equation}\label{numerator}
					J(\alpha_1 - \beta_0) \leq |D_{i \cdot} u| \leq J \beta_1 \enskip.
				\end{equation}
				Next, we will derive the bound on the denominator of \eqref{largest u}. Obviously,
				\begin{align} \label{di}
					\frac{m}{1 + \eta} \min\{ \alpha_1, \eta \alpha_1\} & \leq \deg_i(D) \leq \frac{m}{1 + \eta} \max\{\beta_0 + \eta \beta_1, \eta\beta_0 +  \beta_1\} \nonumber\\
					\frac{m}{1 + \tilde \eta} \alpha_1 & \leq \deg_i(D)  \leq \frac{m}{1 + \tilde \eta} (\beta_0 + \tilde \eta \beta_1) \enskip.
				\end{align}
				Combining Lemma \ref{lemma eigen range} and \eqref{di}, the denominator can be bounded as
				\begin{equation}\label{denominator}
					m \alpha_1  - \frac{m}{1 + \tilde \eta} (\beta_0 + \tilde \eta \beta_1) < \lambda_1(L_D) - \deg_i(D) < m \beta_1 - \frac{m}{1 + \tilde \eta} \alpha_1 \enskip.
				\end{equation}
				From \eqref{numerator} and \eqref{denominator}, we obtain
				\begin{align} \label{ui}
					\frac{J c_1}{m} \leq |u_i| \leq \frac{J c_2}{m} \enskip.
				\end{align}
				Since $u$ is a unit vector, we can bound $J$ as
				\begin{align}
					m \left(\frac{J c_2}{m}\right)^2 & \geq 1 \nonumber \\
					J & \geq \frac{\sqrt m}{c_2} \enskip. \label{J 2}
				\end{align}
				Applying \eqref{J 1}, \eqref{ui},  and \eqref{J 2}, we get that 
				\begin{equation*}
					\frac{c_1}{ \sqrt{m} c_2} \leq |u_i| \leq \frac{c_2}{2 \sqrt m} \enskip.
				\end{equation*}
				
				Claim (2) has been proved.
			\end{proof}
			
			%			\textbf{Graphon estimation error.}
			%			Under the assumption that an NBS-type estimator is used for each link probability matrix, the estimation error satisfies the following bound \citep{zhao2019change}.
			%			\begin{equation}\label{NBS}
				%			\|\hat P^{(i)} - P^{(i)} \|_F \leq C_0 \sqrt{n \log n} \quad \text{for all} \ i = 1, 2, \ldots, m,
				%			\end{equation}
			%			with probability at least $1 - n^{- \varepsilon}$ for any $\varepsilon > 0$, where $C_0$ is a positive global constant depending on $\varepsilon$ and another global constant $B_0 > 0$ but not on $n$ or $m$.
			
			%	\begin{lemma}[Davis-Kahan for the dominant eigenvector] \label{lemma DK}
				%		Suppose that $\{\hat P^{(i)}\}_{i = 1}^m$ are the estimated link probability matrices of $\{P^{(i)}\}_{i = 1}^m$ applying NBS method from \citet{zhang2017estimating}. 
				%		Let $D$ be the true normalized population distance matrix and $\hat D$ an estimated distance matrix built from $\{\hat P^{(i)}\}_{i = 1}^m$. 
				%		Let $u$ and $\hat u$ be dominant eigenvectors of $L_D$ and $L_{\hat D}$, respectively, both with unit norm.  Then, under the same assumptions as Lemma \ref{lemma HDM}, for any constant $\varepsilon > 0$,  with probability at least $(1 - n^{- \varepsilon})^m$, we have
				%		\begin{equation*}
					%			\|\hat u - u\| \leq \frac{C_1 \sqrt{\mathstrut n \log n}}{\Delta n}  \enskip.
					%		\end{equation*}	
				%		for some constant $C_1 > 0$.
				%	\end{lemma} 
			\begin{lemma}[Davis-Kahan for the dominant eigenvector] \label{lemma DK}
				Suppose that $\{\hat P^{(i)}\}_{i = 1}^m$ are the estimated link probability matrices of $\{P^{(i)}\}_{i = 1}^m$ satisfying Assumption \ref{asum 7}. 
				Let $D$ be the true normalized population distance matrix and $\hat D$ an estimated distance matrix built from $\{\hat P^{(i)}\}_{i = 1}^m$. 
				Let $u$ and $\hat u$ be dominant eigenvectors of $L_D$ and $L_{\hat D}$, respectively, both with unit norm.  Then, under the same assumptions as Lemma \ref{lemma HDM}, for any constant $\varepsilon > 0$,  with probability at least $(1 - n^{- \varepsilon})^m$, we have
				%	\begin{equation*}
					%		\|\hat u - u\| \leq \frac{C_1 \sqrt{\mathstrut n \log n}}{\Delta n}  \enskip.
					%	\end{equation*}	
				%	for some constant $C_1 > 0$.
				\begin{equation*}
					\|\hat u - u\| \leq \frac{2^{7/2} \zeta}{\Delta}  \enskip,
				\end{equation*}	
				where $\|\cdot\|$ denotes the Euclidean norm.
			\end{lemma}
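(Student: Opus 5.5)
The plan is a standard Davis--Kahan argument in three steps. First, bound the perturbation $\|L_{\hat D}-L_D\|_{\mathrm{op}}$ on the event of Assumption \ref{asum 7}. Second, compare this bound with the eigengap $m\Delta$ from claim (1) of Lemma \ref{lemma HDM}. Third, convert the resulting $\sin\theta$ bound into a bound on $\|\hat u-u\|$ after fixing the sign of $\hat u$.

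\textbf{Step 1: high-probability event.} Work on the event $\mathcal E=\{\max_i \frac1n\|\hat P^{(i)}-P^{(i)}\|_F\le\zeta\}$. If each $\hat P^{(i)}$ is estimated separately from its own network, the per-network events are independent, each with probability at least $1-n^{-\varepsilon}$. This gives $P(\mathcal E)\ge(1-n^{-\varepsilon})^m$, which is the probability stated in the lemma.

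\textbf{Step 2: entrywise and operator-norm bounds.} On $\mathcal E$, the reverse triangle inequality for the Frobenius norm gives, for every pair $r,s$,
\[
|\hat D_{rs}-D_{rs}|\le \tfrac1n\bigl(\|\hat P^{(r)}-P^{(r)}\|_F+\|\hat P^{(s)}-P^{(s)}\|_F\bigr)\le 2\zeta .
\]
Write $E=\hat D-D$; its entries are bounded by $2\zeta$ and it has zero diagonal. Then
\[
L_{\hat D}-L_D=\mathrm{diag}(E\mathbf 1)-E .
\]
The diagonal part has operator norm at most $\max_i|\deg_i(E)|\le 2m\zeta$. The symmetric part satisfies $\|E\|_{\mathrm{op}}\le\|E\|_F\le 2m\zeta$, or alternatively $\|E\|_{\mathrm{op}}\le\max_i\sum_j|E_{ij}|\le 2m\zeta$. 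Hence $\|L_{\hat D}-L_D\|_{\mathrm{op}}\le 4m\zeta$.

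\textbf{Step 3: Davis--Kahan and sign alignment.} Claim (1) of Lemma \ref{lemma HDM} gives the gap $\lambda_1(L_D)-\lambda_2(L_D)\ge m\Delta$. I would use the Yu--Wang--Samworth form of Davis--Kahan, which only needs the population eigengap:
\[
\sin\theta(\hat u,u)\le \frac{2\|L_{\hat D}-L_D\|_{\mathrm{op}}}{\lambda_1(L_D)-\lambda_2(L_D)}\le\frac{8\zeta}{\Delta}.
\]
Choose the sign of $\hat u$ so that $\hat u^\top u\ge0$. Then $\|\hat u-u\|\le\sqrt2\sin\theta\le 2^{7/2}\zeta/\Delta$, and the constants match the stated bound exactly.

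The main obstacle is this last step. The classical Davis--Kahan theorem needs a gap between $\lambda_1(L_D)$ and $\lambda_2(L_{\hat D})$, not $\lambda_2(L_D)$. The Yu--Wang--Samworth variant avoids this by relying on the population gap alone. Alternatively, Weyl's inequality (Lemma \ref{weyl}) gives $\lambda_2(L_{\hat D})\le\lambda_2(L_D)+4m\zeta$, and since $\Delta=\omega(\zeta)$ by Assumption \ref{asum 4}, the empirical gap is still of order $m\Delta$ for large $n$. I would use whichever route yields the exact constant $2^{7/2}$, most likely Yu--Wang--Samworth. The sign ambiguity of eigenvectors is handled by the orientation convention $\hat u^\top u\ge0$, under which the inequality is to be read.
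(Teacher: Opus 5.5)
Your proposal is correct and follows essentially the same route as the paper. Both arguments use the entrywise bound $2\zeta$, the estimate $\|L_{\hat D}-L_D\|_{\mathrm{op}}\le 4m\zeta$, the population gap $m\Delta$ from Lemma~\ref{lemma HDM}, and the Yu--Wang--Samworth variant of Davis--Kahan; the paper cites their Theorem 2 directly, whose constant $2^{3/2}$ is your $2\cdot\sqrt2$ with the sign alignment $\hat u^\top u\ge 0$ made explicit. One small point: the independence argument in your Step 1 is unnecessary, since Assumption~\ref{asum 7} already controls the maximum over $i$ and gives $P(\mathcal E)\ge 1-n^{-\varepsilon}\ge(1-n^{-\varepsilon})^m$ directly.
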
 
			
			\begin{proof}
				The result follows from a slight variant of Davis-Kahan theorem that appears in \citet{yu2015useful}. Applying Theorem 2 of \citet{yu2015useful} with $r = s = 1$, we have 
				\begin{equation*}
					\|\hat u - u\| \leq \frac{2^{3/2} \|L_D - L_{\hat D}\|_{\mathrm{op}}}{\lambda_1(L_D) - \lambda_2(L_D)} \enskip.
				\end{equation*}
				
				Next, we derive the bound of $\| L_D - L_{\hat D} \|_{\mathrm{op}}$. By applying the triangle inequality, we have
				\begin{align*}
					\| L_D - L_{\hat D} \|_{\mathrm{op}}  &= \| \mathrm{diag}(D\mathbf 1)  - D - (  \mathrm{diag}(\hat D\mathbf 1) - \hat D) \|_{\mathrm{op}} \\
					&\leq  \| \mathrm{diag}((D - \hat D)\mathbf 1) \|_{\mathrm{op}} + \| D - \hat D \|_{\mathrm{op}} \\
					&\leq \max_i \left| \sum_{j = 1}^m \left(D_{ij} - \hat D_{ij}\right) \right| + \| D - \hat D \|_F \enskip.
				\end{align*}
				The last inequality follows from the fact that the operator of  a diagonal matrix is the largest absolute  element on its diagonal,  and the norm relaxation $\| \cdot \|_{\mathrm{op}} \leq \| \cdot \|_F$. 
				To control the entry-wise estimation error, we observe that
				\begin{align*} 
					| D_{ij} - \hat D_{ij}| & = \frac{1}{n} \left|\|P^{(i)} - P^{(j)}\|_F  - \|\hat P^{(i)} - \hat P^{(j)}\|_F \right| \nonumber  \\
					& \leq \frac{1}{n} \| (P^{(i)} - P^{(j)}) - (\hat P^{(i)} - \hat P^{(j)})\|_F \\
					& \leq \frac{1}{n}  \|P^{(i)} - \hat P^{(i)}\|_F + \frac{1}{n} \| P^{(j)} - \hat P^{(j)}\|_F \enskip .\nonumber
				\end{align*}
				%		Using \eqref{NBS}, we obtain that for all $i$ and $j$,
				According to Assumption \ref{asum 7}, we obtain that for all $i$ and $j$,
				\begin{equation} \label{D-D_hat-element}
					| D_{ij} - \hat D_{ij}| \leq 2 \zeta \enskip ,%2 C_0 \frac{ \sqrt{\mathstrut n \log n}}{n} \enskip ,
				\end{equation}
				with probability at least $(1 - n^{- \varepsilon})^m$.
				Therefore, 				
				\begin{equation} \label{D-D_hat}
					\| L_D - L_{\hat D} \|_{\mathrm{op}}  \leq 4 m \zeta \enskip, %4 C_0 m \frac{ \sqrt{\mathstrut n \log n}}{n} \enskip ,
				\end{equation}
				with probability at least $(1 - n^{- \varepsilon})^m$.
				
				According to conclusion (1) of Lemma \ref{lemma HDM}, it follows that with probability at least $(1 - n^{- \varepsilon})^m$,
				\begin{equation*}
					\|\hat u - u\| \leq \frac{2^{3/2} \cdot 4 m \zeta}{ m \Delta} = \frac{2^{7/2} \zeta}{\Delta}   \enskip. %\frac{2^{3/2} \cdot 4 C_0 m \sqrt{n \log n}}{ m \Delta n} = \frac{C_1 \sqrt{\mathstrut n \log n}}{\Delta n}   \enskip,
				\end{equation*}	
				%		with $C_1 = 2^{7/2} C_0$.
			\end{proof}
			
			In the empirical setting, Lemma \ref{lemma DK} provides an upper bound on the eigenvector deviation $\|\hat u - u\|$. To demonstrate that the spectral clustering can make no mistakes with high probability, we need to verify that $\hat u$ is uniformly close to $u$ in every coordinate ensuring their sign patterns coincide.
			Without loss of generality, we focus on an arbitrary cluster with binary label $x$. To avoid notational clutter, we suppress the index $x$, letting $m$ and $D$ denote its size and the corresponding population distance matrix. Following Lemma \ref{lemma DK}, let $u$ and $\hat{u}$ denote the dominant eigenvectors of the true Laplacian $L_D$ and its empirical counterpart $L_{\hat{D}}$, respectively, both with unit norm. Repeated application for all of the clusters will guarantee the correctness of our algorithm across all clusters.

			\begin{proof}[of Theorem~\ref{theorem sign}]	
				We establish the theorem by sequentially proving \eqref{sign} and the convergence result.
				
				\textbf{(i) Proof of result \eqref{sign}.}
				%	\subsection{Proof of result \eqref{sign}}
				
				Let $e = \hat u - u$. The key point is to show that with high probability  the element-wise perturbation $e_i$ is uniformly low for all $i = 1, \ldots, m$.	Denote the largest eigenvalues of $L_D$ and $L_{\hat D}$ as $\lambda_1$ and $\hat \lambda_1$ respectively. From the definition of eigenvector we have
				\begin{equation*}
					(L_{\hat D})_{i \cdot} \hat u = \hat \lambda_1 \hat u_i = \hat \lambda_1 (e_i + u_i) \enskip.
				\end{equation*}
				Thus,
				\begin{align*}
					\hat \lambda_1 e_i & = (L_{\hat D})_{i \cdot} \hat u - \hat \lambda_1 u_i \\
					\quad & = \big((L_D)_{i \cdot} + (L_{\hat D - D})_{i \cdot}\big) (e + u) - \hat \lambda_1 u_i\nonumber\\
					\quad & = \deg_i(D) e_i - D_{i \cdot} e + \lambda_1 u_i + \deg_i(\hat D - D) \big(e_i + u_i\big) - (\hat D - D)_{i \cdot} (e + u) - \hat \lambda_1 u_i \enskip. \nonumber
				\end{align*}
				Therefore,
				\begin{equation*}
					e_i = \frac{\overbrace{\big(\lambda_1 - \hat \lambda_1 + \deg_i(\hat D - D)\big) u_i}^{\mathcal{T}_1}  - \overbrace{ D_{i \cdot} e}^{\mathcal{T}_2} - \overbrace{ (\hat D - D)_{i \cdot} (e + u)}^{\mathcal{T}_3}}{\underbrace{\hat \lambda_1 - \deg_i(\hat D)}_{\mathcal{T}_4}} \enskip.
				\end{equation*}
				Call the numerator terms $\mathcal{T}_1, \mathcal{T}_2$ and $\mathcal{T}_3$, and the denominator term $\mathcal{T}_4$. Our goal is to bound $|e_i|$ uniformly in $i$ by controlling $\mathcal{T}_1, \mathcal{T}_2, \mathcal{T}_3$ and $\mathcal{T}_4$.
				
				\textbf{Bound on $\mathcal{T}_1$:}
				
				First, applying Lemma \ref{weyl}, 
				\begin{equation*}
					|\lambda_1 - \hat{\lambda}_1| \leq \| L_D - L_{\hat D} \|_{\mathrm{op}} \enskip.
				\end{equation*}
				According to \eqref{D-D_hat}, we obtain 
				\begin{equation}\label{ineq1}
					|\lambda_1 - \hat{\lambda}_1| \leq 4 m \zeta \enskip, %4 C_0 \frac{m  \sqrt{\mathstrut n \log n}}{n} \enskip,
				\end{equation}
				with probability at least $(1 - n^{- \varepsilon})^m$. 		
				
				Second, applying \eqref{D-D_hat-element}, $|\deg_i(\hat D - D)|$ can be bounded as
				\begin{align}\label{ineq2}
					|\deg_i(\hat D - D)|  \leq \max_{i = 1, 2, \ldots, m}|\sum_{j = 1}^m (D_{ij} - \hat{D}_{ij})| 
					\quad  \leq \max_{i = 1, 2, \ldots, m}\sum_{j = 1}^m |D_{ij} - \hat{D}_{ij}| 
					\quad  \leq 2 m \zeta \enskip,  %2 C_0 \frac{m  \sqrt{\mathstrut n \log n}}{n} \enskip, \nonumber
				\end{align}
				with probability at least $(1 - n^{- \varepsilon})^m$.
				
				From \eqref{ineq1} and \eqref{ineq2}, with probability at least $(1 - n^{- \varepsilon})^m$, we have
				\begin{equation} \label{T1 part 1}
					|\lambda_1 - \hat \lambda_1 + \deg_i(\hat D - D)| \leq 6 m \zeta \enskip. %6 C_0 \frac{m  \sqrt{\mathstrut n \log n}}{n} \enskip.
				\end{equation} 
				
				Last, combining with \eqref{T1 part 1} and the bound of $|u_i|$ in Lemma \ref{lemma HDM}, we have
				\begin{equation*}
					|\mathcal{T}_1| \leq \frac{c_2}{2 \sqrt{m}} 6 m \zeta \triangleq C_{\mathcal{T}_1} \sqrt{m} \zeta \enskip, %\leq \frac{c_2}{2 \sqrt{m}} 6 C_0 \frac{m  \sqrt{\mathstrut n \log n}}{n} \triangleq C_{\mathcal{T}_1} \frac{ \sqrt{m \mathstrut n \log n}}{n} \enskip,
				\end{equation*}
				where $ C_{\mathcal{T}_1} = 3 c_2$,
				%		$ C_{\mathcal{T}_1} = 3 C_0 c_2$, 
				holding with probability at least $(1 - n^{- \varepsilon})^m$.
				
				\textbf{Bound on $\mathcal{T}_2$:}
				
				By the upper bound $D_{ij} \le \beta_1$ and Assumption \ref{asum 5}, it follows that
				\begin{equation*}
					\|D_{i \cdot}\| = \big(\sum_j D_{ij}^2\big)^{1/2} \leq \sqrt{m} \beta_1  = \sqrt{m} O(\Delta) \enskip.
				\end{equation*}
				Applying the error bound in Lemma \ref{lemma DK}, we obtain, with probability at least $(1 - n^{- \varepsilon})^m$, that
				\begin{equation*}
					|\mathcal{T}_2| \leq \|D_{i \cdot}\| \|e\| = \|D_{i \cdot}\| \|\hat u - u\|  \leq  \frac{2^{7/2} \sqrt{m} \zeta}{\Delta} O(\Delta) \enskip. %\leq \sqrt{m} O(\Delta) \cdot \frac{C_1 \sqrt{\mathstrut n \log n}}{\Delta n} = \frac{C_1 \sqrt{m \mathstrut n \log n}}{\Delta n} O(\Delta) \enskip.
				\end{equation*}
				For sufficient large $n$, there exists some constant $C_1 > 0$ such that with probability at least $(1 - n^{-\varepsilon})^{m}$,
				\begin{equation*}
					|\mathcal{T}_2| \leq 2^{7/2} C_1 \sqrt{m} \zeta
					\triangleq C_{\mathcal{T}_2} \sqrt{m} \zeta \enskip . % C_1 C_2  \frac{ \sqrt{m \mathstrut n \log n}}{n}
					%			\triangleq C_{\mathcal{T}_2} \frac{ \sqrt{m \mathstrut n \log n}}{n} \enskip .
				\end{equation*}
				with $C_{\mathcal{T}_2} = 2^{7/2}  C_1$.
				%		with $C_{\mathcal{T}_2} = C_1 C_2$.
				
				\textbf{Bound on $\mathcal{T}_3$:}
				
				Using Lemma \ref{lemma HDM} and Lemma \ref{lemma DK}, we bound $\|\hat u\|$ as follows
				%		\begin{equation*}
					%			\|\hat u\| \leq \|e\| + \|u\| \leq \frac{ C_1 \sqrt{\mathstrut n \log n}}{\Delta n} + \frac{c_2}{2 \sqrt{m}} \sqrt{m} = \frac{ C_1 \sqrt{\mathstrut n \log n}}{\Delta n} + \frac{c_2}{2} \enskip,
					%		\end{equation*}
				\begin{equation*}
					\|\hat u\| \leq \|e\| + \|u\| \leq \frac{ 2^{7/2} \zeta}{\Delta} + \frac{c_2}{2 \sqrt{m}} \sqrt{m} = \frac{ 2^{7/2} \zeta}{\Delta} + \frac{c_2}{2} \enskip,
				\end{equation*}
				with probability at least $(1 - n^{- \varepsilon})^m$.
				Since $\Delta = \omega(\zeta)$, when $n$ is large enough, there exists constant $C_2 > 0$ such that 
				\[\|\hat u\| \leq C_2.\]
				Applying \eqref{D-D_hat-element}, we obtain that with probability at least $(1 - n^{-\varepsilon})^{m}$,
				%		\begin{equation*}
					%			\|(\hat D - D)_{i \cdot}\| = \Big(\sum_{j = 1}^m (\hat D_{ij} - D_{ij})^2\Big)^{1/2} \leq \Big(\big(2 C_0 \frac{ \sqrt{\mathstrut n \log n}}{n}\big)^2 m \Big)^{1 / 2} = 2 C_0 \frac{ \sqrt{m \mathstrut n \log n}}{n} \enskip.
					%		\end{equation*}
				\begin{equation*}
					\|(\hat D - D)_{i \cdot}\| = \Big(\sum_{j = 1}^m (\hat D_{ij} - D_{ij})^2\Big)^{1/2} \leq \Big(\big(2 \zeta)^2 m \Big)^{1 / 2} = 2 \sqrt{m} \zeta \enskip.
				\end{equation*}
				Therefore, with probability at least $(1 - n^{-\varepsilon})^{m}$,
				%		\begin{equation*}
					%			|\mathcal{T}_3| \leq \|(\hat D - D)_{i \cdot}\| \|\hat u\| \leq 2 C_0 C_3  \frac{ \sqrt{m \mathstrut n \log n}}{n}  \triangleq C_{\mathcal{T}_3} \frac{ \sqrt{m \mathstrut n \log n}}{n}  \enskip,
					%		\end{equation*}
				\begin{equation*}
					|\mathcal{T}_3| \leq \|(\hat D - D)_{i \cdot}\| \|\hat u\| \leq 2 C_2 \sqrt{m} \zeta  \triangleq C_{\mathcal{T}_3} \sqrt{m} \zeta  \enskip,
				\end{equation*}
				with $C_{\mathcal{T}_3} = 2 C_2$.
				%		with $C_{\mathcal{T}_3} = 2 C_0 C_3$.
				
				\textbf{Bound on $\mathcal{T}_4$:} 
				
				First, by Lemma \ref{lemma eigen range} and Assumptions \ref{A1}--\ref{A3},  
				\begin{equation*}
					\lambda_1 - \deg_i(D)  \geq m \alpha_1 - \max\{m_0 \beta_0 + m_1 \beta_1, m_0 \beta_1 + m_1 \beta_0\}
					= m \alpha_1 - \frac{m}{1 + \tilde{\eta}} (\beta_0 + \tilde{\eta} \beta_1) 
					= m \Delta \enskip.
				\end{equation*}
				Utilizing the triangle inequality, we have that with probability at least $(1 - n^{-\varepsilon})^{m}$,
				\begin{align*}
					|\hat \lambda_1 - \deg_i(\hat D)| & = |\lambda_1 - \deg_i(D) + \big(\hat \lambda_1 - \lambda_1 + \deg_i(D - \hat D)\big)| \nonumber\\
					\quad & \geq (\lambda_1 - \deg_i(D)) - |\hat \lambda_1 - \lambda_1 + \deg_i(D - \hat D)| \nonumber\\
					\quad & \geq  m \Delta - 6 m \zeta \enskip .
					%\quad & = \omega(m \zeta) \enskip . 
					%			\quad & \geq  m \Delta - 6 C_0 \frac{m  \sqrt{\mathstrut n \log n}}{n} \nonumber\\
					%			\quad & = \omega( \frac{m  \sqrt{\mathstrut n \log n}}{n} ) \enskip . 
				\end{align*}
				The last inequality comes from Assumption \ref{asum 6} and inequality \eqref{T1 part 1}. % and the last equality comes from the Assumption \ref{asum 4}.
				
				Combining all the bound results, we have that with probability at least $(1 - n^{-\varepsilon})^{m}$,
				\begin{equation*}
					|e_i| = \left|\frac{\mathcal{T}_1 - \mathcal{T}_2 - \mathcal{T}_3}{\mathcal{T}_4}\right| \leq  \frac{(C_{\mathcal{T}_3} + C_{\mathcal{T}_3} + C_{\mathcal{T}_3}) \sqrt{m} \zeta}{m \Delta - 6 m \zeta} 
					\triangleq  \frac{1}{\sqrt{m} } \cdot \frac{C}{ \Delta/\zeta - 6}  \enskip,
				\end{equation*}
				where $C = C_{\mathcal{T}_3} + C_{\mathcal{T}_3} + C_{\mathcal{T}_3}$. To examine the asymptotic behavior, consider any fixed $\varepsilon > 0$. Under the asymptotic regime where $n \to \infty$ and Assumption \ref{asum 4}, $\Delta = \omega(\zeta)$, implying that 
				\begin{equation*}
					|e_i| \leq  \frac{1}{\sqrt{m}} \cdot O\left( \frac{1}{\Delta/\zeta} \right) = o(\frac{1}{\sqrt{m}}) \enskip.
				\end{equation*}
				
				On the other hand, by the lower bound in \eqref{v_i}, $|u_i| \geq {c_1} / (c_2 \sqrt{m})$. Therefore, for sufficiently large $n$, we have
				\begin{equation*}
					|e_i| < \frac{c_1}{ c_2 \sqrt{m}} \leq  |u_i|.
				\end{equation*}
				This implies that $\text{sign}(\hat u) = \text{sign}(u)$.
				The bound conclusion \eqref{sign} holds true with probability at least $(1 - n^{-\varepsilon})^{m}$ for any $\varepsilon > 0$.  
				
				This completes the proof of \eqref{sign}.	
				
				\textbf{(ii) Proof of convergence result.}
				Following Bernoulli's inequality, which states that for any integer $\mathcal{m}\ge1$ and real number $y\ge-1$,	$(1+y)^{\mathcal{m}}\ge 1+\mathcal{m}y$, we obtain
				\[
				(1-n^{-\varepsilon})^m \ge 1-mn^{-\varepsilon}.
				\]
				Since $mn^{-\varepsilon}=o(1)$ as $n\to\infty$, it follows that
				\[
				(1-n^{-\varepsilon})^m \to 1.
				\]
				
				%\begin{equation*}
				%	\lim_{\substack{n \to \infty \\ m \to \infty}} (1 - n^{-\varepsilon})^{m} = \lim_{\substack{n \to \infty \\ m \to \infty}} (1 +  \frac{1}{- n^{\varepsilon}} )^{- n^{\varepsilon} \cdot (- n^{-\varepsilon})m} 
				%	= \lim_{\substack{n \to \infty \\ m \to \infty}} e^{- m n^{-\varepsilon}} 
				%	= \lim_{\substack{n \to \infty \\ m \to \infty}} e^{- \frac{m}{n^{\varepsilon}}}  
				%	= 1 \enskip .
				%	\end{equation*}
			%	The last equality follow the assumption $\frac{m}{n^\varepsilon} = o(1)$. 
			Therefore, Method $\text{NHC}_{\text{L}}$ asymptotically recovers the sign pattern, i.e., $\text{sign}(\hat u) = \text{sign}(u)$ with probability tending to 1.
		\end{proof}
		
		\section{Proof of Theorem \ref{theorem consistency}}
		Let $E_\Gamma$ be the event that $\Gamma$ is exactly recovered by method $\text{NHC}_{\text{L}}\text{-TST}$. To establish the lower bound for $P(E_\Gamma)$, we decompose the global failure into local decision errors at each cluster with binary label $x$.
		
		To establish the global recovery guarantee, we bound the total probability of failure by applying a union bound over all nodes in the true hierarchical structure. Crucially, to circumvent the intractability of error propagation, we evaluate the local error probability at each node conditional on the event that all its ancestral splits were perfectly recovered.
		
		First, consider the spectral clustering step at any internal cluster $x$ where $\delta_x = 1$. The node represents a sub-group of size $m_x$, naturally yielding an $m_x \times m_x$ sub-distance matrix.	
		Let  $E_s(x)$ denote the event that one spectral bipartition (i.e., one hierarchical split) for cluster $x$ fails to recover the exact structure. 
		From Theorem \ref{theorem sign}, the success probability is bounded by
		\begin{equation*}
			P(\bar E_s(x)) \geq (1 - n^{- \varepsilon})^{m_x} \geq 1 - m_x n^{- \varepsilon}.
		\end{equation*}
		%The final inequality follows directly from Bernoulli's inequality, which states that for any integer $\mathcal{m} \geq 1$ and real number $y \geq - 1$, we have $(1 + y)^\mathcal{m} \geq 1 + \mathcal{m} y$.	
		Thus, the probability of a clustering failure at node $x$ is upper-bounded by
		\begin{equation} \label{P_E}
			P( E_s(x)) \leq m_x n^{- \varepsilon}.
		\end{equation}	
		
		Second, $\text{NHC}_{\text{L}}\text{-TST}$ relies on the two-sample testing approach adopted  to serve as a statistical stopping rule. 
		Let $E_{I}(x)$ denote the event of Type I error for a terminal leaf node $x$ with $\delta_x = 0$ and $E_{II}(x)$ denote the event of Type II error for an internal node $x$ with $\delta_x = 1$. 
		A global exact recovery event $E_\Gamma$ requires that no bipartition errors and no testing errors occur across the entire tree. Considering all events up to depth $\mathcal{L}$, and applying the union bound along with \eqref{P_E}, the probability of global success is:	
		\begin{align*}
			P(E_\Gamma) & = P \biggl( \bigcap_{l = 0}^{\mathcal{L} - 1} \bigcap_{\substack{x \in \{0,1\}^l \\ \delta_x = 1}} \bar E_s(x)  \bar E_{II}(x)   \bigcap_{l = 0}^{\mathcal{L}} \bigcap_{\substack{x \in \{0,1\}^l \\ \delta_x = 0}}  \bar E_{I}(x) \biggr) \nonumber \\
			& =  1 -P \biggl( \bigcup_{l = 0}^{\mathcal{L} - 1} \bigcup_{\substack{x \in \{0,1\}^l \\ \delta_x = 1}} \big( E_s(x) \bigcup E_{II}(x) \big) \bigcup_{l = 0}^{\mathcal{L}} \bigcup_{\substack{x \in \{0,1\}^l \\ \delta_x = 0}}  E_{I}(x) \biggr)\nonumber \\
			& \geq 1 - \sum_{l = 0}^{\mathcal{L} - 1} \sum_{\substack{x \in \{0,1\}^l \\ \delta_x = 1}} P\big( E_s(x)\big) 
			- \sum_{l = 0}^{\mathcal{L} - 1} \sum_{\substack{x \in \{0,1\}^l \\ \delta_x = 1}} P\big( E_{II}(x)\big)
			- \sum_{l = 0}^{\mathcal{L}} \sum_{\substack{x \in \{0,1\}^l \\ \delta_x = 0}} P\big( E_{I}(x)\big) \nonumber \\
			& \geq 1 - \sum_{l = 0}^{\mathcal{L} - 1} \sum_{\substack{x \in \{0,1\}^l \\ \delta_x = 1}} (m_x n^{- \varepsilon} +  \beta_x)
			- 2^{\mathcal{L}}  \alpha_n   \enskip.
		\end{align*}	
		
		Finally, we evaluate the asymptotic behavior as $n \to \infty$. By the assumption $m_x = o(n^\varepsilon)$, the clustering error $m_x n^{-\varepsilon} \to 0$. Provided that the employed two-sample test is asymptotically powerful, the Type II error rate satisfies $\beta_x = o(1)$. Meanwhile, $\alpha_n = o(1)$ guarantees that the last term goes to 0 as $n \to \infty$.
		
		Therefore, $\lim\limits_{\substack{n \to \infty \\ m_x = o(n^\varepsilon)}} P(E_T) = 1$, which completes the proof.

		%Finally, we evaluate the asymptotic behavior as $n \to \infty$. By the assumption $m_x = o(n^\varepsilon)$, the clustering error $m_x n^{-\varepsilon} \to 0$. 
		%Furthermore, Theorem 3 of \citet{chen2024spectral} guarantees an asymptotic power of 1 for the two-sample test, provided certain theoretical conditions are met. Given that \citet{chen2024spectral} separately prove the MNBS procedure satisfies these exact requirements, our use of MNBS (Algorithm \ref{NHC-TST}, lines \ref{alg:est_2}) inherently secures an asymptotic power of 1.	
		%Consequently, the test is asymptotically powerful, ensuring the Type II error rate $\beta_x = o(1)$. Meanwhile, $\alpha = o(2^{-\cL})$ guarantees that the last term goes to 0 as $n \to \infty$.
		
		%Therefore, $\lim\limits_{\substack{n \to \infty \\ m_x = o(n^\varepsilon)}} P(E_T) = 1$, which completes the proof.
		%\end{proof}
		
	\end{document}